\tikzset{
    every node/.style={font=\sffamily\small},
    main node/.style={thick,circle ,draw},
    visible node/.style={thick,rectangle ,draw}
}
\newtheorem{theorem}{Theorem}
\newtheorem{corollary}{Corollary}
\newtheorem{definition}{Definition}
\newtheorem{lemma}{Lemma}
\begin{document}

\title{Generating Approximate Ground States of Molecules Using Quantum Machine Learning}

\author{Jack Ceroni}
\affiliation{Xanadu, Toronto, ON, M5G 2C8, Canada}
\affiliation{Department of Mathematics, University of Toronto, Toronto, ON, M5S 3E1, Canada}
\email{jack.ceroni@mail.utoronto.ca}
\author{Torin F. Stetina}
\affiliation{Simons Institute for the Theory of Computing, Berkeley, CA, 94704, USA}
\affiliation{Berkeley Quantum Information and Computation Center, University of California, Berkeley, CA 94720, USA}
\email{torins@berkeley.edu}
\author{M\'aria Kieferov\'a}
\affiliation{  Centre for Quantum Computation and Communication Technology, Centre for Quantum Software and Information, University of Technology Sydney, NSW 2007, Australia}
\email{maria.kieferova@uts.edu.au}
\author{Carlos Ortiz Marrero}
\affiliation{AI \& Data Analytics Division, Pacific Northwest National Laboratory,  Richland, WA 99354 }
\affiliation{Department of Electrical \& Computer Engineering, North Carolina State University,  Raleigh, NC 27607}
\email{carlos.ortizmarrero@pnnl.gov}
\author{Juan Miguel Arrazola}
\affiliation{Xanadu, Toronto, ON, M5G 2C8, Canada}
\email{juanmiguel@xanadu.ai}
\author{Nathan Wiebe}
\affiliation{ Department of Computer Science, University of Toronto, ON M5S 1A1, Canada}
\affiliation{High Performance Computing Group, Pacific Northwest National Laboratory, Richland, WA 99354}
\email{nawiebe@cs.toronto.edu}

\date{\today}

\begin{abstract}
The potential energy surface (PES) of molecules with respect to their nuclear positions is a primary tool in understanding chemical reactions from first principles. However, obtaining this information is complicated by the fact that sampling a large number of ground states over a high-dimensional PES can require a vast number of state preparations. In this work, we propose using a generative quantum machine learning model to prepare quantum states at arbitrary points on the PES. The model is trained using quantum data consisting of ground-state wavefunctions associated with different classical nuclear coordinates. Our approach uses a classical neural network to convert the nuclear coordinates of a molecule into quantum parameters of a variational quantum circuit. The model is trained using a fidelity loss function to optimize the neural network parameters. We show that gradient evaluation is efficient and numerically demonstrate our method's ability to prepare wavefunctions on the PES of hydrogen chains, water, and beryllium hydride. In all cases, we find that a small number of training points are needed to achieve very high overlap with the groundstates in practice. From a theoretical perspective, we further prove limitations on these protocols by showing that if we were able to learn across an avoided crossing using a small number of samples, then we would be able to violate Grover's lower bound. Additionally, we prove lower bounds on the amount of quantum data needed to learn a locally optimal neural network function using arguments from quantum Fisher information. This work further identifies that quantum chemistry can be an important use case for quantum machine learning.
\end{abstract}

\maketitle

\section{Introduction}
\label{sec:1}

One of the most widely-studied uses of quantum computers is the simulation and characterization of physical systems. Significant effort has been dedicated to developing quantum algorithms for a variety of calculations across many areas of physics, including condensed matter~\cite{abrams1997simulation, bauer2020quantum}, molecular physics~\cite{o2021efficient, delgado2021variational}, and quantum field theory~\cite{jordan2012quantum, shaw2020quantum}. However, in recent years, quantum chemistry has emerged as a leading application, and considerable focus has been placed on using quantum devices to determine the properties of molecules and materials~\cite{cao2019quantum, mcardle2020quantum, von2021quantum, lee2021even, sawaya2020resource, o2019calculating}. Most proposals for quantum computational chemistry algorithms attempt to solve the electronic structure problem, in which the nucleus of a molecule is assumed to be fixed, and the goal is to compute the ground-state energy of the electronic Hamiltonian.  While the general problem of computing the ground state of an arbitrary Hamiltonian is \QMA-complete~\cite{schuch2009computational}, under certain conditions, such as when given access to a sufficiently high-fidelity approximation of the true ground state ~\cite{kitaev2002classical, kempe2006complexity, gharibian2022improved}, quantum computers can efficiently yield accurate ground states. The most studied proposals for solving the electronic structure problem on circuit-based quantum computers are the variational quantum eigensolver (VQE)~\cite{peruzzo2014variational, mcclean2016theory, cerezo2021variational} for the noisy, intermediate scale regime, and quantum phase estimation (QPE)~\cite{kitaev1995quantum} for fault-tolerant quantum computers. Both techniques yield an approximation of the ground state energy of a Hamiltonian, as well as the ground state itself. 

While these approaches are effective in many scenarios, their use comes with a considerable cost. 
For example, for the FeMoCo molecule, a widely-used benchmark in quantum computational chemistry, the best estimate of QPE runtime for determining its ground state energy is just under 4 days~\cite{hypercontraction} on a fault-tolerant quantum computer with millions of physical qubits. On the other hand, VQE has the possibility of allowing for quantum computational chemistry with fewer, noisier qubits, but the number of measurements needed to estimate energies is often significant, making scaling of the algorithm to large molecules a challenge~\cite{gonthier2022measurements, wecker2015progress, huggins2021efficient, arrasmith2020operator}.

The challenges for both QPE and VQE worsen even further when they are considered in the context of solving practical problems in quantum chemistry. The electronic structure problem assumes a fixed molecular configuration, and therefore a fixed molecular Hamiltonian, of which we compute the ground state. In order to determine many dynamic or structural properties of molecules, such as reaction barriers and optimal geometries, a molecule must be studied in many different configurations. In general, characterizing this behaviour requires knowledge of a family of ground states for a set of Hamiltonians parameterized by classical nuclear coordinates $H(R)$. This is an arduous task, which requires computing many different ground states with corresponding energies lying on a high-dimensional potential energy surface. Running quantum algorithms such as QPE or VQE for each configuration independently would represent a significant computational cost even for molecules with a modest number of atoms.

We propose in this paper an alternative method for computing ground states corresponding to a wide range of molecular configurations, i.e., for reconstructing potential energy surfaces of molecules. A key motivation behind this work is that while the fixed nuclei electronic structure problem is already difficult, the ultimate goal of using quantum computers to compute accurate electronic energies requires sampling over many different nuclear configurations in a proposed chemical reaction coordinate. Therefore, the generation of many ground states, and subsequently energies and other properties, is of central interest in taking advantage of quantum computers for designing new materials and technologies. Instead of computing the ground states for many discrete molecular configurations independently, our algorithm uses a limited collection of data and, employing techniques from machine learning, builds a model that prepares the ground state over some region in parameter space.

Our work connects with recent progress for the task of learning from quantum mechanical data with both classical and quantum methods, an increasingly active area of research.  Notable results include demonstration that classical machine learning techniques are provably efficient for predicting and modelling certain properties of quantum many-body systems~\cite{huang2021provably}, and that quantum learning procedures can be more efficient than classical learning procedures for determining specific properties of certain unknown quantum states and processes~\cite{chen2021exponential, huang2021quantum, huang2021provably}. Our proposed algorithm shares some similarities with the above works, with the key difference being that the output of our model is a quantum state, rather than an estimate of some observable quantity or a classical approximation of a quantum state~\cite{aaronson2018shadow, huang2020predicting}. Therefore, algorithms of the form we proposed can be used to extract arbitrary ground state observables, or to output states that can be used in other quantum computational procedures requiring access to ground states. From the perspective of Ref.~\cite{schuld2018supervised}, this model falls into the quantum-quantum or ``QQ'' category of machine learning techniques: a quantum model trained with quantum data, and complements the growing literature on using quantum data and quantum machine learning to understand quantum systems (in particular, quantum chemical systems). Existing examples of QQ machine learning include learning excited states from ground state \cite{kawai2020predicting}, compression of quantum data \cite{romero2017quantum}, and learning of parametrized Hamiltonians ~\cite{PhysRevApplied.11.044087}, which has been applied to spin and molecular Hamiltonians under the name quantum meta-learning~\cite{PRXQuantum.2.020329}.

The algorithm proposed in this work is a hybrid classical-quantum generative model, in which we train a classical neural network to yield parameters, which when fed into a low-depth variational quantum circuit, approximate the corresponding ground state of $H(R)$ for a range of values of $R$. To train our model, we assume access to quantum data: ground states of $H(R)$ for a collection of coordinates $\{R_i\}_{i = 1}^{N}$, which can be loaded into a quantum computer. Since ground state preparation is a resource intensive task, the amount of quantum data needed to learn a model is a key metric for quantifying the efficiency and the feasibility of the algorithm. Ideally, a model of this form should generalize to new values of $R$ not contained in the training data. This allows us to generate approximations of previously unseen molecular ground states, at the more modest price of executing a shallow, variational quantum circuit for some set of parameters determined by a classical neural network.

To test these capabilities, we perform extensive numerical experiments for a collection of different molecules, and find that even with few data points, there is good generalization to unseen geometries in the potential energy surface. Ultimately, the aim of our proposal is to provide a concrete first step towards the development of practical techniques based on quantum machine learning for alleviating the cost of computing ground states of a parameterized molecular Hamiltonian. Under this framework, hard-to-obtain quantum data, originating from quantum algorithms or physical experiments~\cite{mcclean2021foundations}, is the resource that we attempt to leverage.

It is important to note that the particular generative model proposed is only one member of a large family of quantum-classical machine learning architecture for preparing ground states. Advanced models may utilize more sophisticated choices of cost function, classical neural network architecture, initialization, and circuit construction than those considered in this paper. The goal of this work is to both discuss the general concept of generative quantum machine learning applied to quantum chemistry, as well as provide a concrete example of what such a model would look like. As a result, we explore both practicalities associated with the particular model, including gradient sample complexity and numerical experiments, as well as more general considerations about abstract quantum state-learning procedures.

We begin in Sec.~\hyperref[sec:2]{II} by outlining a general architecture and training strategy for a generative model which prepares ground states of a parameterized Hamiltonian. In Sec.~\hyperref[sec:3]{III}, we discuss the details of training, and provide estimates on the sample complexity required for computing gradients of the model, which is necessary for optimization. In Sec.~\hyperref[sec:4]{IV}, to better understand the limits of quantum generative models, we introduce theoretical bounds related to data complexity of general quantum state-learning algorithms, and interpret these results in the context of our generative model. We conclude in Sec.~\hyperref[sec:5]{V} by providing numerics to support the quality of our model, demonstrating that one can effectively learn out-of-distribution ground states, and thus the resulting potential energy surfaces, of the H$_2$, H$_3^{+}$, H$_4$, BeH$_2$, and H$_2$O molecules, to a high degree of accuracy.

\section{A Generative Model for Preparing Electronic Ground States}
\label{sec:2}

Quantum chemistry has, in recent years, become a major focus for the development of quantum algorithms~\cite{berry2015hamiltonian,lee2021even,babbush2018low}.  The specific problem that has dominated most discussion surrounding quantum computing in this space is the electronic structure problem, which involves finding the minimum energy configuration for a molecule.  This problem reduces to the problem of computing the groundstate energy of a quantum system.  The electronic structure problem in a fixed basis is, in general,  \QMA-Hard~\cite{o2022intractability} 
Regardless, in many cases though a state with sufficient overlap can be identified and the overlap with the state can be computed within constant error $\epsilon$ using a polynomial number of quantum gates~\cite{aspuru2005simulated,whitfield2011simulation,wecker2015progress}.

The Hamiltonian for the electrons within the Born-Oppenheimer approximation (which states that the nuclear motion is uncoupled with the electronic motion) can be written as

\begin{equation}
    H(R) = \displaystyle\sum_{pq} h_{pq}(R) a_p^{\dagger} a_q + \frac{1}{2} \displaystyle\sum_{pqrs} h_{pqrs}(R) a_p^{\dagger} a_{q}^{\dagger} a_r a_s,
    \label{eqn:ham}
\end{equation}

where $a^{\dagger}_p$ and $a_p$ are the fermionic creation and annihilation operators acting on the $p$-th orbital, and

\begin{equation}
    h_{pq}(R) = \displaystyle\int dr \ \phi_p^{*}(r) \left( -\frac{\nabla^2}{2} - \displaystyle\sum_{I} \frac{Z_I}{|r - R_I|} \right) \phi_q(r),
    \label{eqn:one_elec}
\end{equation}

\begin{equation}
    h_{pqrs}(R) = \displaystyle\int dr_1 dr_2 \ \frac{\phi_p^{*}(r_1) \phi_q^{*}(r_2) \phi_r(r_2) \phi_s(r_1)}{|r_1 - r_2|},
    \label{eqn:two_elec}
\end{equation}

\noindent
are the one and two-electron integrals in the molecular orbital basis, $\phi_p(r)$, yielded from the Hartree-Fock optimization procedure. The one and two-electron integrals, and subsequently the molecular orbitals depend implicitly on $R$, which is the general nuclear coordinate associated with the fixed nuclear configuration in 3-dimensional space. We then utilize the Jordan-Wigner transform in order to map the fermionic creation and annihilation operators to qubit operators, which can be implemented in a quantum circuit \cite{wigner1928paulische, arrazola2021differentiable}.

The best known costs for estimating the groundstate energy at a specific nucelar configuration scales as~\cite{lee2021even} $\tilde{O}(N\lambda /\epsilon)$ where $N$ is the number of spin-orbitals used in the model and $\lambda$ is a quantity that depends on the details of the Hamiltonian and the basis set chosen.  For the case where a planewave basis is chosen an explicit scaling can be achieved of $\tilde{O}(N^3/\epsilon)$~\cite{babbush2018low}; however, such an explicit scaling is difficult to derive in a Gaussian basis (like those commonly used in chemistry) but can be numerically verified to scale polynomially in $N$ for most examples considered.

The challenge behind all of this is that the cost of preparing these groundstates is difficult.  It requires performing phase estimation to learn the groundstate energy within the eigenvalue gap, which leads to a cost of $O(N^3/|E_1 - E_0|)$ for the case of planewaves.  Such costs for generating a quantum state at any point on the potential energy surface can be challenging for problems with small gaps and so a question that naturally emerges is whether there exist faster methods for preparing approximate groundstates on quantum computers.  In general, we do not expect this to be possible without prior information; however, when approaching chemistry we almost never have a uniform prior.  We are frequently met with a host of prior experiments and often have systems that are smooth enough so nearby queries yield information about the quantum state in question.  Our aim is to use quantum machine learning as a paradigm to estimate new states on the potential energy surface of a molecule.  We discuss how generative quantum machine learning to approach this challenge below.

\subsection{Generative QML Models for Chemistry}

\begin{figure*}
\centering
    \includegraphics[width=1.5\columnwidth]{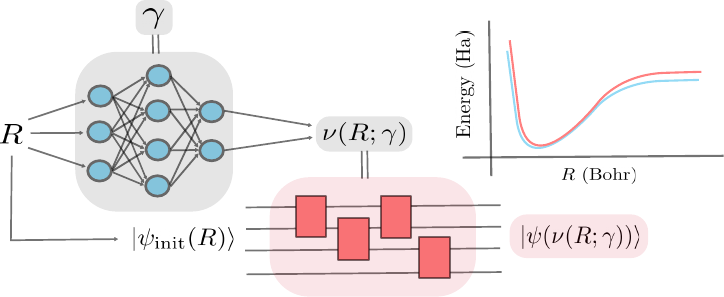}
    \caption{A diagram showing how a neural network is used to parameterize a map from coordinates $R$ to a state $|\psi(\nu(R; \gamma))\rangle$. The goal is to find the optimal parameters $\gamma = \gamma^{*}$ such that when they are used in the generative procedure, the resulting map from $R$ to $|\psi(\nu(R; \gamma^{*}))\rangle$ approximates the map from $R$ to the ground state $|\psi_0(R)\rangle$. This allows for accurate extraction of ground state observables, such as ground state energies, from our model, over $R$.}
    \label{fig:gqq}
\end{figure*}

The generative model that we propose contains both a classical and quantum component. We begin by fixing a parameterized quantum circuit $U(\theta)$. Although the structure of $U$ can vary, we assume that it can be expressed as a series of parameterized rotations~\cite{peruzzo2014variational},

\begin{equation}
    U(\theta) = \displaystyle\prod_{j = 1}^{N_P} e^{-i \theta_j H_j},
    \label{eq:ansatz}
\end{equation}
where each $H_j$ is Hermitian and chosen prior to learning. Given a parameterized Hamiltonian $H(R)$, the goal is to learn a map $R \mapsto \theta(R)$, such that $U(\theta(R))|\psi_{\text{init}}(R)\rangle$ approximates the ground state $|\psi_0(R)\rangle$ of $H(R)$ over some range of $R$, where $|\psi_{\text{init}}(R)\rangle$ is some pre-chosen initial state. For the sake of simplicity, $H(R)$ is taken to be non-degenerate over $R$ in the region considered. In order to find the mapping $R \mapsto \theta(R)$, we are given access to a collection of quantum data, $\mathcal{D}$, of the form

\begin{equation}
    \mathcal{D} = \left\{ \left( R_i, \left( \overbrace{|\psi_0(R_i)\rangle,\ \dots \ , \ |\psi_0(R_i)\rangle}^\text{$m_i$} \right) \right) \right\}_{i = 1}^{N},
\end{equation}
where each $R_i$ is unique and we have access to $m_i$ copies of a state $|\psi_0(R_i)\rangle$. We assume a polynomially large number of data, which should be contrasted with the exponentially large number of possible grid points $R_i$ that define a full potential energy surface in high dimensions. We do not explicitly consider the origin of the data set, but it can be created with several different methods, possibly by repeated application of QPE or VQE for the parameters $R_i$, or even by transduction of quantum states yielded from an experiment into a quantum device, as is envisioned in Ref.~\cite{mcclean2021foundations}. We note that the model is not restricted to returning parameterized ground states: given training data, it can be used to construct an approximation of any parameterized state. For example, one could imagine using a similar procedure to produce a model of a parameterized excited state $|\psi_j(R)\rangle$. We focus on ground state generation for the sake of concreteness, and due to the fact that computing ground states is a problem of practical importance.

Let $\nu(R; \gamma)$ denote a classical neural network, where $\gamma$ are the trainable parameters, which induces a function $R \mapsto \nu(R; \gamma)$. Using the data $\mathcal{D}$ along with $\nu$, we perform an optimization procedure which yields $\gamma^{*}$ such that the output state $U(\nu(R_i; \gamma^{*}))|\psi_{\text{init}}(R_i)\rangle$ approximates $|\psi_0(R_i)\rangle$, for each $R_i$ in the training data. Define $\theta(R) := \nu(R; \gamma^{*})$ to be the output of the model. The goal of such a training procedure is good generalization for all values of $R$. In other words,

\begin{align}
|\psi(\nu(R; \gamma^{*}))\rangle &:= U(\nu(R; \gamma^{*}))|\psi_{\text{init}}(R)\rangle \nonumber\\ & = U(\theta(R))|\psi_{\text{init}}(R)\rangle,
\end{align}
should approximate $|\psi_0(R)\rangle$ for a wider range of $R$ outside of the training set. To find the optimal parameters $\gamma^{*}$, we minimize a cost function $C(\gamma)$ over the training data. In this work, we choose $C$ to be the average infidelity of the state produced by the model with each unique training example (each state in $\mathcal{D}$ corresponds to a unique $R_i$). More specifically, the cost function is

\begin{equation}
C(\gamma) = 1 - \frac{1}{N} \displaystyle\sum_{i = 1}^{N} |\langle \psi_0(R_i) | \psi( \nu(R_i; \gamma) ) \rangle|^2,
\label{eq:cost}
\end{equation}

\noindent
where for sufficiently expressive $U$, minimization of $C$ will result in $\gamma^{*}$ such that $|\psi(\nu(R_i; \gamma^{*}))\rangle \approx |\psi_0(R_i)\rangle$ for each $R_i$.  One immediate benefit of using the infidelity cost function is that it admits relatively simple gradients of the form

\begin{multline}
    \frac{\partial C(\gamma)}{\partial \gamma_k} = - \frac{1}{N} \displaystyle\sum_{i = 1}^{N} \Bigg( 2 \text{Re} \left[ \langle \psi_0(R_i) | \psi(\nu(R_i; \gamma)) \rangle \right] \\ \displaystyle\sum_{a = 1}^{N_P} \frac{\partial \nu(R_i; \gamma)_a}{\partial \gamma_k} \Big\langle \psi_0(R_i) \Big| \frac{\partial \psi(\theta)}{\partial \theta_a} \Big\rangle \biggr\rvert_{\theta = \nu(R_i; \gamma)} \Bigg),
    \label{eq:grad}
\end{multline}
where $|\psi(\theta)\rangle := U(\theta) |\psi_{\text{init}}(R_a)\rangle$ for each term in the sum and $N_P$ is the output dimension of the classical neural network, i.e, the number of parameters of the quantum circuit (for a derivation, see Sec.~\ref{sec:3}). Overlaps between the training states, and the output/output derivatives of the parameterized circuit $U$ can be computed via sampling, or amplitude estimation and the linear combination of unitaries (LCU) method~\cite{childs2012hamiltonian, berry2015hamiltonian}, while the derivatives of $\nu$ can be computed efficiently in the case of deep neural networks via backpropagation \cite{rumelhart1986learning}. With a procedure for computing derivatives, it is then possible to minimize $C$ via gradient-based optimization methods, such as gradient descent or Adam~\cite{kingma2014adam}. The computational cost of computing gradients of the cost function in Eq.~\eqref{eq:cost} is discussed in detail in Section~\hyperref[sec:3]{III}.

We conclude this section by discussing one of the central challenges in designing quantum machine learning models: barren plateaus, and how the proposed model may be robust to this issue. The barren plateaus problem is the observation that many classes of parameterized quantum circuits suffer from exponentially vanishing gradients over large parts of the parameter space~\cite{mcclean2018barren}. Barren plateaus highlight the necessity to incorporate inductive biases into parameterized quantum circuits based on the structure of a particular problem, by choosing appropriate circuit ansatzes, cost function, and initialization, among other hyperparameters. As is discussed in Refs.~\cite{mcclean2018barren, cerezo2021cost}, $C(\gamma)$ of the general form in Eq.~\eqref{eq:grad} is conducive to barren plateaus. This issue can, in some cases, be resolved by choosing a more sophisticated cost function \cite{renyi_divergence_training, cerezo2021cost}. However, before swapping $C$ for one of these functions, it is important to note that any generative model tasked with returning a molecular ground state can have as its initial guess an approximation of the true ground state. In our work we set $|\psi_\text{init}(R)\rangle := |\psi_{\text{HF}}(R)\rangle$, where $|\psi_{\text{HF}}(R)\rangle$ is the Hartree-Fock state at geometry $R$; an approximate solution to the electronic Schrodinger equation which can be computed efficiently on a classical device.

The parameterized quantum circuit $U$ can then be thought of as applying corrections to the Hartree-Fock state. This choice makes the initialization of our model far from random. In addition, since we have access to the Hamiltonian $H(R)$ of which we are attempting to prepare the ground state, we can use this knowledge to tailor our circuit ansatz to each particular molecule considered, using the ADAPT-VQE algorithm~\cite{grimsley2019adaptive} (see Sec.~\ref{sec:5} for details). Ref.~\cite{grimsley2022adapt} provides empirical evidence that in many cases, adaptive circuits do not suffer from the issue of barren plateaus. It is therefore reasonable to hypothesize that initialization in the Hartree-Fock state and an adaptively-prepared circuit will, in many cases, constrain the model's optimization to a region that does not suffer from the barren plateau problem. This conclusion is supported in the numerics (Sec.~\ref{sec:5}), where we observe good performance when using the infidelity cost function of Eq.~\eqref{eq:cost} for training.

\section{Gradient Sampling Complexity}
\label{sec:3}

While the model described in Section~\hyperref[sec:2]{II} is a hybrid quantum-classical procedure, the part of the algorithm that is executed on a quantum device can be reduced to the calculation of gradients of the cost function $C(\gamma)$. As a result, understanding the quantum sample complexity of gradient calculations allows us to better understand the cost of running the entire algorithm. The other important aspect is the number of training steps required to find the optimal $\gamma^{*}$, but this task is much more challenging, and likely varies considerably on a case-by-case basis.

In this section, we discuss the sample complexity required to compute gradients of the average infidelity cost function introduced in Section~\hyperref[sec:2]{II}. Recall that $|\psi(\theta)\rangle := U(\theta)|\psi_{\text{init}}(R_i)\rangle$ for some $R_i$. We can express the gradient of the $i$-th overlap terms in Eq.~\eqref{eq:cost} as

\begin{align}
    &\frac{\partial}{\partial \gamma_k} | \langle \psi_0(R_i) | \psi(\nu(R_i; \gamma)) \rangle |^2 \nonumber \\
    &= \frac{\partial}{\partial \gamma_k} \langle \psi_0(R_i) | \psi(\nu(R_i; \gamma)) \rangle \langle \psi(\nu(R_i; \gamma)) |\psi_0(R_i)\rangle \nonumber\\
    &= \braket{\psi(\nu(R_i; \gamma))}{\psi_0(R_i)} \Big\langle \psi_0(R_i) \Big| \frac{\partial \psi(\nu(R_i; \gamma))}{\partial \gamma_k} \Big\rangle + \text{h.c.} \nonumber\\
    &= 2 \text{Re} \left[ \braket{\psi(\nu(R_i; \gamma))}{\psi_0(R_i)} \Big\langle \psi_0(R_i) \Big| \frac{\partial \psi(\nu(R_i; \gamma))}{\partial \gamma_k} \Big\rangle \right] \nonumber\\ 
    &2 \text{Re} \Bigg[ \braket{\psi(\nu(R_i; \gamma))}{\psi_0(R_i)} \displaystyle\sum_{a = 1}^{N_P} \frac{\partial \nu(R_i; \gamma)_a}{\partial \gamma_k} \nonumber \\ & \times \Big\langle \psi_0(R_i) \Big| \frac{\partial \psi(\theta)}{\partial \theta_a} \Big\rangle \biggr\rvert_{\theta = \nu(R_i; \gamma)} \Bigg].
    \label{eq:gr}
\end{align}

Now, note that

\begin{align}
    \ket{\frac{\partial \psi(\theta)}{\partial \theta_a}} &= \frac{\partial}{\partial \theta_a} U(\theta) |\psi_{\text{init}}(R_i)\rangle \nonumber\\ 
    &= \frac{\partial}{\partial \theta_a} \Bigg[ \displaystyle\prod_{j = 1}^{N_P} e^{-i \theta_j H_j} \Bigg] |\psi_{\text{init}}(R_i)\rangle \nonumber\\
    &= -i \Bigg[ \displaystyle\prod_{j < a} e^{-i \theta_j H_j} H_a \displaystyle\prod_{j \geq a} e^{-i \theta_j H_j} \Bigg] |\psi_{\text{init}}(R_i)\rangle \nonumber\\
    &= -i \Bigg[ \displaystyle\prod_{j < a} e^{-i \theta_j H_j} H_a \Bigg( \displaystyle\prod_{j < a} e^{-i \theta_j H_j} \Bigg)^{\dagger} \nonumber \\ & \times \displaystyle\prod_{j < a} e^{-i \theta_j H_j} \displaystyle\prod_{j \geq a} e^{-i \theta_j H_j} \Bigg] |\psi_{\text{init}}(R_i)\rangle \nonumber\\
    &= -i \Bigg[ \displaystyle\prod_{j < a} e^{-i \theta_j H_j} H_a \Bigg( \displaystyle\prod_{j < a} e^{-i \theta_j H_j} \Bigg)^{\dagger} \Bigg] |\psi(\theta)\rangle \nonumber\\ 
    &= -i \hat{H}_a(\theta) |\psi(\theta)\rangle,
    \label{eq:sg}
\end{align}

\noindent
where $\hat{H}_a(\theta) := \prod_{j < a} e^{-i \theta_j H_j} H_a \left( \prod_{j < a} e^{-i \theta_j H_j} \right)^{\dagger}$. Let $\hat{H}_a := \hat{H}_a(\nu(R_i; \gamma))$. For brevity, we will henceforth refer to $|\psi(\nu(R_i; \gamma))\rangle$ as $|\psi\rangle$, and $|\psi_0(R_i)\rangle$ as $|\psi_i\rangle$. Using Eq.~\eqref{eq:gr} and Eq.~\eqref{eq:sg}, we get

\begin{align}
    & \frac{\partial}{\partial \gamma_k} | \langle \psi_0(R_i) | \psi(\nu(R_i; \gamma)) \rangle |^2 = \frac{\partial}{\partial \gamma_k} | \langle \psi_i | \psi \rangle |^2 \nonumber \\
    &= 2 \text{Re} \left[ -i \braket{ \psi }{\psi_i} \displaystyle\sum_{a = 1}^{N_P} \frac{\partial \nu(R_i; \gamma)_a}{\partial \gamma_k} \langle \psi_i | \hat{H}_a \ket{\psi} \right] \nonumber \\ 
    &= 2 \displaystyle\sum_{a = 1}^{N_P} \frac{\partial \nu(R_i; \gamma)_a}{\partial \gamma_k} \text{Im} \left[ \bra{\psi_i} \hat{H}_a \ket{\psi} \braket{ \psi}{\psi_i} \right] \nonumber \\
    &= 2 \displaystyle\sum_{a = 1}^{N_P} \frac{\partial \nu(R_i; \gamma)_a}{\partial \gamma_k} \text{Im} \left[ \bra{\psi_i} \hat{H}_a \frac{\mathbbm{1} - (\mathbbm{1} - 2 \ket{\psi} \bra{ \psi})}{2} \ket{\psi_i} \right] \nonumber \\ 
    &= \displaystyle\sum_{a = 1}^{N_P} \frac{\partial \nu(R_i; \gamma)_a}{\partial \gamma_k} \Big[ \text{Im} \left[ \bra{\psi_i} \hat{H}_a \ket{\psi_i} \right] \nonumber 
    \\ & + \text{Im} \Big[ \bra{\psi_i} \hat{H}_a (2 \ket{ \psi} \bra{ \psi} - \mathbbm{1} ) \ket{\psi_i} \Big] \Big],
    \label{eq:final_grad}
\end{align}

\noindent
where $2 | \psi \rangle \langle \psi | - \mathbbm{1}$ is unitary and is given by
$2 | \psi\rangle \langle \psi| - \mathbbm{1} = U(\nu(R_i; \gamma)) V_{\text{init}}(R_i) (2 |0 \rangle \langle 0 | - \mathbbm{1}) V^{\dagger}_{\text{init}}(R_i) U^{\dagger}(\nu(R_i; \gamma))$, where $V_{\text{init}}(R_i) |0\rangle = |\psi_{\text{init}}(R_i)\rangle$. 

\subsection{Computing Gradients with the Hadamard Test}
\label{sec:had_test}

The first method we propose for computing gradients of the generative model relies on performing the Hadamard test for each of the individual terms in the sum of Eq.~\ref{eq:final_grad}. In particular, we arrive at the following result on the total number of states required to compute a derivative of $C(\gamma)$. 

\begin{theorem}[Hadamard Test Gradient Sample Complexity]
The number of queries to the state preparation unitary $U(\nu(R_i;\gamma))$ and the controlled $\hat{H}_a$ operation, $M$, required to compute a partial derivative of the infidelity cost function $C(\gamma)$ defined in Eq.~\eqref{eq:cost} with respect to parameter $\gamma_k$, to within error $\epsilon$ and probability at least $1 - \delta$, scales as

\begin{equation}
    M \in O\left(\frac{N}{\epsilon^2} \log \left( \frac{1}{\delta} \right) \left( \max_i\sum_{j = 1}^{N_P} \left| \frac{\partial \nu(R_i; \gamma)_j}{\partial \gamma_k} \right| \right)^2 \right),
\end{equation}
using a Hadamard test-based approach from Fig.~\ref{fig:had_test} to gradient estimation. $N$ is the number of unique data states, and $\nu$ is the classical neural network.
\end{theorem}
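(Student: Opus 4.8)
\emph{Proof proposal.} The plan is to turn the exact gradient formula of Eq.~\eqref{eq:final_grad} into an unbiased, bounded single-shot estimator built from Hadamard tests, and then apply a concentration inequality. Recall from Eq.~\eqref{eq:final_grad} that $\partial C/\partial\gamma_k = -\frac1N\sum_{i=1}^{N} g_i$, where each per-datum gradient is
$g_i = \sum_{a=1}^{N_P}\frac{\partial\nu(R_i;\gamma)_a}{\partial\gamma_k}\bigl(\mathrm{Im}[\langle\psi_i|\hat H_a|\psi_i\rangle]+\mathrm{Im}[\langle\psi_i|\hat H_a(2|\psi\rangle\langle\psi|-\mathbbm 1)|\psi_i\rangle]\bigr)$.
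The key structural point is that $2|\psi\rangle\langle\psi|-\mathbbm 1$ is the unitary reflection written out above in terms of $U(\nu(R_i;\gamma))$ and $V_{\mathrm{init}}(R_i)$, and $\hat H_a$ is a conjugation of the ansatz generator $H_a$, which we take to be a (Hermitian) unitary such as a Pauli word, so $\hat H_a$ and the product $\hat H_a(2|\psi\rangle\langle\psi|-\mathbbm 1)$ are unitaries acting on a single copy of the data state $|\psi_i\rangle$. Hence each of the two quantities inside the sum is exactly $\mathrm{Im}[\langle\psi_i|W|\psi_i\rangle]$ for a known unitary $W$, which a single run of the Hadamard test of Fig.~\ref{fig:had_test} estimates: it returns a $\{\pm1\}$-valued outcome $b$ with $\mathbb{E}[b]=\mathrm{Im}[\langle\psi_i|W|\psi_i\rangle]$, consuming one controlled-$W$ and one copy of $|\psi_i\rangle$.

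Next I would assemble a single-shot estimator of $g_i$ that avoids an $N_P$ overhead by importance sampling the index $a$. Writing $W_i:=\sum_{a=1}^{N_P}\bigl|\partial\nu(R_i;\gamma)_a/\partial\gamma_k\bigr|$, one draws $a$ with probability $\bigl|\partial\nu(R_i;\gamma)_a/\partial\gamma_k\bigr|/W_i$, picks one of the two terms uniformly, runs the corresponding Hadamard test to get $b\in\{\pm1\}$, and outputs $Z_i := 2W_i\,\mathrm{sign}\bigl(\partial\nu(R_i;\gamma)_a/\partial\gamma_k\bigr)\,b$. A direct computation of $\mathbb{E}[Z_i]$ against the definition of $g_i$ shows $\mathbb{E}[Z_i]=g_i$, and by construction $|Z_i|\le 2W_i\le 2\max_i W_i$. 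Averaging $M_i$ independent copies of $Z_i$ and applying Hoeffding's inequality, $M_i\in O\!\left(W_i^2\,\epsilon^{-2}\log(1/\delta)\right)$ runs suffice to estimate $g_i$ to additive error $\epsilon$ with failure probability at most $\delta$; each run costs $O(1)$ queries to $U(\nu(R_i;\gamma))$, $V_{\mathrm{init}}(R_i)$, and the controlled-$\hat H_a$ operation, and one data copy.

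Finally, I would form the estimate $-\frac1N\sum_i(\text{average of the }Z_i)$; by the triangle inequality its error is at most $\epsilon$ whenever every per-datum estimate has error at most $\epsilon$, and a union bound over the $N$ data points controls the total failure probability (the resulting $\log N$ being subleading for polynomially many data). Summing the sample counts and bounding $\sum_i W_i^2\le N(\max_i W_i)^2$ gives $M=\sum_i M_i\in O\!\left(\frac{N}{\epsilon^2}\log(1/\delta)\bigl(\max_i\sum_{j=1}^{N_P}|\partial\nu(R_i;\gamma)_j/\partial\gamma_k|\bigr)^2\right)$, as claimed. I expect the main obstacle to be purely in the bookkeeping around the single-shot estimator: getting the importance-sampling normalization, the factor of two from the two imaginary-part terms, the $\mathrm{Re}/\mathrm{Im}$ and $-i$ conventions inherited from Eq.~\eqref{eq:final_grad}, and the overall sign to line up so that $Z_i$ is provably unbiased \emph{and} bounded by $O(\max_i W_i)$ rather than by $O(N_P\max_i W_i)$ or $O(\sum_i W_i)$ — either of which would weaken the stated bound — together with verifying that the controlled reflection and controlled-$\hat H_a$ are realizable with $O(1)$ calls to the available unitaries.
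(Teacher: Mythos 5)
Your proposal is correct and follows essentially the same route as the paper's proof: a sign-adjusted, importance-sampled single-shot Hadamard-test estimator over the $2N_P$ terms (the paper's $S^i$ with weights $p_a^i$ is exactly your draw-$a$-then-pick-a-term-uniformly construction), Hoeffding's inequality with range $2\sum_j|\partial\nu(R_i;\gamma)_j/\partial\gamma_k|$ per data point, and a final sum using $\sum_i W_i^2\le N(\max_i W_i)^2$. The only cosmetic difference is that you combine the $N$ per-datum estimates with an explicit union bound (incurring a subleading $\log N$), whereas the paper argues via $\max_i$ directly; this does not change the stated scaling.
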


\begin{proof}
Since each Hermitian $\hat{H}_a$ can be expanded as a linear combination of unitaries $\hat{H}_a = \sum_{l} c_a^{l} V_a^{l}$, we can compute an estimate of the above sum by performing the Hadamard test to estimate the values of

\begin{equation}
\Im\Big[\bra*{\psi_i} V_a^{l} \ket{\psi_i}\Big] \ \ \ \ \text{and} \ \ \ \ \Im\Big[\bra*{\psi_i} V_a^{l} (2\ket{\psi}\bra{\psi}-\mathbbm{1}) \ket{\psi_i}\Big],
\label{eq:q}
\end{equation}
for each $V_a^{l}$. Note that there are many cases where the $\hat{H}_a$ are themselves unitary. Clearly, these will be precisely the cases where the generators $H_a$ are unitary, which occurs often in Trotterized circuit ansatze (for example, when each $H_a$ is a tensor product of Pauli operations). For simplicity, assume that each of the $\hat{H}_a$ are unitary, so we can simply compute expectation values without decomposing into a sum of multiple expectation values. See Fig.~\ref{fig:had_test} for circuit diagrams of the Hadamard test corresponding to both quantities.

\begin{figure}
\centering
\begin{subfigure}[b]{0.5\columnwidth}
         \centering
\[
\Qcircuit @C=1em @R=1em @!R {
\lstick{\ket{0}} & \gate{H} & \gate{S} & \ctrl{1} & \gate{H} & \qw & \meter\\
\lstick{\ket{\psi_i}} & {/} \qw & \qw & \gate{\hat{H}_a}  & \qw & \qw & \qw
}
\]
\caption{}
\end{subfigure}
\begin{subfigure}[b]{0.4\textwidth}
\centering
\[
\Qcircuit @C=1em @R=1em @!R {
\lstick{\ket{0}} & \gate{H} & \gate{S} & \ctrl{1} & \ctrl{1}&\gate{H} & \qw & \meter\\
\lstick{\ket{\psi_i}} & {/} \qw & \qw & \gate{\hat{H}_a} &\gate{2\ket{\psi}\!\bra{\psi}-\mathbbm{1}}  & \qw & \qw & \qw
}
\]
\caption{}
\end{subfigure}

\caption{Hadamard test circuits for computing (a) $\Im\Big[\bra*{\psi_i} \hat{H}_a  \ket{\psi_i}\Big]$ and  $\Im\Big[\bra*{\psi_i} \hat{H}_a (2\ket{\psi}\bra{\psi}-\mathbbm{1})\ket{\psi_i}\Big]$ in (b) . Note that $S=\sqrt{Z}$ represents a phase gate. Here  $\hat{H}_a(\theta) := \prod_{j < a} e^{-i \theta_j H_j} H_a \left( \prod_{j < a} e^{-i \theta_j H_j} \right)^{\dagger}$.}
\label{fig:had_test}
\end{figure}

Now that we have outlined the calculations required to compute the gradients of each term in the cost function, we turn our attention to the question of how many samples are required in order to compute accurate estimates of the gradients, with high probability.

For each $\hat{H}_a$, the Hadamard test circuits depicted in Fig.~\ref{fig:had_test} yield a random variable with expectation values equal to the quantities in Eq.~\eqref{eq:q}, when measured in the $Z$-basis. Define the random variables $X_a^{i}$ with values in $\{-1, 1\}$ for $a \in \{1, \ ..., \ 2N_P\}$ to be the random variables corresponding to each of the unique $2N_P$ Hadamard test circuits, multiplied by the corresponding sign of the neural network gradient (so we can later take positive linear combinations of these random variables). These random variables therefore have expectation values

\begin{equation}
    \mathbb{E}(X_a^{i}) = \text{sign} \left( \frac{\partial \nu(R_i; \gamma)_a}{\partial \gamma_k} \right) \text{Im} \left[ \langle \psi_i | \hat{H}_a | \psi_i \rangle \right]
\end{equation}

\noindent
for $a \in \{1, \ ..., \ N_P\}$, and

\begin{equation}
    \mathbb{E}(X_a^{i}) = \text{sign} \left( \frac{\partial \nu(R_i; \gamma)_{a - N_P}}{\partial \gamma_k} \right) \text{Im} \left[ \langle \psi_i | \hat{H}_a (2 |\psi\rangle \langle\psi| - \mathbbm{1}) |\psi_i\rangle \right]
\end{equation}

\noindent
for $a \in \{N_P + 1, \ ..., \ 2N_P\}$. Therefore, for $a \in \{1, ..., \ N_P\}$, the random variable $X_a^{i}$ is realized by sampling from the circuit depicted in Fig.~\ref{fig:had_test}(a) and for $a \in \{N_P + 1, \ ..., \ 2N_P\}$, the random variable $X_a^{i}$ comes from the circuit in Fig.\ref{fig:had_test}(b). Since we are interested in computing a sum of such random variables, we introduce new random variables

\begin{equation}
    S^{i} = \displaystyle\sum_{a = 1}^{2N_P} p_a^{i} X_a^{i},  \ \ \ \ \text{where} \ \ \ \
    p_a^{i} = \frac{ \left| \frac{\partial \nu(R_i; \gamma)_a}{\partial \gamma_k} \right| }{2 \sum_{j = 1}^{N_P} \left| \frac{\partial \nu(R_i; \gamma)_j}{\partial \gamma_k} \right| }.
\end{equation}
It then follows from Eq.~\eqref{eq:final_grad} that the gradient term we wish to estimate is proportional to $\mathbb{E}(S^{i})$,

\begin{equation}
\frac{\partial}{\partial \gamma_k} | \langle \psi_i | \psi \rangle |^2 = 2 \left( \displaystyle\sum_{j = 1}^{N_P} \left| \frac{\partial \nu(R_i; \gamma)_j}{\partial \gamma_k} \right| \right) \mathbb{E}(S^{i}).
\end{equation}

We can sample from $S^{i}$ by choosing $a$ in accordance with the probability distribution defined by the $p_a$, and then take a sample from the corresponding $X_a^{i}$. Note that $-1 \leq S^{i} \leq 1$. We can use Hoeffding's inequality to conclude that after taking $m_i$ samples from $S^{i}$, the probability that the estimated average $S^{i}_{m_i}$ will differ from $\mathbb{E}(S^{i})$ by a magnitude greater than or equal to $\epsilon$ is upper-bounded by

\begin{equation}
    \text{Pr} \big( \left| S_{m_i}^{i} - \mathbb{E}(S^{i}) \right| \geq \epsilon \big) \leq 2 \exp \left( - \frac{m_i \epsilon^2} {2} \right).
    \label{eq:hoeffding}
\end{equation}
It follows that for the estimator of the gradient, $G^{i}_{m_i} = 2 \left( \sum_{j = 1}^{N_P} \left| \frac{\partial \nu(R_i; \gamma)_j}{\partial \gamma_k} \right| \right) S_{m_i}^{i}$, we have, 

\begin{align}
    &\text{Pr} \left( \left| G_{m_i}^{i} - \frac{\partial}{\partial \gamma_k} | \langle \psi_i | \psi \rangle |^2 \right| \geq \epsilon \right) \nonumber \\ 
    &= \text{Pr} \left( 2 \left( \sum_{j = 1}^{N_P} \left| \frac{\partial \nu(R_i; \gamma)_j}{\partial \gamma_k} \right| \right) \left| S_{m_i}^{i} - \mathbb{E}(S^{i}) \right| \geq \epsilon \right) \nonumber \\ 
    &= \text{Pr} \left( |S_{m_i}^{i} - \mathbb{E}(S^{i}) | \geq \frac{\epsilon}{2 \left( \sum_{j = 1}^{N_P} \left| \frac{\partial \nu(R_i; \gamma)_j}{\partial \gamma_k} \right| \right)} \right) \nonumber \\
    &\leq 2 \exp \left(- \frac{m_i \epsilon^2}{8 \left( \sum_{j = 1}^{N_P} \left| \frac{\partial \nu(R_i; \gamma)_j}{\partial \gamma_k} \right| \right)^2} \right).
\end{align}
Therefore, we can choose $m_i$ such that

\begin{equation}
    2 \exp \left(- \frac{m_i \epsilon^2}{8 \left( \sum_{j = 1}^{N_P} \left| \frac{\partial \nu(R_i; \gamma)_j}{\partial \gamma_k} \right| \right)^2} \right) \leq \delta,
\end{equation}
which ensures that we have an $\epsilon$-good estimate of the gradient with probability greater than $1 - \delta$. Solving for a number of samples $m_i$ that are sufficient for an $\epsilon$-approximation gives

\begin{equation}
   m_i \geq \frac{8}{\epsilon^2} \log \left( \frac{2}{\delta} \right) \left( \sum_{j = 1}^{N_P} \left| \frac{\partial \nu(R_i; \gamma)_j}{\partial \gamma_k} \right| \right)^2
   \label{eq:bound}.
\end{equation}

We now have a lower bound on the number of samples required to compute the gradient with respect to $\gamma_k$ of the $i$-th term of the sum which gives $C(\gamma)$. Recall that 

\begin{equation}
    \frac{\partial C(\gamma)}{\partial \gamma_k} = - \frac{1}{N} \displaystyle\sum_{i = 1}^{N} \frac{\partial}{\partial \gamma_k} | \langle \psi_i | \psi \rangle |^2,
\end{equation}
so it follows that if we define the estimator $G_m = -\frac{1}{N} \sum_{i = 1}^{N} G^{i}_{m_i}$ of $\frac{\partial C(\gamma)}{\partial \gamma_k}$, then 

\begin{align}
    &\text{Pr} \left( \left| G_m - \frac{\partial C(\gamma)}{\partial \gamma_k} \right| \geq \epsilon \right) \nonumber \\ 
    &= \text{Pr} \left( \frac{1}{N} \left| \displaystyle\sum_{i = 1}^{N} \left( G_{m_i}^{i} - \frac{\partial}{\partial \gamma_k} | \langle\psi_i | \psi \rangle|^2 \right) \right| \geq \epsilon \right) \nonumber\\
    &\leq \text{Pr} \left( \frac{1}{N} \displaystyle\sum_{i = 1}^{N} \Big| G_{m_i}^{i} - \frac{\partial}{\partial \gamma_k} | \langle \psi_i | \psi \rangle|^2\Big| \geq \epsilon \right) \nonumber \\
    &\leq \text{Pr} \left( \max_{i} \Big| G_{m_i}^{i} - \frac{\partial}{\partial \gamma_k} |\langle \psi_i | \psi \rangle|^2 \Big| \geq \epsilon \right).
    \label{eq:new_eq}
\end{align}

Thus, if we choose each $m_i$ according to Eq.~\eqref{eq:bound}, it guarantees that $\text{Pr} \left( | G_{m_i}^{i} - \frac{\partial}{\partial \gamma_k} |\langle \psi_i | \psi \rangle|^2 | \geq \epsilon \right) \leq \delta$ for all $i$. This implies that the inequality in Eq.~\eqref{eq:new_eq} will also be bounded above by $\delta$ and hence we will be able to estimate the gradient of our model with respect to $\gamma_k$, with precision $\epsilon$, with probability greater than or equal to $1 - \delta$. Choosing $m_i$ to saturate its corresponding lower limit in Eq.~\ref{eq:bound} gives a sufficient number of samples for each term. Summing the result gives that the total number of samples needed scales as
\begin{equation}
    M= \sum_{i=1}^N m_i \in O\left(\frac{N}{\epsilon^2} \log \left( \frac{1}{\delta} \right) \left( \max_i\sum_{j = 1}^{N_P} \left| \frac{\partial \nu(R_i; \gamma)_j}{\partial \gamma_k} \right| \right)^2 \right),
\end{equation}
where we have used the fact that $\sum_{i = 1}^{N} \sum_{j = 1}^{N_P} \left| \frac{\partial \nu(R_i; \gamma)_j}{\partial \gamma_k} \right| \leq N \max_i\sum_{j = 1}^{N_P} \left| \frac{\partial \nu(R_i; \gamma)_j}{\partial \gamma_k} \right|$.

\end{proof}

\subsection{A Coherent Approach to Gradient Estimation}

The approach outlined in the previous section uses short-depth circuits for gradient estimation, but it is also possible to take another approach to gradient estimation where we assume it is possible to execute circuits of greater depth, which yields better complexity with respect to $\epsilon$. In particular, we can utilize the linear combination of unitaries (LCU) method in order to estimate the expectation value in Eq.~\eqref{eq:final_grad} without breaking it up into individual terms, which reduces the scaling in circuit executions from $O(1/\epsilon^2)$ to $O(1/\epsilon)$. 

\begin{theorem}[Coherent Gradient Sample Complexity]
The number of queries to the ansatz state preparation operator $U(R;\gamma)$ and the controlled $\hat{H}_a$ operation, $M$, needed to use the circuit in Fig.~\ref{fig:coh_test}  to compute a partial derivative of the infidelity cost function $C(\gamma)$ defined in Eq.~\ref{eq:cost} with respect to parameter $\gamma_k$, to within error $\epsilon$ and probability $1 - \delta$, scales as

\begin{equation}
    M \in  O \left( \frac{N}{\epsilon} \log \left( \frac{1}{\delta} \right) \max_i \displaystyle\sum_{j = 1}^{N_P} \left| \frac{\partial \nu(R_i;\gamma)_j}{\partial \gamma_k} \right| \right),
\end{equation}
when using a coherent approach to gradient estimation. $N$ is the number of unique data states and $\nu$ is the classical neural network function.
\end{theorem}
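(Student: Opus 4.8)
The plan is to reuse the skeleton of the proof of the previous theorem, replacing the $O(1/\epsilon^2)$ Hoeffding estimation of each Hadamard-test circuit by a single amplitude-estimation call, with $O(1/\epsilon)$ scaling, applied to a linear-combination-of-unitaries (LCU) block-encoding. Starting from Eq.~\eqref{eq:final_grad}, write $g_i:=\frac{\partial}{\partial\gamma_k}|\braket{\psi_i}{\psi}|^2$, so that $g_i$ is the sum of $\Im[\bra{\psi_i}A_i\ket{\psi_i}]$ and $\Im[\bra{\psi_i}A_i(2\ket{\psi}\!\bra{\psi}-\mathbbm{1})\ket{\psi_i}]$ with $A_i:=\sum_{a=1}^{N_P}\frac{\partial\nu(R_i;\gamma)_a}{\partial\gamma_k}\,\hat H_a$, and $\frac{\partial C(\gamma)}{\partial\gamma_k}=-\frac1N\sum_i g_i$. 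As in Sec.~\ref{sec:had_test} I would first treat the case in which every $\hat H_a$ is unitary; the general case follows by expanding $\hat H_a=\sum_l c_a^l V_a^l$ into Paulis, which changes the LCU $1$-norm only by a constant factor absorbed into the $O(\cdot)$. In the unitary case the natural subnormalization of a block-encoding of $A_i$ is $\Lambda_i:=\sum_{j=1}^{N_P}\bigl|\frac{\partial\nu(R_i;\gamma)_j}{\partial\gamma_k}\bigr|$.

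Next I would build the LCU circuit behind Fig.~\ref{fig:coh_test}: on $\lceil\log_2 N_P\rceil$ ancillas, a $\mathrm{PREP}$ unitary producing $\Lambda_i^{-1/2}\sum_a\sqrt{|\partial_{\gamma_k}\nu(R_i;\gamma)_a|}\ket{a}$ and a $\mathrm{SELECT}$ unitary $\sum_a\ket{a}\!\bra{a}\otimes\mathrm{sign}(\partial_{\gamma_k}\nu(R_i;\gamma)_a)\hat H_a$ together realize $(\bra{0}\otimes\mathbbm{1})\,\mathrm{PREP}^\dagger\,\mathrm{SELECT}\,\mathrm{PREP}\,(\ket{0}\otimes\mathbbm{1})=A_i/\Lambda_i$, using $O(1)$ controlled-$\hat H_a$ and $U$ queries per application; the reflection $2\ket{\psi}\!\bra{\psi}-\mathbbm{1}$ is unitary and costs $O(1)$ queries to $U$ via the decomposition given after Eq.~\eqref{eq:final_grad}, so it can simply be appended. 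Embedding this block-encoding into a Hadamard test with the phase gate $S$ gives an ancilla outcome whose bias is $\tfrac12\bigl(1+\Im[\bra{\psi_i}(A_i/\Lambda_i)\ket{\psi_i}]\bigr)$, and likewise with $A_i(2\ket{\psi}\!\bra{\psi}-\mathbbm{1})$ in place of $A_i$. I would then run amplitude estimation on these biases: estimating a bias to additive precision $\epsilon/(4\Lambda_i)$ costs $O(\Lambda_i/\epsilon)$ queries with constant success probability, and the median of $O(\log(1/\delta))$ independent runs boosts the success probability to $1-\delta$. Rescaling each estimate by $\Lambda_i$ and adding the two contributions yields an estimate $G^i$ of $g_i$ accurate to $\epsilon$ with probability at least $1-\delta$ using $m_i\in O\!\left(\frac{\Lambda_i}{\epsilon}\log\frac1\delta\right)$ queries.

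Finally I would assemble $G_m=-\frac1N\sum_i G^i$ and repeat verbatim the estimate of Eq.~\eqref{eq:new_eq}: the event $\bigl|G_m-\frac{\partial C(\gamma)}{\partial\gamma_k}\bigr|\ge\epsilon$ is contained in $\{\max_i|G^i-g_i|\ge\epsilon\}$, which has probability at most $\delta$ once each per-term estimate is $\epsilon$-good with probability $1-\delta$. Summing and using $\sum_i\Lambda_i\le N\max_i\Lambda_i$ gives $M=\sum_i m_i\in O\!\left(\frac{N}{\epsilon}\log\frac1\delta\,\max_i\sum_{j=1}^{N_P}\bigl|\frac{\partial\nu(R_i;\gamma)_j}{\partial\gamma_k}\bigr|\right)$, which is the claim, the quadratic-to-linear improvement in $1/\epsilon$ coming precisely from amplitude estimation replacing the Hoeffding bound of Fig.~\ref{fig:had_test}. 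I expect the main obstacle to be the careful bookkeeping of the coherent circuit: getting the block-encoding normalization to equal exactly $\Lambda_i$ after handling the signs of the neural-network gradients, confirming that it is the imaginary (not real) part that is extracted, and verifying that each block-encoding call costs only $O(1)$ queries to $U$ and the controlled $\hat H_a$, together with the standard but slightly delicate median-amplified error analysis of amplitude estimation and its interaction with the average over the $N$ data points.
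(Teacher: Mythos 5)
Your proposal is correct and follows essentially the same route as the paper's proof: both express the per-data-point derivative as the imaginary part of an expectation of an $\ell_1$-weighted sum of unitaries, block-encode it via PREP/SELECT with subnormalization $\Lambda_i=\sum_j\bigl|\partial\nu(R_i;\gamma)_j/\partial\gamma_k\bigr|$ inside a Hadamard-test structure, apply amplitude estimation at precision $\sim\epsilon/\Lambda_i$ for an $O(\Lambda_i/\epsilon)$ query cost, boost to success probability $1-\delta$ with a median of $O(\log(1/\delta))$ repetitions, and then sum over the $N$ data points using $\sum_i\Lambda_i\le N\max_i\Lambda_i$. The only deviations are cosmetic: the paper folds the reflected and unreflected contributions into one $2N_P$-term LCU (subnormalization $2\Lambda_i$) and runs a single estimation per point rather than your two, which changes only constants.
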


\begin{proof}
Again, assume that each $\hat{H}_a$ is unitary, for simplicity. Note that from Eq.~\eqref{eq:final_grad}, we can write the gradient of the $i$-th term of $C(\gamma)$ as an expectation value of a sum of $2N_P$ unitaries,

\begin{align}
    \label{eq:n_grad}
    &\frac{\partial}{\partial \gamma_k} |\langle \psi_i | \psi \rangle |^2 = \text{Im} \Bigg[ \Big\langle \psi_i \Big| \displaystyle\sum_{a = 1}^{N_P} \frac{\partial \nu(R_i; \gamma)_a}{\partial \gamma_k} \\ &(\hat{H}_a + \hat{H}_a (2 |\psi \rangle \langle \psi | - \mathbbm{1}) \Big|\psi_i\Big\rangle \Bigg] = \text{Im} \left[ \Big\langle \psi_i \Big| \displaystyle\sum_{a = 1}^{2N_P} \beta_a V_a \Big| \psi_i \Big\rangle \right] \nonumber
\end{align}

where

\begin{equation}
    \beta_a = \left| \frac{\partial \nu(R_i; \gamma)_a}{\partial \gamma_k} \right| \ \ \ \text{and} \ \ \ V_a = \text{sign} \left( \frac{\partial \nu(R_i; \gamma)_a}{\partial \gamma_k} \right) \hat{H}_a
\end{equation}
for $a \in \{1, \ ..., \ N_P\}$, and

\begin{align}
    \beta_a &= \left| \frac{\partial \nu(R_i; \gamma)_{a - N_P}}{\partial \gamma_k} \right| \nonumber \ \ \ \text{and} \\ V_a &= \text{sign} \left( \frac{\partial \nu(R_i; \gamma)_{a - N_P}}{\partial \gamma_k} \right) \hat{H}_a (2 |\psi\rangle \langle \psi| - \mathbbm{1}).
\end{align}
for $a \in \{N_P + 1, \ ..., \ 2N_P\}$. Now, as is standard in any LCU procedure, we define the PREP and SELECT operators. PREP will act on a qubit ancilla of size $s = \lceil \log_2 2N_P \rceil$ as

\begin{equation}
    \text{PREP} |0\rangle^{\otimes s} = \frac{1}{\sqrt{\beta}} \displaystyle\sum_{a = 1}^{2N_P} \sqrt{\beta_a} |a\rangle \ \ \ \text{where} \ \ \ \beta = \displaystyle\sum_{a = 1}^{2N_P} \beta_a.
\end{equation}

Define SELECT as a controlled operation acting on the ancilla as well as a qubit register encoding $|\psi_i\rangle$, such that

\begin{equation}
    \text{SELECT} |a \rangle |\psi_i\rangle = |a\rangle V_a |\psi_i\rangle
\end{equation}
for $a \in \{1, \ ..., \ 2N_P\}$. In order to implement a procedure which returns the desired expectation value in Eq.~\eqref{eq:n_grad}, we utilize an LCU method which begins with a qubit register encoding the state $|\psi_i\rangle$ along with an ancilla register $|0\rangle |0\rangle^{\otimes s}$ so that the initial state of the circuit is $|0\rangle |0\rangle^{\otimes s} |\psi_i\rangle$. We then apply a sequence of gates, including PREP and SELECT, as defined above. See Fig.~\ref{fig:coh_test} for a circuit diagram of the gate sequence applied to the initial state. The circuit in Fig.~\ref{fig:coh_test} results in the state

\begin{figure}
\centering
\[
\Qcircuit @C=1em @R=1em @!R {
\lstick{\ket{0}} & \gate{H} & \gate{S} & \ctrl{2} & \gate{H} & \qw\\
\lstick{\ket{0}} & {/} \qw & \gate{\text{PREP}} & \ctrl{1} & \qw & \qw \\
\lstick{\ket{0}} & {/} \qw & \gate{U_i} & \gate{\text{SELECT}}  & \qw & \qw
}
\]
\caption{An LCU circuit for computing $\text{Im} \left[ \langle \psi_i | \sum_{a = 1}^{2N_P} \beta_a V_a | \psi_i \rangle \right]$. This circuit is repeated multiple times during amplitude estimation where a state is considered marked if the top-most qubit is $\ket{0}$. Note that $U_i | 0\rangle = |\psi_i\rangle$.} 
\label{fig:coh_test}
\end{figure}
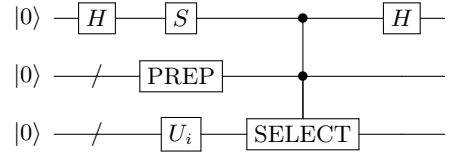

\begin{equation}
    |\psi_{\text{out}}\rangle = \frac{1}{2} \displaystyle\sum_{a = 1}^{2N_P} \sqrt{\frac{\beta_a}{\beta}} \left[ |0\rangle |a\rangle (\mathbbm{1} + i V_a) |\psi_i\rangle + |1\rangle|a\rangle (\mathbbm{1} - i V_a) |\psi_i\rangle \right].
\end{equation}
It follows that the probability of measuring the first qubit in state $|0\rangle$ is given by

\begin{multline}
\frac{1}{4} \left( \displaystyle\sum_{a = 1}^{2N_p} \sqrt{\frac{\beta_a}{\beta}} \langle a| \langle \psi_i | (\mathbbm{I} - iV_a^{\dagger}) \right) \left( \displaystyle\sum_{a = 1}^{2N_p} \sqrt{\frac{\beta_a}{\beta}} |a\rangle (\mathbbm{I} + iV_a) |\psi_i\rangle \right) \\ = \frac{1}{2} - \frac{1}{2} \ \text{Im} \left[ \langle \psi_i | \left( \sum_{a = 1}^{2N_P} \frac{\beta_a}{\beta} V_a \right) | \psi_i \rangle \right],
\end{multline}
where we use the fact that 

\begin{multline}
\langle \psi_i | (\mathbbm{1} - iV_a^{\dagger})(\mathbbm{1} + iV_a) |\psi_i\rangle = 2 + i \left[ \langle \psi_i | V_a | \psi_i \rangle - \langle \psi_i | V_a^{\dagger} |\psi_i\rangle \right] \\ = 2 - 2 \ \text{Im} \langle \psi_i | V_a |\psi_i\rangle.
\end{multline}
This allows us to compute the quantity in Eq.~\eqref{eq:n_grad}.

Since we are operating in the regime of fault-tolerant quantum circuits, we make use of quantum amplitude estimation to derive better bounds of the number of samples required to estimate the above probability. More specifically, we utilize the following theorem \cite{brassard2002quantum}.

\begin{theorem}[Amplitude Estimation]
For integer $R$, integer $k \geq 2$, unitary $V$, and projector $P$ such that $| \langle 0 | V^{\dagger} P V | 0\rangle |^2 = p$, there exists an algorithm that outputs an estimate $\widetilde{p}$ of $p$ such that

\begin{align}
    |\widetilde{p} - p| &\leq 2\pi k \frac{\sqrt{p ( 1 - p)}}{R} + k^2 \frac{\pi^2}{R^2} \nonumber \\ &\leq \frac{k\pi}{R} + \frac{k^2 \pi^2}{R^2} \leq \frac{2k^2 \pi^2}{R}
\end{align}

with probability greater than $1 - \frac{1}{2(k - 1)}$, where $O(R)$ evaluations of the unitaries $V$ and $\mathbbm{1} - 2P$ are used.
\label{thm:ae}
\end{theorem}

We set $V$ equal to the circuit in Fig.~\ref{fig:coh_test}, and let $P$ be the projector $|0\rangle \langle 0 |$ onto the first qubit. Note that for this procedure to work, we must apply unitaries $U_i$ which prepare our data states, $U_i | 0\rangle = |\psi_i\rangle$ multiple times in series. This eliminates the possibility of using this method of gradient calculation in the case that the data states $|\psi_i\rangle$ are obtained from experiments and fed into a quantum computer, without the ability to implement their corresponding state-preparation unitary on a quantum device. We have

\begin{align}
    p = | \langle 0 | V^{\dagger} P V | 0 \rangle|^2 &= | \langle \psi_{\text{out}} | P | \psi_{\text{out}} \rangle |^2 \nonumber \\ & = \frac{1}{2} - \frac{1}{2} \text{Im} \left[ \bra{\psi_i} \left( \displaystyle\sum_{a = 1}^{2N_P} \frac{\beta_a}{\beta} V_a \right) \ket{\psi_i} \right] \nonumber \\ &= \frac{1}{2} - \frac{1}{2\beta} \frac{\partial}{\partial \gamma_k} |\braket{\psi_i}{\psi} |^2.
\end{align}

If we want an estimator $\tilde{p}$ generated with a quantum circuit to give a good approximation of $p$ with high probability, we can use Theorem~\ref{thm:ae}. In particular, we set $k = 3$ so our success probability is greater than $1 - \frac{1}{4} = \frac{3}{4}$. We then set the $R$ in Theorem~\ref{thm:ae} such that

\begin{equation}
\frac{2 k^2 \pi^2}{R} = \frac{18 \pi^2}{R} \leq \frac{\epsilon}{2\beta},
\end{equation}
so we simply choose $R \geq \frac{36 \beta \pi^2}{\epsilon}$. Thus, by Theorem~\ref{thm:ae} and the required lower bound on $R$, with

\begin{equation}
\label{eq:rscale}
O(R) = O \left( \frac{\beta}{\epsilon} \right) = O \left( \frac{1}{\epsilon} \displaystyle\sum_{j = 1}^{N_P} \left| \frac{\partial \nu(R_i; \gamma)_j}{\partial \gamma_k} \right| \right)
\end{equation}
executions of $V$, we get an estimator $\tilde{p}$ such that $|\tilde{p} - p| < \frac{\epsilon}{2\beta}$. It follows that

\begin{multline}
    \left| 2\beta \left( \frac{1}{2} - \tilde{p} \right) - \frac{\partial}{\partial \gamma_k} |\langle \psi_i | \psi \rangle|^2 \right| \\ = \left| 2\beta \left( \frac{1}{2} - \tilde{p} \right) - 2\beta \left( \frac{1}{2} - p \right) \right| = 2\beta |\tilde{p} - p| < \epsilon,
\end{multline}
so we have an $\epsilon$-good approximation of the desired derivative with probability greater than $\frac{3}{4}$ when using the estimator $2\beta \left( \frac{1}{2} - \tilde{p} \right)$. To boost the probability of success to $1 - \delta$, we make use of a Chernoff bound. In particular, we can query our estimator $O\left(\log \left( \frac{1}{\delta} \right) \right)$ times and take the median of the outputs in order to get an approximation of the desired quantity with probability $1 - \delta$. It follows that $m_i$, the number of circuit executions required to get an $\epsilon$-good approximation with probability greater than $1 - \delta$ is, by Eq.~\eqref{eq:rscale},

\begin{equation}
    m_i \in O \left( \frac{1}{\epsilon} \log \left( \frac{1}{\delta} \right) \displaystyle\sum_{j = 1}^{N_P} \left| \frac{\partial \nu(R_i; \gamma)_j}{\partial \gamma_k} \right| \right)
\end{equation}
In order to estimate the required number of circuit executions to compute the full derivative $\frac{\partial C}{\partial \gamma_k}$, we follow the same reasoning as in Sec.~\ref{sec:had_test} and take the sum over all $m_i$, to get

\begin{equation}
    M = \displaystyle\sum_{i = 1}^{N} m_i \in O \left( \frac{N}{\epsilon} \log \left( \frac{1}{\delta} \right) \max_i \displaystyle\sum_{j = 1}^{N_P} \left| \frac{\partial \nu(R_i; \gamma)_j}{\partial \gamma_k} \right| \right).
\end{equation}

\end{proof}

As a final point, the techniques of~\cite{gilyen2017optimizing} can be used to reduce the scaling with $N$ quadratically from that given in the above result.  The central caveat with this method is that the gradient needs to be stored in a qubit register, and the quantum neural network function needs to be implemented on a quantum computer.  This leads to a prohibitive number of qubits for realistic examples and so we ignore such optimizations here as they are unlikely to be practical in their current form for our generative approach.

\section{Limitations of Generative Training}
\label{sec:4}

As was discussed in Sec.~\ref{sec:1}, producing quantum data with a quantum computer or obtaining it from an experiment is, in general, an expensive and arduous task. Therefore, understanding the amount of data required for the model to learn accurate representations of ground states fundamentally governs its feasibility. In order to derive lower bounds on the amount of data that must be supplied to the model, it is important to keep in mind that both the data and the learning procedure considered in this work are quantum in nature. As a result, we are able to draw upon a wealth of results in quantum computing and quantum information theory to make rigorous claims about the model's behaviour, and thus its performance. In the following sections, we introduce two lower bounds on the amount of data that must be provided to general quantum generative models (operating under some particular assumptions) in order for it to learn accurate models of a parameterized ground state.

The first of these lower bounds is based on Grover search. It shows that if we were able to efficiently train generic fermionic quantum models using data from one side of an avoided crossing, i.e., from a point where the energies of two states approach each other but do not cross, then we could use such an algorithm violate lower bounds on the query complexity. The second relies on interpreting a quantum generative model as a parameter-estimation problem, and invoking the quantum Cramer-Rao bound, which provides a lower bound on the variance of any unbiased estimator of the true model parameters given data sampled from the model.

\subsection{Grover Search Lower Bound}

We prove a general result showing that our algorithm will not give a universal exponential speedup. We achieve this by reducing the computation of the ground state of a fermionic Hamiltonian to Grover's search problem and then using known simulation results~\cite{kieferova2019simulating,low2018hamiltonian} and the lower bound for unstructured search~\cite{bennett1997strengths}. Specifically, we are interested in showing how much information can be learned about an unknown marked state through an adiabatic crossing, also known as an avoided crossing (where two coupled states of the same symmetry cannot cross), where quantum data can only be sampled along a coordinate before the crossing occurs. The motivation is that this often occurs in molecular energy surfaces along reaction coordinates, and we would like to know the fundamental limitation of learning a general potential energy surface when only sampling ground states nearby the equilibirum geometry. Additionally, this discussion is important in light of the performance seen in small numerical experiments discussed in Section~\ref{sec:5}.  While these results are encouraging, they raise questions about whether hard instances can emerge when applying the generative model to larger systems. We show below that hard instances exist and further are common in systems with phase transitions or conical intersections~\cite{yarkony1996diabolical}.


It was shown in Ref~\cite{roland2002quantum} that the Grover search problem can be encoded in an instance of quantum adiabatic evolution. The Hamiltonian in a 2-dimensional subspace, used for adiabatic time-evolution, is defined as 

\begin{align}
    H(s) = (1-s) H_0 + s H_m,
\end{align}
where $H_0 = \mathbbm{1} - \ket{\psi_0}\bra{\psi_0}$ and $H_m = \mathbbm{1} - \ket{m}\bra{m}$. Although $H(s)$ is distinct from the parametrized molecular Hamiltonian $H(R)$, the adiabatic parameter $s$ plays a role analogous to the coordinates $R$. We thus employ $H(s)$ as an illustration of difficulties that may arise when dealing with avoided crossings in molecular Hamiltonians, specially for the fermionic version we describe below. The state $\ket{\psi_0}$ is the superposition over all $N = 2^n$ computational basis states, $\ket{i}$, for $n$ qubits, given by
\begin{equation}
    \ket{\psi_0} = \frac{1}{\sqrt{N}} \sum_{i=0}^{N-1} \ket{i}.
\end{equation}
The single basis state $\ket{m}$ is the ``marked'' state: the single item that is to be searched for over the adiabatic evolution. The 2D subspace is represented by an orthogonalized version of the basis spanned by $\ket{\psi_0}$ and $\ket{m}$.

By modifying the Hamiltonian Grover search problem to obey fermionic statistics, we can derive a fermionic variant of the search problem with the purpose of determining how much quantum data (as a set of previously prepared ground states) over a potential energy surface with coordinate $s$ are necessary to predict the target state $\ket{m}$ within some error $\epsilon$. Finally, this lower bound can be interpreted as a worst-case scenario compared to a general avoided crossing along a nuclear coordinate for some molecular Hamiltonian. The unmarked search problem has an exponentially small, but non-zero overlap connecting the starting and end state along the coordinate of interest, and is confined to a single dimension. In general, higher-dimensional potential energy surfaces allow multiple pathways from the starting state to end state, resulting in multiple possible paths with larger minimum eigenvalue gaps.    

\begin{lemma}\label{lem:fermion-grover}
The single-particle occupied 2-dimensional subspace of the one-body fermionic Hamiltonians, $H^f_0$ and $H^f_m$ which are adiabatically connected as $ H^f(s) = (1-s)H^f_0 + sH^f_m$, are equivalent to a unary encoding of the Grover Hamiltonian $H_0$ and $H_m$ respectively.
\end{lemma}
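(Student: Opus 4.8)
The plan is to make the ``unary encoding'' precise and then verify that a suitably chosen one-body fermionic Hamiltonian reproduces, on the single-particle sector, exactly the matrices $H_0$ and $H_m$. Introduce $N = 2^n$ fermionic modes with operators $a_i,a_i^\dagger$ for $i = 0,\dots,N-1$ and vacuum $\ket{\Omega}$, and define the unary (one-hot) encoding $\ket{i}\mapsto a_i^\dagger\ket{\Omega}$ from the $n$-qubit computational basis to the occupation-number basis of the single-particle sector. This extends linearly to a unitary isomorphism $\mathcal{E}$ from $(\mathbb{C}^2)^{\otimes n}$ onto $\operatorname{span}\{a_i^\dagger\ket{\Omega}\}_i$; under $\mathcal{E}$ the Grover initial state becomes $\mathcal{E}\ket{\psi_0} = \tfrac{1}{\sqrt{N}}\sum_i a_i^\dagger\ket{\Omega}$ and the marked state becomes $\mathcal{E}\ket{m} = a_m^\dagger\ket{\Omega}$, both still normalized.

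Next I would fix the fermionic Hamiltonians by requiring their hopping matrices to equal the matrices of $H_0$ and $H_m$ in the computational basis. Since $\bra{i}H_0\ket{j} = \delta_{ij} - 1/N$ and $\bra{i}H_m\ket{j} = \delta_{ij} - \delta_{im}\delta_{jm}$, one sets
\begin{equation}
H^f_0 = \sum_{i,j=0}^{N-1}\Big(\delta_{ij}-\tfrac1N\Big)\,a_i^\dagger a_j, \qquad H^f_m = \sum_{i=0}^{N-1} a_i^\dagger a_i - a_m^\dagger a_m,
\end{equation}
which are manifestly one-body, Hermitian, and particle-number conserving.

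The crux is the elementary identity that a one-body operator acts on a single-particle state as its hopping matrix: from the canonical anticommutation relations $\{a_i,a_k^\dagger\} = \delta_{ik}$ together with $a_i\ket{\Omega} = 0$ one obtains $a_i^\dagger a_j\, a_k^\dagger\ket{\Omega} = \delta_{jk}\, a_i^\dagger\ket{\Omega}$, hence $\big(\sum_{ij}h_{ij}a_i^\dagger a_j\big)a_k^\dagger\ket{\Omega} = \sum_i h_{ik}\,a_i^\dagger\ket{\Omega}$. Applied to the hopping matrices appearing in $H^f_0$ and $H^f_m$, this shows that the restriction of $\mathcal{E}^\dagger H^f_0\mathcal{E}$ (resp.\ $\mathcal{E}^\dagger H^f_m\mathcal{E}$) to the single-particle sector is exactly $H_0$ (resp.\ $H_m$). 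By linearity in $s$, the restriction of $H^f(s) = (1-s)H^f_0 + sH^f_m$ to that sector equals $(1-s)H_0 + sH_m = H(s)$. Since $H(s)$ acts as the identity on the orthogonal complement of $\operatorname{span}\{\ket{\psi_0},\ket{m}\}$, so does $H^f(s)$ on the orthogonal complement, within the single-particle sector, of the two-dimensional subspace spanned by $\mathcal{E}\ket{\psi_0}$ and $\mathcal{E}\ket{m}$. Hence the entire non-trivial two-dimensional block — and in particular the instantaneous eigenvalue gap as a function of $s$ — is carried faithfully by $\mathcal{E}$, which is the claimed equivalence.

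I expect no serious obstacle here: the only care needed is to remain inside a fixed (single-particle) number sector so that no fermionic sign factors arise, and to observe that an arbitrary Hermitian matrix such as the coefficient matrix of $H_0$ or $H_m$ can indeed be realized as a one-body hopping matrix. Everything else reduces to the one-line anticommutator computation above; the substance of the lemma is precisely that the unary embedding is faithful on the two-dimensional subspace that controls the Grover adiabatic gap, so that the subsequent query-complexity argument can legitimately be transported to the fermionic setting.
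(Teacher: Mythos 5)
Your proof is correct and follows essentially the same route as the paper: realize $H_0$ and $H_m$ as particle-number-conserving one-body fermionic Hamiltonians and check that, under the one-hot map $\ket{i} \mapsto a_i^\dagger\ket{\Omega}$, their restriction to the single-particle sector reproduces the Grover Hamiltonians, so the two-dimensional Grover block and its gap carry over intact. The only cosmetic differences are that the paper obtains the uniform-superposition projector as the zero-momentum orbital occupancy via the fermionic fast Fourier transform (working with $n$ modes), whereas you specify the hopping matrices directly for $N = 2^n$ modes and make the one-body action explicit through the anticommutation identity $a_i^\dagger a_j a_k^\dagger \ket{\Omega} = \delta_{jk} a_i^\dagger \ket{\Omega}$.
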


\begin{proof}
The single-particle fermionic Hamiltonian can be defined as the following: assuming the Jordan-Wigner transformation where $n$ qubits are equivalent to $n$ fermionic sites,
\begin{align}
    H_0^f &= \mathbbm{1} - \mathcal{F}^{\dagger} \tilde{a}^{\dagger}_0 \tilde{a}_0 \mathcal{F} \\
    H_m^f &= \mathbbm{1} - a^{\dagger}_m a_m,
\end{align}
where $H_m^f$ encodes the target ``marked'' state corresponding to a single electronic configuration with one fermion occupied on site $m$, and all other sites unoccupied. Here, $\mathcal{F}$ is the fermionic fast Fourier transform (FFFT)~\cite{babbush2018low} defined as
\begin{equation}
    \mathcal{F}^{\dagger} \tilde{a}_k^{\dagger} \mathcal{F} = \frac{1}{\sqrt{n}} \sum_{p=0}^{n-1} e^{-i {\frac{2\pi}{n}} p k} \hat{a}^\dagger_p,
\end{equation}
where $\tilde{a}^{\dagger}$ denotes the creation operator over the momentum basis. Using this definition, and the fact that the  we can expand $H_0^f$ and simplify to the following form, we find that

\begin{align}
     H^f_0 &= \mathbbm{1} - \mathcal{F}^{\dagger} \tilde{a}^{\dagger}_0 \tilde{a}_0 \mathcal{F} = \mathbbm{1} - \mathcal{F}^{\dagger} \tilde{a}^{\dagger}_0\mathcal{F} \mathcal{F}^{\dagger}\tilde{a}_0 \mathcal{F} \nonumber \\ &= \mathbbm{1} - \left(\frac{1}{\sqrt{n}} \sum_{p=0}^{n-1} e^{-i {\frac{2\pi}{n}} p \cdot 0} a^\dagger_p \right) \left(\frac{1}{\sqrt{n}} \sum_{q=0}^{n-1} e^{i {\frac{2\pi}{n}} q \cdot 0} a_q \right) \nonumber \\ 
     &= \mathbbm{1} - \left(\frac{1}{\sqrt{n}} \sum_{p=0}^{n-1}  \hat{a}^\dagger_p \right) \left(\frac{1}{\sqrt{n}} \sum_{q=0}^{n-1} a_q \right) = \mathbbm{1} - \frac{1}{n} \sum_{p,q=0}^{n-1}  a^\dagger_p a_q.
\end{align}

The full Hamiltonian spanning the potential energy surface over coordinate $s$ is then in the familiar form
\begin{equation}\label{eq:adiab_ham}
    H^f(s) = (1-s)H^f_0 + sH^f_m,
\end{equation}
where $0 \leq s \leq 1$. Since these Hamiltonians conserve particle number, the chosen subspace for the starting state will retain the same symmetry varying the Hamiltonian over coordinate $s$. In the single particle case, the basis is identical to a unary encoding where the number of basis states is equal to $n$ qubits corresponding to $\ket{i} \in \{\ket{100\cdots0}, \ket{010\cdots0}, \cdots, \ket{000\cdots1} \}$. The ground state vectors for $H_0^f$ and $H_m^f$ for the single particle case are then
\begin{align}
    \ket{\phi_0} &= \frac{1}{\sqrt{n}} \sum_{i=0}^{n-1} \ket{i} \\
    \ket{m} &= \ket{00\cdots 1_m \cdots 00},
\end{align}
respectively. As we can see from this construction, the single particle $H^f (s)$ matches the construction of $H(s)$, and is therefore a unary encoding of the adiabatic Grover Hamiltonian. 
\end{proof}


Our goal is now to show that for an arbitrary quantum algorithm, the query complexity of the set of Hamiltonians over some parameter $s$, which in the case of a potential energy surface would be a bond coordinate, cannot perform better than Grover's lower bound. To illustrate this, we show a sketch of the Grover Hamiltonian potential energy surface with respect to the reduced time parameter $s$, in Fig.~\ref{fig:adiab_sketch}. Here, $E_0$ and $E_1$ are the true ground state and first excited state energy levels of $H(s)$. At $s=0$ the ground state is $| \psi_0 \rangle$, and at $s=1$, the ground state is the marked state $| m \rangle$.

\begin{figure}[t!]
    \centering
    \includegraphics[width=0.5\textwidth]{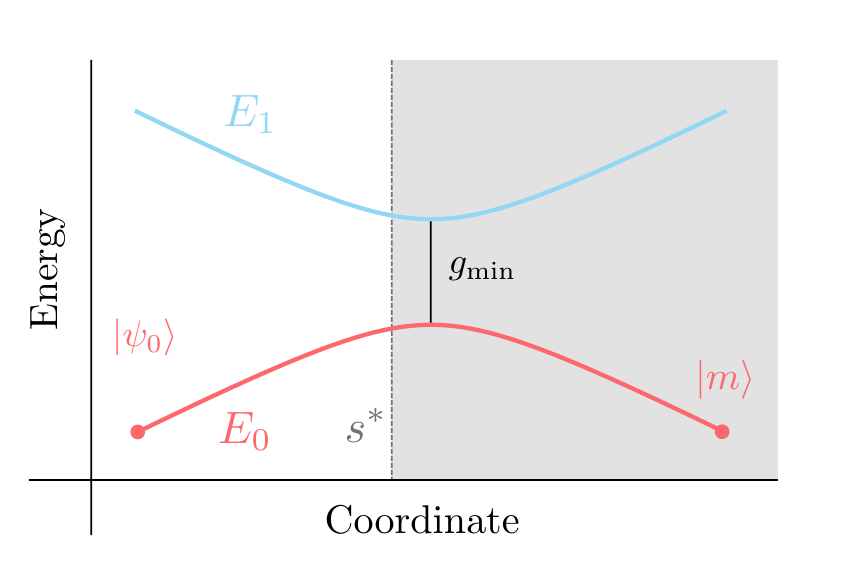}
    \caption{A schematic of the adiabatic Grover's search Hamiltonian, where $E_0$ is the ground state, $E_1$ is the first excited state of $H(s)$ over coordinate $s$, and $g_{\text{min}}$ is the location of the minimum eigenvalue gap.
    }\label{fig:adiab_sketch}
\end{figure}
As generalization is a central goal of our algorithm, one of our main objectives is to predict ground states across avoided crossings. In this case, we restrict our samples to have a maximum value of $s^*$, where the global minimum eigenvalue gap, $g_{\text{min}}$ is at $s=1/2$, and the allowed region is $s^* < 1/2$. In short, we assume we cannot take sample data from the grey shaded region in Fig.~\ref{fig:adiab_sketch}, and that $| \psi_0 \rangle$ corresponds to a molecular equilibrium geometry. With these assumptions, we then prove that no quantum algorithm can predict a target state $| m \rangle$ using fewer queries to the Hamiltonian than is allowed by Grover's lower bound. Fundamentally, this puts a lower limit on the number of required data points that is related to $s^*$ and the minimum energy gap $g_\text{min}$.

\begin{theorem}
Let $\ket{\phi}\in \mathbb{C}^{2^n}$ be the $k^{\rm th}$ eigenstate of the Hamiltonian $H_\phi$ in the list of eigenstates sorted in increasing order by eigenvalue, and let $\ket{m}$ be the corresponding eigenvector of $H_m$.  Assume the following:
\begin{enumerate}
    \item There exists a gapped adiabatic path (meaning an ordered set of sequence of Hamiltonians that have a non-zero eigenvalue gap between the ground and first excited instantaneous eigenstates) between $H_\phi$ and $H_m$ such that the instantaneous eigenstate $\ket{\psi(s)}$ for $s\in [0,1]$ has a minimum gap of at least $g_{\min}>0$ with the remainder of the spectrum.
    \item There exists $s^* \in [0,1]$ such that the spectral gap is in $\Omega(1)$ for all $s\le s^*$ as $n$ increases.
\item 
If $U_M$ is a hypothetical unitary channel such that for $M\in \tilde{o}(1/{g_{\min}}^{1/2})$, $s_j \in [0,s^*]$, ancillary state $\rho_{anc}$ and arbitrary fermionic Hamiltonian $H_\phi$, it holds that
\begin{align}\label{eq:um_channel_expect}
    {\rm Tr}\Bigg(U_M \left(\bigotimes_{j=1}^M \ket{\psi(s_j)}\otimes \rho_{anc}\otimes \ket{0}\right) \nonumber \\ (I^{\otimes M}\otimes I ) \otimes |m\rangle\!\langle m|\Bigg)\ge 2/3.
\end{align}
\end{enumerate}
Then under these assumptions if a number of samples $M\in\tilde{o}(1/g_{\min}^{1/2})$ could be used to generate the groundstate then that would violate the  $\Omega(\sqrt{2^n})$ query lower bound on the search problem.

\end{theorem}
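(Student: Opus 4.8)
The plan is a proof by contradiction. Suppose the channel $U_M$ of Assumption 3 exists for some $M \in \tilde{o}(g_{\min}^{-1/2})$. I will use it to build a quantum algorithm that identifies the marked state $\ket{m}$ with only $\tilde{o}(\sqrt{2^n})$ queries to the reflection oracle $O_m = \mathbbm{1} - 2\ket{m}\bra{m}$, contradicting the unstructured-search lower bound of Ref.~\cite{bennett1997strengths}. The bridge is Lemma~\ref{lem:fermion-grover}: in the single-particle sector, $H^f(s) = (1-s)H^f_0 + sH^f_m$ is unitarily the adiabatic Grover Hamiltonian of Ref.~\cite{roland2002quantum}, with $H^f_m = \mathbbm{1} - \ket{m}\bra{m} = \tfrac{1}{2}(\mathbbm{1} + O_m)$ implementable by one oracle call, while $H^f_0 = \mathbbm{1} - \ket{\phi_0}\bra{\phi_0}$ and its ground state $\ket{\phi_0}$ are oracle-free. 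For the linear schedule the minimum gap sits at $s = 1/2$ with $g_{\min} = \Theta(2^{-n/2})$, so the search lower bound reads equivalently as $\Omega(1/g_{\min})$ queries.

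First I would show that the data states can be produced cheaply in terms of oracle calls. Fix any $s_j \in [0, s^*]$; by Assumption 2 the spectral gap along $[0, s_j]$ is $\Omega(1)$, so adiabatic state preparation starting from $\ket{\phi_0}$ yields a state that is $\epsilon$-close in norm to $\ket{\psi(s_j)}$ in evolution time $T = \tilde{O}(1/\epsilon)$ (using $\|\dot{H}^f\| = O(1)$ and $g = \Omega(1)$ on that segment). Implementing this time-ordered evolution with the time-dependent Hamiltonian-simulation algorithms of Refs.~\cite{kieferova2019simulating, low2018hamiltonian}, applied to the two-term form $H^f(s) = (1-s)(\mathbbm{1} - \ket{\phi_0}\bra{\phi_0}) + \tfrac{s}{2}(\mathbbm{1} + O_m)$, uses $\tilde{O}(1/\epsilon)$ queries to $O_m$ per state. (Eigenstate filtering on $H^f(s_j)$, whose gap is $\Omega(1)$, would be even cheaper, but is not needed for the stated exponent.)

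Next comes the error budget. Since $\rho \mapsto \mathrm{Tr}\!\left(U_M(\rho)\,\ket{m}\!\bra{m}\right)$ is $1$-Lipschitz in the trace distance of $\rho$, replacing each exact $\ket{\psi(s_j)}$ in Eq.~\eqref{eq:um_channel_expect} by its $\epsilon$-approximation changes that expectation by at most $2M\epsilon$, so taking $\epsilon = \Theta(1/M)$ keeps it above $1/2$. Measuring the output register of $U_M$ in the computational basis returns a candidate $\tilde{m}$, verified with one further oracle query, and $O(\log(1/\delta))$ independent repetitions with a majority vote boost the success probability to $1 - \delta$. The total oracle cost is $M \cdot \tilde{O}(1/\epsilon) \cdot O(\log(1/\delta)) = \tilde{O}(M^2)$. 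Hence if $M \in \tilde{o}(g_{\min}^{-1/2})$ the algorithm uses $\tilde{O}(M^2) = \tilde{o}(1/g_{\min}) = \tilde{o}(\sqrt{2^n})$ queries, contradicting the $\Omega(\sqrt{2^n})$ bound — which is the conclusion of the theorem.

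I expect the principal difficulty to be certifying that the total data-preparation cost is genuinely $\tilde{O}(M^2)$: this requires (i) a near-optimal form of the adiabatic theorem so that the per-state evolution time is linear, rather than a higher power, in $1/\epsilon$; (ii) control of block-encoding and series-truncation errors in the oracle-based simulation so that they contribute only logarithmic overhead; and (iii) propagating the state-preparation errors through $U_M$, where the Lipschitz estimate above is precisely what prevents an adversarial choice of $U_M$ from amplifying them. A secondary point is to confirm that every step can be carried out within the invariant single-particle subspace of Lemma~\ref{lem:fermion-grover}, so that the reduction faithfully realizes adiabatic Grover search.
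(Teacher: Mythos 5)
Your proposal follows essentially the same route as the paper's proof: reduce to the adiabatic Grover Hamiltonian via Lemma~\ref{lem:fermion-grover}, prepare the $M$ training states by adiabatic evolution on the constant-gap segment $[0,s^*]$ simulated with the Dyson-series/LCU methods (costing $\tilde{O}(1/\epsilon)$ queries to $U_m$ per state), propagate the preparation error through $U_M$ with a trace-norm argument forcing $\epsilon = \Theta(1/M)$, and conclude that $M\in\tilde{o}(1/g_{\min}^{1/2})$ would give a bounded-error search algorithm with $\tilde{O}(M^2)=\tilde{o}(\sqrt{2^n})$ oracle queries, contradicting the Grover lower bound. The minor differences (your explicit measurement-and-verify step, success-boosting, and noting $g^{*}_{\min}\in\Omega(1)$ up front rather than carrying the $g^{*3}_{\min}$ factor) do not change the substance of the argument.
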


\begin{proof}
Assume there does in fact exist a unitary channel $U_M$ that satisfies Eq.~\eqref{eq:um_channel_expect} for fermionic Hamiltonians $H_\phi$ and $H_m$.  We focus our attention on the single excitation subspace of the fermionic Hamiltonian, which is to say that our Hilbert space consists of states of the form
\begin{equation}
    \sum_{j=0}^{n-1} \psi_j a_j^\dagger\ket{0}.
\end{equation}

Using Lemma \ref{lem:fermion-grover}, we can map $H_\phi$ and $H_m$ to Grover Hamiltonians with a unary encoding of the search problem. In the case of general $H_\phi$ and $H_m$ fermionic Hamiltonians, we consider that the single-particle, one-body fermionic Grover Hamiltonian mapping provides a worst-case scenario, in that we have an exponentially small but non-zero overlap with increasing dimension $n$ of the starting state $\ket{\phi}$ and the final marked state $\ket{m}$.  As this Hamiltonian is one-body, it keeps the dynamics within the subspace.

The reduced time adiabatically connected Hamiltonian can then be defined as 
\begin{equation}
    H(s) = (1-s)H_{\phi} + s H_m,
\end{equation}
where we can then define the following from Lemma \ref{lem:fermion-grover},
\begin{align}
    H_{\phi} &\rightarrow \mathbbm{1} - \frac{1}{N}\sum_{i,j=0}^{N-1} |i\rangle\!\langle j| \\
    H_{m} &\rightarrow \mathbbm{1} -  |m\rangle\!\langle m|,
\end{align}
giving us the mapping to the adiabatic Grover Hamiltonian for the fermionic $H(s)$, where $s \in \{0,1\}$. From the adiabatic theorem~\cite{jansen2007bounds,teufel2003adiabatic,cheung2011improved} we know that the error in the prepared wavefunction, $\ket{\chi}$, after the full time interval can be written as
\begin{multline}
    | \langle \psi_k (s=1) | \chi(s=1) \rangle |^2 \\ = 1 - \left(\frac{\max_s \left| \left \langle \psi_{k+1}(s) \left| \frac{d H(s)}{ds} \right| \psi_k (s) \right \rangle \right|}{g_{\text{min}}^3T}\right)^2.
\end{multline}

 Using the first-order error bounds on the local adiabatic theorem from~\cite{jansen2007bounds}, with the requirement that $\max \{|| H_{\phi} ||, || H_m ||\} = 1$, we find that in the large $N$ limit, the total time evolution needed to achieve error $\epsilon$ in the overlap is
\begin{equation}\label{eq:T_adiab_global}
    T \in O \left(\frac{1}{g^3_{\text{min}} \epsilon} \right).
\end{equation}
Note that improved gap scaling is possible using other approaches~\cite{cheung2011improved,elgart2012note}; however, we use this scaling because it holds regardless of the relative sizes of $\epsilon$ and $g_{\rm min}$.

In order to convert this into a discrete query model to prepare the $M$ quantum states, we must first represent $H(s)$ as a linear combination of unitaries (LCU). We can define $H(s)$ in terms of unitary reflection operators from Grover's algorithm as
\begin{align}
    U_{\phi} &= \mathbbm{1} - 2 H_{\phi} \\
    U_{m} &= \mathbbm{1} - 2 H_{m}. 
\end{align}
Therefore
\begin{equation}
    H(s) = \frac{(1-s)}{2}  \left[\mathbbm{1} - U_{\phi} \right] + \frac{s}{2}  \left[\mathbbm{1} - U_{m} \right].
\end{equation}
Now, this LCU form of the Hamiltonian can be used to simulate the reduced time dynamics with the Dyson expansion method~\cite{low2018hamiltonian, kieferova2019simulating}, where the query complexity (specifically in terms of $U_m$) is in $O \left(T \frac{\log (T/\delta)}{\log \log (T/\delta)} \right)$, where $\delta$ is the simulation error. Using Eq.~\eqref{eq:T_adiab_global}, we can then show that the query complexity to $U_m$ is in 
\begin{equation}
    O \left( \frac{1}{g^3_{\text{min}} \epsilon} \frac{\log (1/(g^3_{\text{min}} \epsilon \delta))}{\log \log (1/(g^3_{\text{min}} \epsilon \delta ))} \right).
\end{equation}

In order to understand the error in the samples prepared up to the imposed boundary of $s^*$, we can then define an $s^*$-rescaled Hamiltonian as
\begin{equation}
    H^*(s) = \alpha(s) H_0 + \beta(s) H_m.
\end{equation}
The new boundary conditions of $H^*(s)$ can then be derived using the following equations, to solve for both $\alpha(s)$ and $\beta(s)$. First, 
\begin{align}
    H^*(0) &= H_0\\
    H^*(1) &= (1-s^*)H_0 + s^* H_m.
\end{align}
Therefore,
\begin{align}
    \alpha(0) &= 1 \, , \,\, \alpha(1) = 1-s^*  \\
    \beta(0) &= 0 \, , \,\, \beta(1) = s^*,
\end{align}
and the rescaled Hamiltonian is then
\begin{equation}
    H^*(s) = (1-ss^*) H_0 + ss^* H_m.
\end{equation}

For $H^*(s)$, we now have a separate adiabatic state preparation error distinct from $\epsilon$, since the derivative term is now
\begin{equation}
    \frac{d H^*(s)}{ds} = -s^* H_0 + s^* H_m = s^* \frac{d H(s)}{ds}.
\end{equation}
The adiabatic error in $H^*(s)$ is then
\begin{equation}
    | \langle \psi_k (s_j) | \chi(s_j) \rangle |^2 = 1 - \gamma^2_j,
\end{equation}
for the $j$th prepared sample wavefunction at coordinate $s_j$. Using this fact, this rescaled error term then scales in a similar fashion to the full $H(s)$, using the local adiabatic evolution as
\begin{equation}
    T_j^* \in O \left(\frac{1}{g^{*3}_{\text{min}} \gamma_j} \right),
\end{equation}
for every $j$th sample, where $g_{\text{min}}^{*}$ is the minimum eigenvalue gap of $H(s)$ where $s \leq s*$. Rearranging for total time evolution, and dropping the index $j$ off of $\gamma_j$ for simplicity, we get that the total time required to prepare $M$ states is 
\begin{equation}\label{eq:T_adiab_local}
    T^* \in O \left(\frac{M}{g^{*3}_{\text{min}} \gamma} \right).
\end{equation}
Converting this into the query complexity of $U_{m}^*$ using the same Dyson expansion method as above, we get
\begin{equation}
    O \left( \frac{M}{g_{\text{min}}^{*3} \gamma} \frac{\log (1/g_{\text{min}}^{*3} \gamma \delta)}{\log \log (1/g^{*3}_{\text{min}}\gamma \delta )} \right) .
\end{equation}
Now, combining the error terms into $\epsilon' = \delta + \gamma$ we can define the $j$th evolved state as 
\begin{equation}
    | \tilde{\psi}(s_j) \rangle = | \psi(s_j) \rangle + \epsilon' \ket{E_j},
\end{equation}
with error $O(\epsilon')$, and $\ket{E_j}$ is defined as the error vector associated with $\epsilon'$. In order to understand how the error term modifies the expectation value in Eq.~\eqref{eq:um_channel_expect}, we substitute in $| \tilde{\psi}(s_j) \rangle$ and get the following relation utilizing the von-Neumann trace inequality
\begin{align}
    {\rm Tr}&\Bigg( \Bigg(\bigotimes_{j=1}^M | \tilde{\psi}(s_j) \rangle \langle \tilde{\psi}(s_j) | - | \psi(s_j) \rangle \langle \psi(s_j) | \nonumber \\ &\otimes \rho_{anc}\otimes |0 \rangle \langle 0 |\Bigg) U_M^{\dagger} \left( I^{\otimes M}\otimes I \otimes |m\rangle\!\langle m| \right) U_M\Bigg) \nonumber \\
    &\le \left\| \bigotimes_{j=1}^M | \tilde{\psi}(s_j) \rangle \langle \tilde{\psi}(s_j) | - | \psi(s_j) \rangle \langle \psi(s_j) | \right\|_1 \, .
\end{align}
In terms of the true eigenstate and the error vector we then have
\begin{align}
    &\left\| \bigotimes_{j=1}^M | \tilde{\psi}(s_j) \rangle \langle \tilde{\psi}(s_j) | - | \psi(s_j) \rangle \langle \psi(s_j) | \right\|_1 \nonumber \\ &\le M \epsilon' \left\| | \psi(s_j) \rangle \langle E_j | + | E_j \rangle \langle \psi(s_j) | + \epsilon'| E_j \rangle \langle E_j | \right\|_1 \nonumber \\
    & \le M \epsilon' (2 + \epsilon'),
\end{align}
and if $\epsilon' \le 1$ then 
\begin{align}
     M \epsilon' (2 + \epsilon') &\le 3M\epsilon'.
\end{align}
If we let $3M\epsilon' = 1/6$, with the motivation that we bound worst case expectation value of Eq.~\eqref{eq:um_channel_expect} to be $1/2$, we see that
\begin{equation}
    \epsilon' = \frac{1}{18M}.
\end{equation}
Using this relationship we then see that the query complexity for preparing $M$ states with max error $\epsilon'$ is 
\begin{equation}
    O \left( \frac{M^2}{g_{\text{min}}^{*3}} \frac{\log (1/g_{\text{min}}^{*3} )}{\log \log (1/g^{*3}_{\text{min}} )} \right) = \widetilde{O}\left( \frac{M^2}{g_{\text{min}}^{*3}} \right).
\end{equation}

By substituting in $M^2 = O(g_{\text{min}}^{*3} \sqrt{2^n})$, we produce a contradiction of the bounds in~\cite{boyer1998tight} 
because of the assumption that
an algorithm exists such that
\begin{align}
    {\rm Tr}\Bigg(U_M \left(\bigotimes_{j=1}^M \ket{\psi(s_j)}\otimes \rho_{anc}\otimes \ket{0}\right) \nonumber \\ (I^{\otimes M}\otimes I ) \otimes |m\rangle\!\langle m|\Bigg)\ge 2/3.
\end{align}  

This would be implied by $M\in \widetilde{o}(1/g_{\min}^{1/2})$ which contradicts Assumption 3.

\end{proof}
This shows that the structure of the phase diagram of a quantum system places fundamental limitations in our ability to extrapolate beyond the training set.  In particular, it reveals that avoided crossings in the spectrum can produce intractable barriers in our ability to predict the groundstate in different configurations. We will see below in Sec.~\ref{sec:5} that these barriers naturally appear as we try to extrapolate towards an eigenvalue crossing as we approach dissociation for molecules.

\subsection{Quantum Cramer-Rao Lower Bound}

The entire procedure executed by the generative model can be summarized as performing measurements of the supplied data states $|\psi_0(R_i)\rangle$, doing classical post-processing on the results, then using these results to perform further measurements on new copies of $|\psi_0(R_i)\rangle$, repeating this procedure until some convergence criterion is met. It follows that we can consider the optimization and parameter-updates taking place in the model as together forming one large entangling measurement on a state of the form

\begin{equation}
    |\psi_{\text{data}}\rangle = |\psi_0(R_1)\rangle \otimes \cdots \otimes |\psi_0(R_1)\rangle \otimes \cdots \otimes |\psi_0(R_N)\rangle \otimes |\psi_0(R_N)\rangle,
\end{equation}
where $|\psi_0(R_i)\rangle$ is repeated $m_i$ times in the tensor product, along with an ancilla register $|0\rangle^{\otimes a}$. Let us make the assumption that we have chosen the circuit $U$ and the neural network $\nu$ such that there exists some set of neural network parameters $\tilde{\gamma}$ where 

\begin{equation}
|\psi(\nu(R_i; \tilde{\gamma}))\rangle = U(\nu(R_i, \tilde{\gamma}))|\psi_{\text{init}}(R_i)\rangle = |\psi_0(R_i)\rangle
\end{equation}
for each $R_i$. We can therefore think of $|\psi_{\text{data}}\rangle$ as a state in which the optimal (but unknown) parameters $\tilde{\gamma}$ are embedded. Since our generative model outputs a set of parameters $\gamma^{*}$, where the goal is for $|\psi(\nu( R_i; \gamma^{*}))\rangle \approx |\psi_0(R_i)\rangle = |\psi(\nu(R_i; \tilde{\gamma}))\rangle$ for all $R_i$, we can think of the model as performing parameter-estimation: trying to find a good estimate of the optimal parameters $\tilde{\gamma}$ via measurement of $|\psi_{\text{data}}\rangle$. Parameter-estimation problems have been widely studied in the context of quantum information, and there are many results which give bounds on the performance of any such estimator under particular conditions \cite{helstrom1969quantum}. We make use of a theorem known as the quantum Cramer-Rao bound in order to make claims about the number of data states required to minimize the infidelity cost function $C$ of Eq.~\eqref{eq:cost} with high probability.

Before formally stating the quantum Cramer-Rao bound, we must introduce the concept of the quantum Fisher information matrix (QFI) for the (classical) neural network parameters $\gamma$. The following definition of the QFI is specific for pure states and is shown to be equivalent to the standard expression in this case in Theorem 2.5 of~\cite{liu2019quantum}.

\begin{definition}[Quantum Fisher Information Matrix]
\label{def:qfi}
The quantum Fisher information (QFI) matrix of a parameterized quantum state $|\psi(\gamma)\rangle$ at $\gamma = \gamma_0 \in \mathbb{R}^{n}$ is the matrix $\mathcal{F}[|\psi(\gamma_0)\rangle]$ with entries

\begin{align}
    \mathcal{F} \left[ |\psi(\gamma_0)\rangle \right]_{ij} &= 4 \mathrm{Re} \Big[ \langle \partial_i \psi(\gamma_0) | \partial_j \psi(\gamma_0) \rangle \nonumber \\ & - \langle \psi(\gamma_0) | \partial_j \psi(\gamma_0) \rangle \langle \partial_i \psi(\gamma_0) | \psi(\gamma_0) \rangle \Big]
    \label{eq:nqfi}.
\end{align}
\end{definition}

With this equivalent definition of the quantum Fisher information, we can compute upper bounds on the entries of the QFI matrix of a state parametrized by the neural network parameters $\gamma$, given by $|\psi(\gamma)\rangle := |\psi(\nu(R; \gamma))\rangle$, for some arbitrary $R$. Recall that

\begin{align}
    |\psi(\nu(R; \gamma)\rangle &= U(\nu(R; \gamma))|\psi_{\text{init}}(R)\rangle \nonumber \\ &= \left( \displaystyle\prod_{j = 1}^{N_P} e^{-i \nu(R; \gamma)_j H_j} \right) |\psi_{\text{init}}(R)\rangle.
\end{align}

Using an identical procedure used to derive Eq.~\eqref{eq:gr}, we find that

\begin{equation}
    \partial_j | \psi(\gamma) \rangle = -i \displaystyle\sum_{k = 1}^{N_P} \frac{\partial \nu(R; \gamma)_{k}}{\partial \gamma_j} \hat{H}_k(\nu(R; \gamma)) |\psi(\gamma)\rangle,
\end{equation}

where, as before, we define

\begin{equation}
\hat{H}_k(\gamma) = \left( \prod_{p < k} e^{-i \gamma_p H_p} \right) H_k \left( \prod_{p < k} e^{-i \gamma_p H_p} \right)^{\dagger}.
\end{equation}
It then follows that

\begin{align}
    &\langle \partial_i \psi(\gamma) | \partial_j \psi(\gamma) \rangle = \displaystyle\sum_{k, \ell} \frac{\partial \nu(R; \gamma)_{k}}{\partial \gamma_i} \frac{\partial \nu(R; \gamma)_{\ell}}{\partial \gamma_j} \nonumber \\ & \times \langle \psi(\gamma) | \hat{H}_k(\nu(R; \gamma)) \hat{H}_{\ell}(\nu(R; \gamma)) |\psi(\gamma)\rangle
\end{align}

and

\begin{align}
    &\langle \psi(\gamma) | \partial_j \psi(\gamma) \rangle \langle \partial_i \psi(\gamma) | \psi(\gamma) \rangle = \displaystyle\sum_{k, \ell} \frac{\partial \nu(R; \gamma)_{k}}{\partial \gamma_i} \frac{\partial \nu(R; \gamma)_{\ell}}{\partial \gamma_j} \nonumber \\ &\times \langle \psi(\gamma) | \hat{H}_k(\nu(R; \gamma)) | \psi(\gamma)\rangle \langle \psi(\gamma)| \hat{H}_{\ell}(\nu(R; \gamma)) |\psi(\gamma)\rangle.
\end{align}

Thus, making use of Eq.~\eqref{eq:nqfi}, it follows that

\begin{align}
    &\mathcal{F}[|\psi(\gamma)\rangle]_{ij} = 4 \displaystyle\sum_{k, \ell} \frac{\partial \nu(R; \gamma)_k}{\partial \gamma_i} \frac{\partial \nu(R; \gamma)}{\partial \gamma_j} \Big[ \langle \psi(\gamma) | \hat{H}_k \hat{H}_{\ell} |\psi(\gamma)\rangle \nonumber \\ &- \langle \psi(\gamma) | \hat{H}_k | \psi(\gamma)\rangle \langle \psi(\gamma)| \hat{H}_{\ell} |\psi(\gamma)\rangle \Big] \nonumber \\ &= 4 [ \partial_i \nu(R; \gamma) ]^{T} \text{Cov}(\hat{H}, \hat{H}) [ \partial_j \nu(R; \gamma) ] \nonumber \\ &\leq 4 || \partial_i \nu(R; \gamma) ||_2 || \partial_j \nu(R; \gamma) ||_2 || \text{Cov}(\hat{H}, \hat{H}) ||_2 \nonumber \\ & \leq 4 N_P^{1/2} || \partial_i \nu(R; \gamma) ||_2 || \partial_j \nu(R; \gamma) ||_2 || \text{Cov}(\hat{H}, \hat{H}) ||_{1, 1},
    \label{eq:qcrlb}
\end{align}
where $\|\cdot\|_2$ is the (induced) Euclidean norm and $\text{Cov}(\hat{H}, \hat{H})$ is the matrix such that

\begin{align}
\text{Cov}(\hat{H}, \hat{H})_{k\ell} &= \langle \psi(\gamma) | \hat{H}_k \hat{H}_{\ell} |\psi(\gamma)\rangle \nonumber \\ &- \langle \psi(\gamma) | \hat{H}_k | \psi(\gamma)\rangle \langle \psi(\gamma)| \hat{H}_{\ell} |\psi(\gamma)\rangle,
\end{align}
and $|| \cdot ||_{1, 1}$ is the entry-wise matrix norm, which is precisely the sum of the absolute values of each of the matrix entries. It follows that $\frac{|| \text{Cov}(\hat{H}, \hat{H}) ||_{1, 1}}{N_P^2}$ is precisely the average of the absolute values of the ``covariances'' between the pairs of generators in the circuit ansatz. Going forward, we denote this quantity by $\overline{\text{Cov}(\hat{H}, \hat{H})}$. The bound of Eq.~\ref{eq:qcrlb} becomes

\begin{align}
&\mathcal{F}[|\psi(\gamma)\rangle]_{ij} = \mathcal{F}[|\psi(\nu(\gamma; R))\rangle]_{ij} \nonumber \\ &\leq 4 N_P^{5/2} || \partial_i \nu(R; \gamma) ||_2 || \partial_j \nu(R; \gamma) ||_2 \overline{\text{Cov}(\hat{H}, \hat{H})}
\end{align}

Now, using Definition~\ref{def:qfi}, we can state the result known as the quantum Cramer-Rao bound.

\begin{theorem}[Quantum Cramer-Rao Bound]
\label{thm:qcr}
Given a parameterized quantum state $|\psi(\gamma)\rangle$ that is a function of a hidden vector of parameters $\gamma$ and a differentiable function $\widehat{T}(\gamma)$ which is an estimator of $\gamma$ obtained from measuring $|\psi(\gamma)\rangle$, the covariance matrix of $\widehat{T}(\gamma)$ satisfies the following matrix inequality~\cite{meyer2021fisher}:

\begin{equation}
    {\rm Cov} \big( \widehat{T}(\gamma) \big) \geq Df(\gamma) \mathcal{F} \left[ |\psi(\gamma)\rangle \right]^{-1} Df(\gamma)^{T}
\end{equation}
where $f(\gamma) = \mathbb{E}[\widehat{T}(\gamma)]$ and we use the convention that $A\ge B$ for matrices $A$ and $B$ implies that $A - B$ is a positive semi-definite matrix and $Df(\gamma)$ is the Jacobian matrix for $f$ at $\gamma$.
\end{theorem}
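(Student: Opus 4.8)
The plan is to derive the quantum bound by composing two classical facts: the multiparameter classical Cram\'er--Rao inequality for arbitrary (possibly biased) estimators, and the measurement-independent domination of the classical Fisher information by the quantum Fisher information matrix $\mathcal{F}[|\psi(\gamma)\rangle]$ of Definition~\ref{def:qfi}.

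First I would fix an arbitrary POVM $\{E_x\}$ used to extract $\widehat{T}$ from measurements of $|\psi(\gamma)\rangle$, with outcome distribution $p(x\mid\gamma)=\langle\psi(\gamma)|E_x|\psi(\gamma)\rangle$. For this induced classical model the score vector $\ell(x;\gamma)=\nabla_\gamma\log p(x\mid\gamma)$ has zero mean and covariance equal to the classical Fisher information matrix $I(\gamma)$; differentiating $f(\gamma)=\sum_x\widehat{T}(x)\,p(x\mid\gamma)$ identifies the Jacobian with a cross-covariance, $Df(\gamma)=\mathbb{E}[\widehat{T}\,\ell^{T}]$. Positive semidefiniteness of the joint covariance of the stacked vector $(\widehat{T},\ell)$, i.e.\ of $\bigl(\begin{smallmatrix}\mathrm{Cov}(\widehat{T}) & Df\\ Df^{T} & I\end{smallmatrix}\bigr)$, combined with a Schur-complement argument, then yields the classical bound $\mathrm{Cov}(\widehat{T})\ge Df(\gamma)\,I(\gamma)^{-1}\,Df(\gamma)^{T}$.

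The second and harder step is to show $I(\gamma)\le\mathcal{F}[|\psi(\gamma)\rangle]$ as matrices, uniformly over all measurements. For this I would introduce the symmetric logarithmic derivatives $L_j$ of $\rho=|\psi(\gamma)\rangle\!\langle\psi(\gamma)|$, defined by $\partial_j\rho=\tfrac{1}{2}(L_j\rho+\rho L_j)$, check that $\tfrac{1}{2}\mathrm{Tr}(\rho\{L_i,L_j\})$ reduces for pure states to the expression in Definition~\ref{def:qfi}, and write each classical score component as a real expectation of $L_j$ against $\{E_x\}$, so that $\mathcal{F}-I$ becomes a Gram-type matrix and hence positive semidefinite. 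Monotonicity of matrix inversion on the positive definite cone gives $I(\gamma)^{-1}\ge\mathcal{F}[|\psi(\gamma)\rangle]^{-1}$, and congruence by $Df(\gamma)$ preserves this ordering, so $Df\,I^{-1}\,Df^{T}\ge Df\,\mathcal{F}^{-1}\,Df^{T}$; chaining this with the classical bound from the previous step gives the theorem.

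The main obstacle will be making the domination $I(\gamma)\le\mathcal{F}[|\psi(\gamma)\rangle]$ fully rigorous in the genuinely multiparameter setting: one must handle the non-commutativity of the $L_j$, the possible rank-deficiency of $\mathcal{F}$ (in which case the inequality should be read with a pseudoinverse, or one simply restricts to the full-rank regime relevant here, since the subsequent analysis only needs an upper bound on the entries of $\mathcal{F}$), and the fact that the multiparameter quantum bound---unlike its scalar counterpart---need not be attainable, though it remains a valid lower bound. Since this technical lemma is precisely the content of Refs.~\cite{meyer2021fisher,liu2019quantum}, I would invoke them for that step and keep the main exposition focused on the short two-step reduction above.
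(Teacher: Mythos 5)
Your proposal is correct, but it is worth noting that the paper does not actually prove this statement: Theorem~\ref{thm:qcr} is imported wholesale from Ref.~\cite{meyer2021fisher}, with the biased-estimator form justified only by pointing to Equation~149 of that reference. Your two-step derivation is the standard argument underlying that cited result, and both steps are sound. The classical half is right: with $f(\gamma)=\mathbb{E}[\widehat{T}]$ and the score $\ell$ having zero mean, the identity $Df(\gamma)=\mathbb{E}[\widehat{T}\,\ell^{T}]=\mathrm{Cov}(\widehat{T},\ell)$ makes the block matrix $\bigl(\begin{smallmatrix}\mathrm{Cov}(\widehat{T}) & Df\\ Df^{T} & I\end{smallmatrix}\bigr)$ a genuine covariance matrix, and the Schur complement gives $\mathrm{Cov}(\widehat{T})\ge Df\,I^{-1}Df^{T}$ whenever $I\succ 0$, which is exactly the biased form needed. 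The quantum half — $I(\gamma)\le\mathcal{F}[|\psi(\gamma)\rangle]$ for every POVM, proved via the symmetric logarithmic derivatives and a Cauchy--Schwarz/Gram argument, plus the check that for a pure state $L_j=2\,\partial_j\rho$ reproduces the expression in Definition~\ref{def:qfi} (this is Theorem~2.5 of Ref.~\cite{liu2019quantum}, which the paper already invokes) — is also standard, and your chaining via inverse-monotonicity on the positive-definite cone and congruence by $Df$ is valid. Your flagged caveats are the right ones: rank-deficiency of $I$ or $\mathcal{F}$ requires a pseudoinverse reading (harmless here, since the paper only ever uses the bound where $\mathcal{F}$ is combined into an invertible sum and then upper-bounded entrywise), and non-attainability of the multiparameter bound is irrelevant because only the lower-bound direction is used. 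So where the paper relies on an external reference, your sketch supplies a self-contained route; delegating only the measurement-monotonicity lemma to Refs.~\cite{meyer2021fisher,liu2019quantum} is a reasonable division of labor.
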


Most references state the quantum Cramer-Rao bound only for the case of an unbiased estimator, but the more general case of the variance of a biased estimator satisfying the given matrix inequality follows directly from Equation 149 of Ref.~\cite{meyer2021fisher}. Going forward, we write an estimator of parameters $\gamma$ as $\widehat{\gamma}$, keeping the dependence on $\gamma$ implicit rather than explicitly writing the estimator as a function of $\gamma$. Furthermore, we always assume an estimator is a differentiable function of the true parameters it is attempting to estimate. This is a strong assumption which will be discussed later. Further, let $C(\gamma)$ be the infidelity loss function for our generative model given in Eq.~\ref{eq:cost}, which we restate for convenience:

\begin{equation}
    C(\gamma) := 1 - \frac{1}{N} \displaystyle\sum_{j = 1}^{N} | \langle \psi(\nu(R_j ; \tilde{\gamma})) | \psi(\nu(R_j; \gamma)) \rangle|^2.
    \label{eq:new_cost}
\end{equation}

We also define a two-parameter generalization of $C$, which will be useful in the proof that follows:

\begin{equation}
    C(\gamma, \sigma) := 1 - \frac{1}{N} \displaystyle\sum_{j = 1}^{N} | \langle \psi(\nu(R_j; \gamma)) | \psi(\nu(R_j; \sigma)) \rangle|^2.
    \label{eq:generalized_cost}
\end{equation}

Intuitively, this function compares the ground state models induced by the neural network at two sets of neural network parameters, over the training set $\{R_i\}$. Clearly, fixing $\sigma = \tilde{\gamma}$ yields the original infidelity cost function in Eq.~\ref{eq:new_cost}.

We can now introduce our main result, which gives a lower bound, in particular situations, on the number of data states needed in order to create an estimator $\widehat{\gamma}$ of the optimal parameters $\tilde{\gamma}$ such that $C(\widehat{\gamma}) < \epsilon$ with probability $1 - \delta$, assuming that $\widehat{\gamma}$ is obtained from measurements of data states $|\psi_0(R_i)\rangle$.

\begin{theorem}
\label{thm:qcr_bnd}
Let $U$ be a circuit ansatz, let $|\psi(\nu(R; \gamma))\rangle = U(\nu(R; \gamma))|\psi_0\rangle$ where $\gamma\in \mathbb{R}^{M_P}$ is the trainable vector of neural network parameters, $R$ is the fixed geometry of the molecule and $\nu(R; \gamma)\in \mathbb{R}^{N_P}$ is the classical neural network function with $|\psi_0(R)\rangle$ being the ground state of the parameterized Hamiltonian $H(R)$. Suppose we are given access to a collection of parameters $\{R_i\}_{i = 1}^{N}$, and a state of the form

\begin{align}
    |\psi_{\text{data}}\rangle \otimes |0\rangle^{\otimes a} &= |\psi_0(R_1)\rangle \otimes \cdots \otimes |\psi_0(R_1)\rangle \nonumber \\ &\otimes \cdots \otimes |\psi_0(R_N)\rangle \otimes \cdots |\psi_0(R_N)\rangle \otimes |0\rangle^{\otimes a}
\end{align}
where the state $|\psi_0(R_i)\rangle$ is repeated $m_i$ times in the tensor product. Suppose there exist parameters $\tilde{\gamma}$ such that $|\psi(\nu(R_i; \gamma))\rangle = |\psi_0(R_i)\rangle$ for each $R_i$. Suppose we have $\hat{\gamma}$: an estimator of $\tilde{\gamma}$ supported in a convex set $A$ around $\gamma^{*} = \mathbb{E}[\hat{\gamma}]$ on which $C(\gamma)$ is strongly convex and where $\gamma^{*}$ is a local minimum of $C$, such that it is constructed by performing a measurement on $|\psi_{\text{data}}\rangle \otimes |0\rangle^{\otimes a}$. Moreover, assume that $C(\hat{\gamma}) < \epsilon$ with probability at least $1 - \delta$ over the convex set. Then, the total number of states in the tensor product defining $|\psi_{\text{data}}\rangle$, $M = \sum_{i} m_i$, satisfies the following lower bound:

\begin{align}
    &M \geq \frac{|| [\partial_{12} C](\tilde{\gamma}, \gamma^{*})||_2^2}{8 N_P^{7/2} N \overline{\mathrm{Cov}(\hat{H}, \hat{H})} (\epsilon + \delta - 2 \epsilon \delta)} \nonumber \\ &\left( \overline{|| D \nu(R_j ; \tilde{\gamma})||_2^2} \max_{\phi \in A} || C''(\phi)^{-1} ||_2 ||C''(\gamma^{*})||_2^2 \right)^{-1},
\end{align}
where $\overline{|| D \nu(R_j ; \tilde{\gamma})||_2^2}$ is the average of the derivative magnitudes $|| D \nu(R_j ; \tilde{\gamma})||_2^2$ over $j$, $C''$ is the Hessian matrix of the infidelity cost function in Eq.~\eqref{eq:new_cost}, and $\partial_{12} C$ is the mixed second-derivative matrix of the function defined in Eq.~\eqref{eq:generalized_cost}, with element $(i, j)$ given by $\frac{\partial^2 C(\gamma, \sigma)}{\partial \gamma_i \partial \sigma_j}$.

\end{theorem}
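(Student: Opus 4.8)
The plan is to recast the whole learning protocol as quantum parameter estimation for the hidden vector $\tilde{\gamma}$ and then pit the quantum Cramer-Rao bound (Theorem~\ref{thm:qcr}) against the accuracy guarantee $C(\hat{\gamma})<\epsilon$. The first ingredient is the Fisher information of the data. Because $|\psi_{\text{data}}\rangle$ is a tensor product of $m_i$ copies of $|\psi_0(R_i)\rangle=|\psi(\nu(R_i;\tilde{\gamma}))\rangle$ and the ancilla $|0\rangle^{\otimes a}$ is $\tilde{\gamma}$-independent, additivity of the quantum Fisher information over tensor factors gives $\mathcal{F}[|\psi_{\text{data}}\rangle]=\sum_{i=1}^{N}m_i\,\mathcal{F}[|\psi(\nu(R_i;\tilde{\gamma}))\rangle]$. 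Into this I would substitute the single-copy estimate already derived above, $\mathcal{F}[|\psi(\nu(R_i;\tilde{\gamma}))\rangle]_{ab}\le 4N_P^{5/2}\,\overline{\mathrm{Cov}(\hat{H},\hat{H})}\,\|\partial_a\nu(R_i;\tilde{\gamma})\|_2\|\partial_b\nu(R_i;\tilde{\gamma})\|_2$, where $\partial_a\nu := \partial\nu/\partial\gamma_a$. Promoting this entrywise bound to a positive-semidefinite one by Cauchy--Schwarz and then bounding the Frobenius norm by the spectral norm, $\|D\nu\|_F^2\le N_P\|D\nu\|_2^2$, yields $\mathcal{F}[|\psi(\nu(R_i;\tilde{\gamma}))\rangle]\preceq 4N_P^{7/2}\,\overline{\mathrm{Cov}(\hat{H},\hat{H})}\,\|D\nu(R_i;\tilde{\gamma})\|_2^2\,I$, and summing against $m_i$ while using $\sum_i m_i\|D\nu(R_i;\tilde{\gamma})\|_2^2\le M\sum_i\|D\nu(R_i;\tilde{\gamma})\|_2^2=MN\,\overline{\|D\nu(R_j;\tilde{\gamma})\|_2^2}$ gives
\begin{equation}
\mathcal{F}[|\psi_{\text{data}}\rangle]^{-1}\ \succeq\ \frac{1}{4N_P^{7/2}\,\overline{\mathrm{Cov}(\hat{H},\hat{H})}\;M\,N\;\overline{\|D\nu(R_j;\tilde{\gamma})\|_2^2}}\; I.
\end{equation}

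The second ingredient is the Jacobian $Df$ appearing in Theorem~\ref{thm:qcr}, where $f(\tilde{\gamma})=\mathbb{E}[\hat{\gamma}]=\gamma^{*}$. Since $\gamma^{*}$ is, by strong convexity on $A$, the locally unique minimizer of $\gamma\mapsto C(\gamma,\tilde{\gamma})$, it is characterized by $[\partial_1 C](\gamma^{*},\tilde{\gamma})=0$; differentiating this identity in $\tilde{\gamma}$ and applying the implicit function theorem (valid because $C''(\gamma^{*})=[\partial_{11}C](\gamma^{*},\tilde{\gamma})$ is invertible on $A$) gives $Df=-[C''(\gamma^{*})]^{-1}[\partial_{12}C](\gamma^{*},\tilde{\gamma})$. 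By symmetry of $C(\gamma,\sigma)$ in its two arguments one has $[\partial_{12}C](\gamma^{*},\tilde{\gamma})=[\partial_{12}C](\tilde{\gamma},\gamma^{*})^{T}$, so $\|Df\|_2\ge\|[\partial_{12}C](\tilde{\gamma},\gamma^{*})\|_2/\|C''(\gamma^{*})\|_2$. Now I would invoke the matrix bound $\mathrm{Cov}(\hat{\gamma})\succeq Df\,\mathcal{F}[|\psi_{\text{data}}\rangle]^{-1}Df^{T}$, take the trace, use $\mathrm{Tr}(DfDf^{T})=\|Df\|_F^2\ge\|Df\|_2^2$ together with the Fisher bound above, and note $\mathrm{Tr}\,\mathrm{Cov}(\hat{\gamma})=\mathbb{E}[\|\hat{\gamma}-\gamma^{*}\|_2^2]$, obtaining
\begin{equation}
\mathbb{E}\!\left[\|\hat{\gamma}-\gamma^{*}\|_2^2\right]\ \ge\ \frac{\|[\partial_{12}C](\tilde{\gamma},\gamma^{*})\|_2^2}{4N_P^{7/2}\,\overline{\mathrm{Cov}(\hat{H},\hat{H})}\;M\,N\;\overline{\|D\nu(R_j;\tilde{\gamma})\|_2^2}\;\|C''(\gamma^{*})\|_2^2}.
\end{equation}

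Finally I would bound the left-hand side from above. Strong convexity of $C$ on $A$ with modulus $\mu=(\max_{\phi\in A}\|C''(\phi)^{-1}\|_2)^{-1}$, together with $\nabla C(\gamma^{*})=0$ and $C\ge 0$, gives $\|\hat{\gamma}-\gamma^{*}\|_2^2\le(2/\mu)\,C(\hat{\gamma})$ pointwise on $A$; taking expectations and splitting on the event $\{C(\hat{\gamma})<\epsilon\}$, which has probability at least $1-\delta$, while using $0\le C\le 1$ on its complement, yields $\mathbb{E}[\|\hat{\gamma}-\gamma^{*}\|_2^2]\le 2\max_{\phi\in A}\|C''(\phi)^{-1}\|_2\,(\epsilon+\delta-2\epsilon\delta)$. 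Chaining this against the previous display and solving for $M$ reproduces the stated bound. The step I expect to be the main obstacle is the Jacobian computation: it rests on the strong (and, as the paper flags, explicitly assumed) hypothesis that the estimator's mean is a differentiable function of the true parameters which coincides with the cost minimizer, and one must be careful that the operative object is the \emph{mixed} Hessian $\partial_{12}C$ of the two-argument cost of Eq.~\eqref{eq:generalized_cost} rather than the ordinary Hessian $C''$; the rest is additivity of the QFI and routine norm estimates.
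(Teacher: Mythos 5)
Your proposal is correct and follows essentially the same route as the paper's proof: additivity of the QFI plus the single-copy covariance bound, the quantum Cramer-Rao inequality, the implicit-function identity $C''(\gamma^{*})\,Df = -[\partial_{12}C](\gamma^{*},\tilde{\gamma})$ to convert $\|Df\|_2$ into $\|[\partial_{12}C]\|_2/\|C''(\gamma^{*})\|_2$, and strong convexity together with the probability guarantee to upper-bound $\mathbb{E}\big[\|\hat{\gamma}-\gamma^{*}\|_2^2\big]$; the only cosmetic difference is that you reduce the Fisher matrix to a multiple of the identity before invoking Cramer-Rao, whereas the paper bounds $\big\|\sum_j m_j\mathcal{F}[|\psi_0(R_j)\rangle]\big\|_2$ afterward via its trace. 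Two minor points: the ``entrywise to PSD'' promotion should be justified as $\mathcal{F}\preceq \mathrm{Tr}(\mathcal{F})\,I$ with the diagonal-entry bound rather than Cauchy--Schwarz alone, and your expectation split actually yields $\epsilon+\delta-\epsilon\delta$ (the constant the paper's own proof derives), even though you, like the theorem statement, quote $\epsilon+\delta-2\epsilon\delta$.
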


\begin{proof}
Following immediately from the definition in Eq.~\eqref{eq:new_cost}, $0 \leq C(\gamma) \leq 1$ for any $\gamma$. Thus, by our initial assumptions,

\begin{equation}
\label{eq:prob_bound}
     \mathbb{E}[C(\hat{\gamma})] \leq \epsilon (1 - \delta) + \delta.
\end{equation}

Note that $\partial_i C(\gamma^{*}) = 0$ for all $i$ since $\gamma^{*}$ is a local minimum of $C$. Thus, by the mean-value statement of Taylor's remainder theorem, we have for $\gamma \in A$, a convex set containing $\gamma^{*}$,

\begin{align}
    C(\gamma) &= C(\gamma^{*}) + \frac{1}{2} \displaystyle\sum_{i, j} [\partial_{ij} C](\zeta(\gamma))(\gamma_i - \gamma^{*}_i)(\gamma_j - \gamma^{*}_j) \nonumber \\ & = C(\gamma^{*}) + \frac{1}{2} \langle \Delta(\gamma) | C''(\zeta(\gamma)) | \Delta(\gamma) \rangle,
\end{align}
where $\zeta(\gamma)$ is a point on the straight line extending from $\gamma$ to $\gamma^{*}$, $C''(\phi)$ is the matrix such that $C''(\phi)_{ij} = [\partial_{ij} C](\phi)$, and $\Delta(\gamma)$ is the vector such that $\Delta(\gamma)_i = \gamma_i - \gamma^{*}_i$. We know that $C''(\phi)$ is invertible as we have assumed that $C$ is strongly convex on $A$. This implies that $C''(\phi)$ is positive definite on $A$. We then note that

\begin{align}
   &\frac{1}{2} \langle \Delta(\gamma) | C''(\zeta(\gamma)) | \Delta(\gamma) \rangle \geq \frac{1}{2} \sigma_{\text{min}}(C''(\zeta(\gamma))) \langle \Delta(\gamma) | \Delta(\gamma) \rangle \nonumber \\ &= \frac{\langle \Delta(\gamma) | \Delta(\gamma) \rangle}{2 || C''(\zeta(\gamma))^{-1} ||_2} \geq \frac{\langle \Delta(\gamma) | \Delta(\gamma) \rangle}{2 \max_{\phi \in A} || C''(\phi)^{-1} ||_2}.
\end{align}

It follows immediately that

\begin{align}
    \label{eq:exp_lb}
    \mathbb{E}[C(\hat{\gamma})] &\geq C(\gamma^{*}) + \frac{1}{2 \max_{\phi} || C''(\phi)^{-1} ||_2} \mathbb{E} \left[ \langle \Delta(\hat{\gamma}) | \Delta(\hat{\gamma}) \rangle \right] \nonumber \\ &= C(\gamma^{*}) + \frac{1}{2 \max_{\phi} || C''(\phi)^{-1} ||_2} \text{Tr} \left[ \text{Cov}( \hat{\gamma} ) \right].
\end{align}
Now, we can extend our lower bound on $\mathbb{E}[C(\hat{\gamma})]$ using the quantum Cramer-Rao bound (Theorem~\ref{thm:qcr}). Since the parameters we wish to estimate, $\tilde{\gamma}$, are embedded in $|\psi_{\text{data}}\rangle \otimes |0\rangle^{\otimes a}$, we can treat $\hat{\gamma}$ as an estimator of $\tilde{\gamma}$ induced from measurement of $|\psi_{\text{data}}\rangle \otimes |0\rangle^{\otimes a}$. Thus, $\hat{\gamma}$ satisfies the quantum Cramer-Rao bound. We immediately get

\begin{align}
    &\text{Tr} \left[ \text{Cov} (\hat{\gamma}) \right] \geq \text{Tr} \left( Df(\tilde{\gamma}) \mathcal{F} \left[ |\psi_{\text{data}}\rangle \otimes |0\rangle^{\otimes a} \right]^{-1} Df(\tilde{\gamma})^{T} \right) \nonumber \\ &= \text{Tr} \Big( Df(\tilde{\gamma}) \mathcal{F} \Big[ |\psi_0(R_1)\rangle \otimes \cdots |\psi_0(R_1)\rangle \nonumber\\ &\otimes \cdots \otimes |\psi_0((R_N)\rangle \otimes \cdots \otimes |\psi_0(R_N) \rangle \otimes |0\rangle^{\otimes a} \rangle \Big]^{-1} Df(\tilde{\gamma})^{T} \Big) \nonumber \\ &= \text{Tr} \left( Df(\tilde{\gamma}) \left( \displaystyle\sum_{j = 1}^{N} m_j \mathcal{F}[ |\psi_0(R_j)\rangle] \right)^{-1} Df(\tilde{\gamma})^{T} \right) \nonumber \\ &= \displaystyle\sum_{i = 1}^{N_P} [D f(\tilde{\gamma})]_i \left( \displaystyle\sum_{j = 1}^{N} m_j \mathcal{F}[ |\psi_0(R_j)\rangle] \right)^{-1} [D f(\tilde{\gamma})]_i^{T},
    \label{eq:f1}
\end{align}
where $\mathcal{F}[|\psi_0(R_j)\rangle] := \mathcal{F}[|\psi(\nu(R_j ; \tilde{\gamma}))\rangle]$. We use the additive property of the quantum Fisher information matrix over tensor products, and the fact that the quantum Fisher information of the fixed state $|0\rangle^{\otimes a}$ vanishes \cite{meyer2021fisher}.  Note that the function $f(\tilde{\gamma})$ is precisely the expectation value of the estimator $\mathbb{E}[\hat{\gamma}]$ as a function of $\tilde{\gamma}$. $[Df(\tilde{\gamma})]_i$ denotes the $i$-th row of the matrix $Df(\tilde{\gamma})$, $||\cdot ||_{F}$ is the Frobenius norm and $\|\cdot\|_2$ is the induced $2$-norm. Now, since each matrix $\mathcal{F}[|\psi_0(R_j)\rangle]$ is positive semi-definite \cite{meyer2021fisher}, and each $m_j$ is positive, it follows that the inverse $\left( \sum_{j = 1}^{N} m_j \mathcal{F}[|\psi_0(R_j)\rangle] \right)^{-1}$ is also positive semi-definite. It follows that

\begin{align}
    &\displaystyle\sum_{i = 1}^{N_P} [D f(\tilde{\gamma})]_i \left( \displaystyle\sum_{j = 1}^{N} m_j \mathcal{F}[ |\psi_0(R_j)\rangle] \right)^{-1} [D f(\tilde{\gamma})]_i^{T} \nonumber \\ &\geq \sigma_{\text{min}} \left( \left( \displaystyle\sum_{j = 1}^{N} m_j \mathcal{F}[ |\psi_0(R_j)\rangle] \right)^{-1} \right) || Df(\tilde{\gamma}) ||_F^{2} \nonumber \\ &= \left| \left| \displaystyle\sum_{j = 1}^{N} m_j \mathcal{F}[ |\psi_0(R_j)\rangle] \right| \right|_2^{-1} || Df(\tilde{\gamma}) ||_F^{2} \nonumber \\ &\geq \left| \left| \displaystyle\sum_{j = 1}^{N} m_j \mathcal{F}[ |\psi_0(R_j)\rangle] \right| \right|_2^{-1} || Df(\tilde{\gamma}) ||_2^{2}.
    \label{eq:f2}
\end{align}

Once again using the fact that the linear combination $\sum_{j = 1}^{N} m_j \mathcal{F}[|\psi_0(R_j)\rangle]$ is positive semi-definite, it follows that for some normalized $|x\rangle$,

\begin{align}
    &\Big\langle x \Big| \left[ \displaystyle\sum_{j = 1}^{N} m_j \mathcal{F}[|\psi_0(R_j)\rangle] \right] \Big| x \Big\rangle = \displaystyle\sum_{j = 1}^{N} m_j \langle x | \mathcal{F}[ |\psi_0(R_j)\rangle] |x\rangle \nonumber \\ &\leq \left( \displaystyle\sum_{j = 1}^{N} m_j \right) \left( \displaystyle\sum_{j = 1}^{N} \langle x | \mathcal{F}[|\psi_0(R_j)\rangle] | x \rangle \right) \nonumber \\    &= M \Big\langle x \Big| \left[ \displaystyle\sum_{j = 1}^{N} \mathcal{F}[|\psi_0(R_j)\rangle] \right] \Big| x \Big\rangle,
    \label{eq:f3}
\end{align}

which implies that

\begin{equation}
    \left| \left| \displaystyle\sum_{j = 1}^{N} m_j \mathcal{F}[ |\psi_0(R_j)\rangle] \right| \right|_2 \leq M \left| \left| \displaystyle\sum_{j = 1}^{N} \mathcal{F}[ |\psi_0(R_j)\rangle] \right| \right|_2.
    \label{eq:f4}
\end{equation}

It follows immediately from Eq.~\ref{eq:f1}, Eq.~\ref{eq:f2}, and Eq.~\ref{eq:f4} that

\begin{align}
    \label{eq:tr_lb}
    \text{Tr} \left[ \text{Cov}(\hat{\gamma}) \right] &\geq || Df(\tilde{\gamma})||_2^2 \left| \left| \displaystyle\sum_{j = 1}^{N} m_j \mathcal{F}[ |\psi_0(R_j) \rangle] \right| \right|_2^{-1} \nonumber \\ &\geq \frac{|| Df(\tilde{\gamma}) ||_2^2}{M} \left| \left| \displaystyle\sum_{j = 1}^{N} \mathcal{F}[ |\psi_0(R_j) \rangle] \right| \right|_2^{-1}.
\end{align}

We can now make use of the upper bound previously derived on the entries of the quantum Fisher information matrix for each of the data states in Eq.~\eqref{eq:qcrlb}. It follows again from the fact that each $\mathcal{F}[|\psi_0(R_j)\rangle]$ is positive semi-definite that

\begin{align}
    \Big| \Big| \displaystyle\sum_{j = 1}^{N} \mathcal{F}[|\psi_0(R_j)\rangle] \Big| \Big|_2 &\leq \displaystyle\sum_{j = 1}^{N} \text{Tr} \left( \mathcal{F} \left[ |\psi_0(R_j)\rangle \right] \right) \nonumber \\ &\leq 4 N_P^{5/2} \displaystyle\sum_{j = 1}^{N} \displaystyle\sum_{i = 1}^{M_P} || \partial_i \nu(R_j ; \tilde{\gamma}) ||_2^{2} \overline{\text{Cov}(\hat{H}, \hat{H})} \nonumber \\ &= 4 N_P^{5/2} \displaystyle\sum_{j = 1}^{N} || D \nu(R_j ; \tilde{\gamma}) ||_F^{2} \overline{\text{Cov}(\hat{H}, \hat{H})} \nonumber \\ & \leq 4 N_P^{7/2} N \overline{||D\nu(R_j ; \tilde{\gamma})||^2_2} \overline{\text{Cov}(\hat{H}, \hat{H})},
\end{align}
where $\overline{||D\nu(R_j ; \tilde{\gamma})||^2_2}$ is the average over $R_j$ of the squared $2$-norms of the neural network derivatives. Thus, we get, from Eq.~\ref{eq:tr_lb},

\begin{equation}
    \text{Tr} \left[ \text{Cov}(\hat{\gamma}) \right] \geq \frac{|| Df(\tilde{\gamma}) ||_2^2 }{4 M N_P^{7/2} N \overline{\text{Cov}(\hat{H}, \hat{H})} \overline{||D\nu(R_j ; \tilde{\gamma})||^2_2}},
\end{equation}
and from Eq.~\ref{eq:exp_lb},

\begin{align}
&\mathbb{E}[C(\hat{\gamma})] - C(\gamma^{*}) \geq \nonumber \\ & \frac{|| Df(\tilde{\gamma}) ||_2^2}{8 M N_P^{7/2} N \overline{\text{Cov}(\hat{H}, \hat{H})} \overline{||D\nu(R_j ; \tilde{\gamma})||_2^2} \max_{\phi \in A} || C''(\phi)^{-1} ||_2}.
\label{eq:next_bound}
\end{align}
Using Eq.~\eqref{eq:prob_bound}, it follows that

\begin{equation}
\mathbb{E}[C(\hat{\gamma})] - C(\gamma^{*}) \leq \mathbb{E}[C(\hat{\gamma})] \leq \epsilon(1 - \delta) + \delta = \epsilon + \delta - \epsilon \delta,
\end{equation}
which allows us to use Eq.~\eqref{eq:next_bound} to re-arrange for a lower bound on $M$,

\begin{align}
    M &\geq \frac{||D f(\tilde{\gamma}) ||_2^2}{8 N_P^{7/2} N \overline{\text{Cov}(\hat{H}, \hat{H})} (\epsilon + \delta - \epsilon \delta)} \nonumber \\ & \left( \overline{\|D \nu(R_j ; \tilde{\gamma})||_2^2} \max_{\phi \in A} || C''(\phi)^{-1} ||_2 \right)^{-1}.
\end{align}

To conclude the proof, we turn our attention to the function $f$. Recall the $2N_P$-variable generalization of the infidelity cost function, $C(\gamma, \sigma)$, defined in Eq.~\ref{eq:generalized_cost}. By definition, we have for any $i$ with $1 \leq i \leq M_P$, $[\partial_i C](f(\gamma), \gamma) = 0$ in a neighbourhood around $\tilde{\gamma}$, as the parameters $f(\gamma)$ are, by definition, a local optimum of the infidelity cost function of Eq.~\eqref{eq:new_cost}. This implies that

\begin{align}
    \frac{\partial}{\partial \gamma_j} [\partial_i C](f(\gamma), \gamma) |_{\gamma = \tilde{\gamma}} &= \displaystyle\sum_{k = 1}^{M_P} [\partial_k \partial_i C](f(\tilde{\gamma}), \tilde{\gamma}) \frac{\partial f(\tilde{\gamma})_k}{\partial \gamma_j} \nonumber \\ &+ [\partial_{M_P + j} \partial_i C](f(\tilde{\gamma}), \tilde{\gamma}) = 0.
\end{align}
This in turn implies that

\begin{equation}
    C''(f(\tilde{\gamma})) Df(\tilde{\gamma}) = -[\partial_{12} C](f(\tilde{\gamma}), \tilde{\gamma}),
\end{equation}
where $[\partial_{12} C](\gamma, \sigma)$ is the matrix of mixed second-order partial derivatives with respect to the parameters $\gamma$ and $\sigma$. In other words, $[\partial_{12} C](\gamma, \sigma)_{ij} = \frac{\partial^2 C(\gamma, \sigma)}{\partial \gamma_i \partial \sigma_j}$. This result allows us to lower bound $|| Df(\tilde{\gamma}) ||_2$ as

\begin{align}
    || Df(\tilde{\gamma})||_2 &= || C''(f(\tilde{\gamma}))^{-1} [\partial_{12} C](f(\tilde{\gamma}), \tilde{\gamma})||_2 \nonumber \\ & \geq \frac{|| [\partial_{12} C](f(\tilde{\gamma}), \tilde{\gamma}) ||_2}{|| C''(f(\tilde{\gamma})) ||_2} = \frac{|| [\partial_{12} C](\gamma^{*}, \tilde{\gamma})||_2}{||C''(\gamma^{*})||_2},
\end{align}
and we can rewrite our lower bound on $M$ as

\begin{align}
   M &\geq \frac{|| [\partial_{12} C](\gamma^{*}, \tilde{\gamma})||_2^2}{8 N_P^{7/2} N \overline{\text{Cov}(\hat{H}, \hat{H})} (\epsilon + \delta - \epsilon \delta)} \nonumber \\ & \left( \overline{|| D \nu(R_j ; \tilde{\gamma})||_2^2} \max_{\phi \in A} || C''(\phi)^{-1} ||_2 ||C''(\gamma^{*})||_2^2 \right)^{-1}.
\end{align}
This completes the proof.

\end{proof}

\begin{corollary}
In the case that $\hat{\gamma}$ is an unbiased estimator of $\tilde{\gamma}$, the lower bound of Theorem~\ref{thm:qcr_bnd} reduces to

\begin{align}
  M & \geq \frac{1}{8 N_P^{7/2} N \overline{\mathrm{Cov}(\hat{H}, \hat{H})} (\epsilon + \delta - \epsilon \delta)} \nonumber \\ &\left( \overline{|| D \nu(R_j ; \tilde{\gamma})||_2^2} \max_{\phi \in A} || C''(\phi)^{-1} ||_2 \right)^{-1}. 
\end{align}
\label{cor:qcr}
\end{corollary}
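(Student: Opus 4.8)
The plan is to recognize the corollary as the special case of Theorem~\ref{thm:qcr_bnd} in which the Jacobian factor $\|Df(\tilde{\gamma})\|_2$ can be evaluated exactly rather than merely lower bounded. In the notation of that theorem, $f(\gamma) = \mathbb{E}[\hat{\gamma}]$ is the mean of the estimator as a function of the true hidden parameters. The statement that $\hat{\gamma}$ is \emph{unbiased} means $f(\gamma) = \gamma$ on a neighbourhood of $\tilde{\gamma}$; since estimators are assumed differentiable in the true parameters throughout this section, this forces the Jacobian to be the identity, $Df(\tilde{\gamma}) = I_{M_P}$, and in particular $\|Df(\tilde{\gamma})\|_2 = 1$. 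As a side effect, $\gamma^{*} = \mathbb{E}[\hat{\gamma}] = \tilde{\gamma}$, so the strong-convexity set $A$ is now a neighbourhood of $\tilde{\gamma}$; this does not affect the argument.

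First I would reuse the proof of Theorem~\ref{thm:qcr_bnd} verbatim up through Eq.~\eqref{eq:next_bound}, which already isolates the factor $\|Df(\tilde{\gamma})\|_2^2$ in the numerator of the lower bound on $\mathbb{E}[C(\hat{\gamma})] - C(\gamma^{*})$. In the full theorem this factor is then bounded below via the implicit-function identity $C''(f(\tilde{\gamma}))\,Df(\tilde{\gamma}) = -[\partial_{12}C](f(\tilde{\gamma}),\tilde{\gamma})$, which is exactly what introduces the $\|[\partial_{12}C](\gamma^{*},\tilde{\gamma})\|_2^2$ numerator and the extra $\|C''(\gamma^{*})\|_2^2$ denominator. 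Here, instead, I simply substitute $\|Df(\tilde{\gamma})\|_2^2 = 1$. Combining this with the probability estimate $\mathbb{E}[C(\hat{\gamma})] - C(\gamma^{*}) \le \mathbb{E}[C(\hat{\gamma})] \le \epsilon + \delta - \epsilon\delta$ from Eq.~\eqref{eq:prob_bound} and rearranging for $M$ reproduces exactly the claimed inequality.

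There is no substantive obstacle: the corollary is a direct specialization of the general bound, so the ``proof'' is essentially a substitution. The one point worth flagging is that ``unbiased'' must be read as $f$ coinciding with the identity map on an open neighbourhood of $\tilde{\gamma}$ (rather than only matching at the single point $\tilde{\gamma}$), so that its Jacobian is honestly the identity matrix; this reading is consistent with, and in fact weaker than, the blanket differentiability-in-$\tilde{\gamma}$ assumption already imposed on estimators preceding Theorem~\ref{thm:qcr_bnd}.
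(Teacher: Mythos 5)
Your proposal is correct and follows essentially the same route as the paper: the paper's proof likewise just notes that unbiasedness gives $\gamma^{*} = \mathbb{E}[\hat{\gamma}] = \tilde{\gamma}$, hence $f(\tilde{\gamma}) = \tilde{\gamma}$ and $Df(\tilde{\gamma}) = \mathbbm{1}$, so the bound at Eq.~\eqref{eq:next_bound} specializes directly with $\|Df(\tilde{\gamma})\|_2^2 = 1$. Your remark that unbiasedness should be read as $f$ equalling the identity on a neighbourhood of $\tilde{\gamma}$ (so that its Jacobian is genuinely the identity) is a fair clarification of the same argument, not a departure from it.
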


\begin{proof}
This follows immediately from the fact that in the unbiased case, $\gamma^{*} = \mathbb{E}[\hat{\gamma}] = \tilde{\gamma}$, so $f(\tilde{\gamma}) = \tilde{\gamma}$ and $Df(\tilde{\gamma}) = \mathbbm{1}$.
\end{proof}

We conclude this section by commenting on the scaling seen in Corollary~\ref{cor:qcr}, where we further assume that the estimator being considered is unbiased. We see that the number of pieces of evidence that we need in order to learn the parameters over a region where the fidelity loss is at most $\epsilon$ scales as $\Omega(1/\epsilon)$.  This conforms to expectations from Heisenberg-limited metrology \cite{meyer2021fisher} and therefore also to our intuitions of how the problem should scale in the best-case scenario.  Note that if the Hamiltonian terms are all Pauli operators, then the mean entrywise 1-1 norm of the covariance matrix $\text{Cov}(\hat{H}, \hat{H})$ is bounded by a constant and so we do not expect this term to dictate the scaling in most applications.  Similarly, we see that as the average norm of the Jacobian matrix of the neural network tends to zero, the lower bound diverges.  This is perhaps the most important feature of the bound as it shows that in cases where the optimization landscape is flat, a very large number of quantum states will need to be provided in order to train the generative model. Finally, we note that the lower bound depends on the smallest singular value of the Hessian $C''(\phi)$ over the region $A$. In the case that this value approaches $0$ (the curvature of the cost function vanishes), so too does the lower bound, which implies that further analysis is needed in such cases beyond the quantum Cramer-Rao bound analysis above.

It is crucial to note that the above result does not necessarily apply, precisely, to the generative model being considered in this paper. We can think of the generative model as performing parameter estimation by measuring a large data state, but Theorem~\ref{thm:qcr_bnd} also makes certain assumptions about the neural network, circuit ansatz, and optimization procedure which are in general difficult to guarantee. This includes the assumptions that the estimator yields an exact local minimum in expectation, the estimator yielded from the generative model acts as a differentiable function of the parameters $\tilde{\gamma}$, in some neighbourhood (it may rather be the case that the neural network exhibits unstable behaviour with respect to small perturbations of $\tilde{\gamma}$), and that there exist parameters $\tilde{\gamma}$ which prepare the provided quantum data in the first place. Perhaps the strongest assumption that we make is that the fidelity function is strongly convex.  In general this will not be true, but if we expand in a small enough region about the optima then the actual objective function can always be locally approximated by a strongly convex one in the vicinity of a local optima. For this reason, the assumption of strong convexity is more reasonable than it may sound, but it does preclude the general applicability. Rather, it is only applicable in the case where the estimator being considered is supported in such a region.


\section{Numerics}
\label{sec:5}

Equipped with a better theoretical understanding of the computational complexity of the proposed algorithm, as well as its fundamental limitations, we turn our attention to empirical results about its performance. More specifically, to showcase the effectiveness of the generative procedure, we utilize the model to compute approximate ground states of a collection of molecular Hamiltonians, parameterized by the nuclear coordinates of atoms in the molecule. The molecules that we consider in our simulations are H$_2$, H$_3^{+}$, H$_4$, BeH$_2$, and H$_2$O. Our goal is to create models of the molecular ground states across an entire potential energy surface from few data points. In the following subsections, we discuss some of the key ideas involved in implementation of the simulations, followed by discussion of each individual molecule, with accompanying numerical results. In the numerical examples that follow, we utilize the library PySCF~\cite{sun2018pyscf} and the minimal STO-3G basis set to perform all Hartree-Fock calculations. Among the different variants of Hartree-Fock, we use restricted Hartree-Fock (RHF) for most numerical examples, except for the case of H$_4$, where this fails to be a sufficiently high-quality initial guess (this is discussed in Sec.~\ref{sec:h2}). See Ref.\cite{arrazola2021differentiable} and Ref.\cite{szabo2012modern} for more information on the details of Hartree-Fock, basis sets, and solvers. The code used to run each of the numerical simulations is available at \url{https://github.com/XanaduAI/generative_qml_qchem}.

\subsection{State Initialization and Basis}

As was briefly discussed in Section~\ref{sec:2}, we are able to initialize the generative model in an approximation to the ground state of a molecular Hamiltonian $H(R)$, at each $R$. In particular, we set $|\psi_{\text{init}}(R)\rangle := |\psi_{\text{HF}}\rangle$ to be the qubit representation of the Hartree-Fock state at molecular coordinates $R$. In second quantization, using the Jordan-Wigner transform, each qubit is mapped to one specific spin-orbital basis function. Therefore, each bitstring basis state represents one possible electronic configuration.

For a fixed set of nuclear coordinates $R$, the Hartree-Fock method describes a family of self-consistent, numerical procedures for approximately solving the electronic Schrodinger equation, which can be run efficiently on classical computers. Hartree-Fock will yield a set of molecular orbitals whose occupation from lowest to highest energy approximates the ground state of the molecule (see Ref.\cite{szabo2012modern} for more information). The qubit representation of the Hartree-Fock state, $|\psi_{\text{HF}}\rangle$, for a given molecule at a particular geometry, is then precisely state $|11 \cdots 100 \cdots 0\rangle$ having occupation of each of the lowest-energy molecular orbitals. It follows that the qubit representation of the Hartree-Fock state, $|\psi_{\text{HF}}\rangle$, remains the same over all $R$. However, since the generated molecular orbitals change, the true molecular wavefunction represented by $|\psi_{\text{HF}}\rangle$ at distinct $R_1$ and $R_2$ can vary as well. By using the fixed $|\psi_{\text{HF}}\rangle$ at each set of coordinates $R$, we have therefore introduced an implicit assumption into our model that the basis of molecular orbitals with respect to which each qubit state corresponds, in second-quantization, varies over the nuclear coordinates $R$. This approach allows us to have good initialization of our model, without any non-trivial state preparation routines that would be required if we decided to fix a basis of molecular orbitals, and prepare the corresponding Hartree-Fock state at each individual set of nuclear coordinates.

\subsection{Circuit Ansatz}

To prepare the circuit ansatz $U$ used in the model, we perform an adaptive gate selection procedure~\cite{grimsley2019adaptive}, in which we choose operations from pools of single and double excitation gates, as defined in Ref.~\cite{arrazola2022universal}). We denote these gates as $S = \{S_1(\theta_1), \ ..., \ S_P(\theta_P)\}$ and $D = \{D_1(\phi_1), \ ..., \ D_Q(\phi_Q)\}$ respectively, based on their gradients. More specifically, for a particular set of atomic coordinates $R_0$, we utilize the following procedure to build a circuit that is able to prepare the ground state of $H(R_0)$:

\begin{enumerate}
    \item Initializing a circuit in the Hartree-Fock state, we create a parameterized circuit applying all admissible (spin-conserving) double excitation gates in $D$ to the state. Let $U_1$ be the resulting circuit, $U_1(\phi) = \prod_{j} D_j(\phi_j)$.
    \item We compute the gradient of each double excitation individually, evaluated at $\phi = 0$:
    
    \begin{equation}
        G_j = \frac{\partial}{\partial \phi_j} \langle \psi_{\text{HF}} | U_1^{\dagger}(\phi) H(R_0) U_1(\phi) |\psi_{\text{HF}} \rangle \Big|_{\phi = 0}.
    \end{equation}
    
    We construct a list $L_{D}$ of all $D_j$ such that their gradient magnitudes satisfy $|G_j| \geq \epsilon$, where we choose $\epsilon = 10^{-5}$. We then define $U_2$ to be the circuit which applies each of these double excitations, $U_2 = \prod_{D_j \in L_D} D_j$.
    \item With $U_2$, we perform VQE with respect to $H(R_0)$, which yields parameters $\phi^{*}$ that minimize the energy expectation,
    \begin{equation}
        E = \langle \psi_{\text{HF}} | U_2^{\dagger}(\phi^{*}) H(R_0) U_2(\phi^{*}) |\psi_{\text{HF}}\rangle.
    \end{equation}
    \item We create a new parameterized circuit $U_3$, this time by applying $U_2$, and then all single excitations in $S$, so $U_3(\phi, \theta) = \left[ \prod_{j} S_j(\theta_j) \right] U_2(\phi)$.
    \item Similar to Step 2, we compute gradients with respect to each single excitation individually,
    \begin{equation}
        G'_j = \frac{\partial}{\partial \theta_j} \langle \psi_{\text{HF}} | U_3^{\dagger}(\phi^{*}, \theta) | H(R_0) | U_3(\phi^{*}, \theta) | \psi_{\text{HF}} \rangle |_{\theta = 0},
    \end{equation}
    where we fix $\phi = \phi^{*}$: the optimal parameters determined in the previous step. We then construct a list $L_S$  of all $S_j$ such that $|G'_j| \geq \epsilon$. We output the combined list $L = (L_D, L_S)$ as the set of operations chosen by the adaptive procedure to prepare the ground state of $H(R_0)$.
\end{enumerate}

We then repeat these steps for a discrete set of $R_0$ over the range of atomic coordinates in which we are interested (this will be some region around $R_{\text{HF}}$). If $\{R_0,..., R_k\}$ is the set of coordinates we choose, then we obtain a sequence of lists $L_0, ..., L_k$ from performing the procedure outlined above for each of them. Taking all unique operations in each $L_j$, and combining them into one list $L_{f}$, we define our circuit ansatz to be

\begin{equation}
    U(\theta) = \displaystyle\prod_{O_j \in L_f} O_j(\theta_j).
\end{equation}

This approach allows us to tailor our ansatz to the particular set of Hamiltonians we are considering, and thus avoid unnecessarily long circuits. Empirically, we find that the set of $\{R_0, ..., R_k\}$ can be chosen very sparsely, and still capture all of the gates with non-negligible gradient with respect to any $H(R)$ in the region considered. Thus, the above procedure is likely not intractable as the dimension of the coordinate space increases.

\subsection{Classical Neural Network and Training}

The classical neural network $\nu$ used in each simulation is a simple feed-forward architecture with varying depth/width (these values are treated as hyperparameters and tuned to individual examples), implemented using the JAX library for differentiable programming \cite{jax2018github}. The sizes of the neural networks used for the simulations were modest, with the largest consisting of $3$ hidden layers of width $20$. The quantum circuit ansatz is simulated using PennyLane \cite{bergholm2018pennylane}, a library for differentiable quantum computing. The cost function $C(\gamma)$ is calculated by taking exact inner products between the states generated during the data collection phase, and the state vector outputted from the PennyLane circuit simulation. Gradients of $C$ are computed using the automatic differentiation functionality built into both JAX and PennyLane. To minimize the cost function, we utilize the Optax library \cite{optax2020github} to call an Adam optimizer implemented in JAX.

Upon termination of the optimization procedure, we compare the model's performance over a larger range of molecular coordinates outside the training data by computing fidelities between the model-produced state and the true ground state. In addition, to showcase that we can extract observable quantities from the trained model, we use our model to reconstruct ground state potential energy surfaces. Finally, we also plot the fidelity between the model-produced state and the Hartree-Fock state to demonstrate that our model is in fact learning, and is deviating far from the initial guess of $|\psi_{\text{HF}}\rangle$ over the range of molecular coordinates.

\subsection{Numerical Examples}

\begin{figure*}[ht!]
\centering
    \includegraphics[width=2.0\columnwidth]{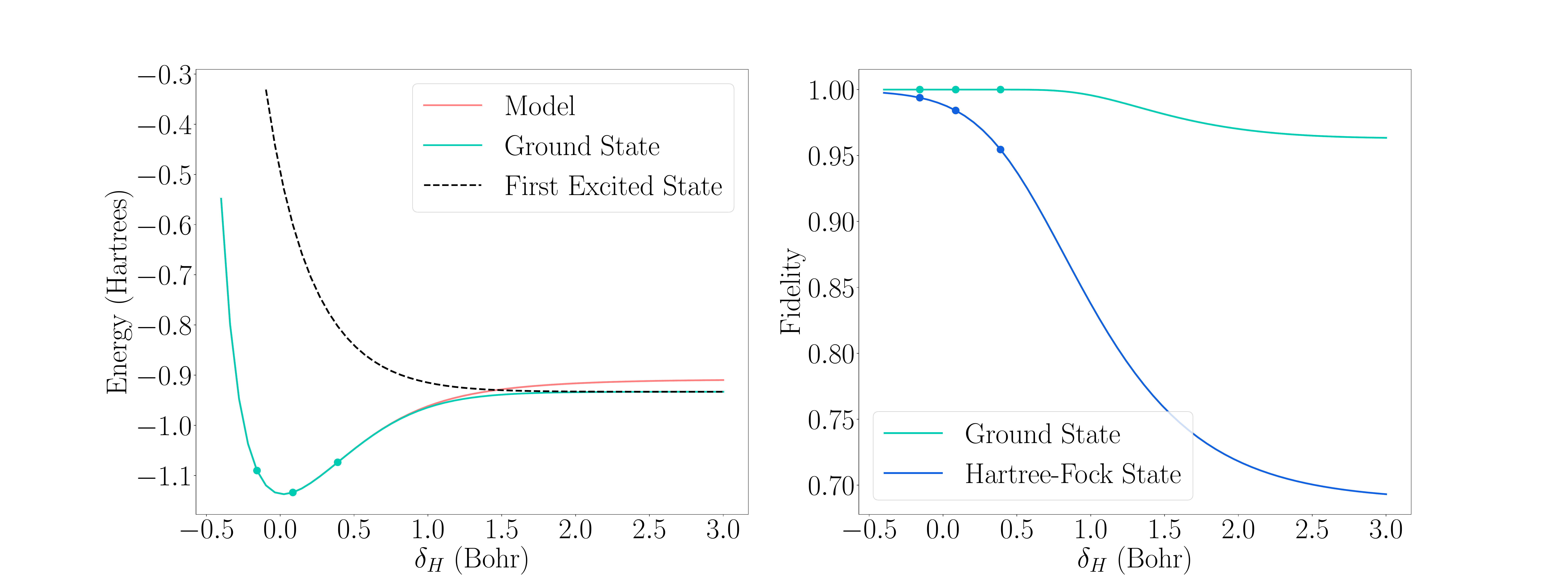}
    \includegraphics[width=2.0\columnwidth]{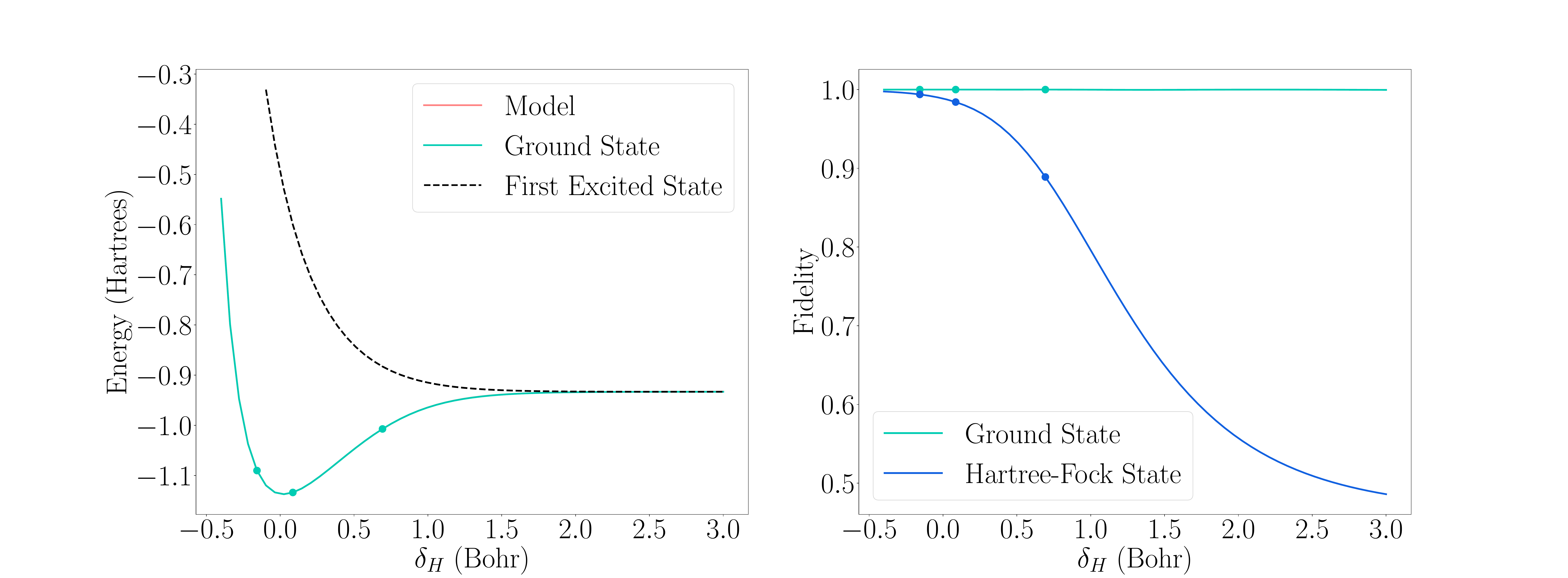}
    \caption{
    The results of the generative model for predicting the potential energy surface of H$_2$ in the STO-3G basis over the bond displacement from the equilibrium, $\delta_H$, for two different sets of training data. A total of 3 data points were used to train the model, labeled by the circular points. The top row shows one distribution of training data close to the equilibrium geometry, and the bottom row shows a distribution of training data more spread out along the bond dissociation axis. In the left column, the ground and first excited state curves, in the green line, and black dashed line respectively, are the exact result from the Hamiltonian. The the resulting curve generated from the model is shown in red. The fidelities of the model-generated ground state with respect to the exact ground state, and the Hartree-Fock state are shown in the right column with green and blue lines respectively. }
    \label{fig:h2_1}
\end{figure*}

Here we present specific numerical examples of our generative learning approach used to predict approximate groundstates for various molecular systems.  Due to computational restrictions, all examples are limited to fewer than $12$ qubits and the molecules included H$_2$, H$_3^+$, H$_4$ , BeH$_2$ and H$_2$O.  We find in all these cases that a very small number of quantum training examples need to be provided to predict the groundstates with near unit fidelity near equilibrium. 
\subsubsection{The H$_2$ and H$_4$ Molecules}
\label{sec:h2}

The first simulation we highlight is that of the H$_2$ molecule, in which we attempt to create a model $\delta \mapsto |\psi_0(R_{\text{HF}} + \delta_H e_H)\rangle$, where $R_{\text{HF}}$ is the equilibrium geometry yielded from minimizing the Hartree-Fock energy over a range of molecular coordinates, $\delta_H$ is a one-dimensional parameter, and $e_H$ is a displacement vector which equally stretches the hydrogen molecules in opposite directions along the length of their bond. Figure~\ref{fig:h2_1} shows two individual runs of the model when supplied with different sets of training data, in the top and bottom rows. The left plots show the potential energy surfaces of the exact solution of the ground and first excited state, and the resulting model-outputted energy, as a function of the bond length of H$_2$. The right plots compare fidelities of the model-produced ground state with respect to the exact ground state, and the approximate Hartree-Fock state respectively. The points on the curve show the distinct geometries at which the model is given ground state data. From Figure~\ref{fig:h2_1} we observe that the model trained on data lying farther from equilibrium demonstrates better performance: an observed phenomenon across many of the other numerical examples below, in which training data spread out over larger areas of the PES are more valuable, since it captures features of the wavefunction that are not as accessible near the equilibrium geometry. Specifically, in this case we see that in the top row of Figure~\ref{fig:h2_1}, with training data centered near the equilibrium geometry, the model is not able to accurately reproduce the energy in the dissociation limit, where the exact ground and first excited state become degenerate in energy. Subsequently, the fidelity drops in this region, indicating the the features of the wavefunction corresponding to the dissociated hydrogen atoms are not properly accounted for unless we sample farther from the equilibrium geometry. Naturally, this can be remedied by sampling closer to the dissociation limit as in the bottom row of Figure~\ref{fig:h2_1}. Specifically, when the rightmost training point is moved past the so called Coulson-Fisher point \cite{coulson1949xxxiv}, where the mean field description of the ground state of the H$_2$ molecule breaks spin symmetry, performance improves. In particular, at this point, the mean field triplet excited state wavefunction mixes with the mean field singlet ground state: a known source of difficulty when simulating bond breaking with approximate methods \cite{hait2019beyond}. Additionally, we see that from the model fidelities, there is in fact non-trivial learning in the model that improves upon the description of the mean-field Hartree-Fock ground state wavefunction (the configuration in which the model is initialized). We also note that few data points, only three in this example, are sufficient to obtain a high fidelity model across the entire potential energy curve when the training data is spread out enough, as in the bottom row of Figure~\ref{fig:h2_1}.
\begin{figure*}[t!]
    \centering
    \includegraphics[width=2.0\columnwidth]{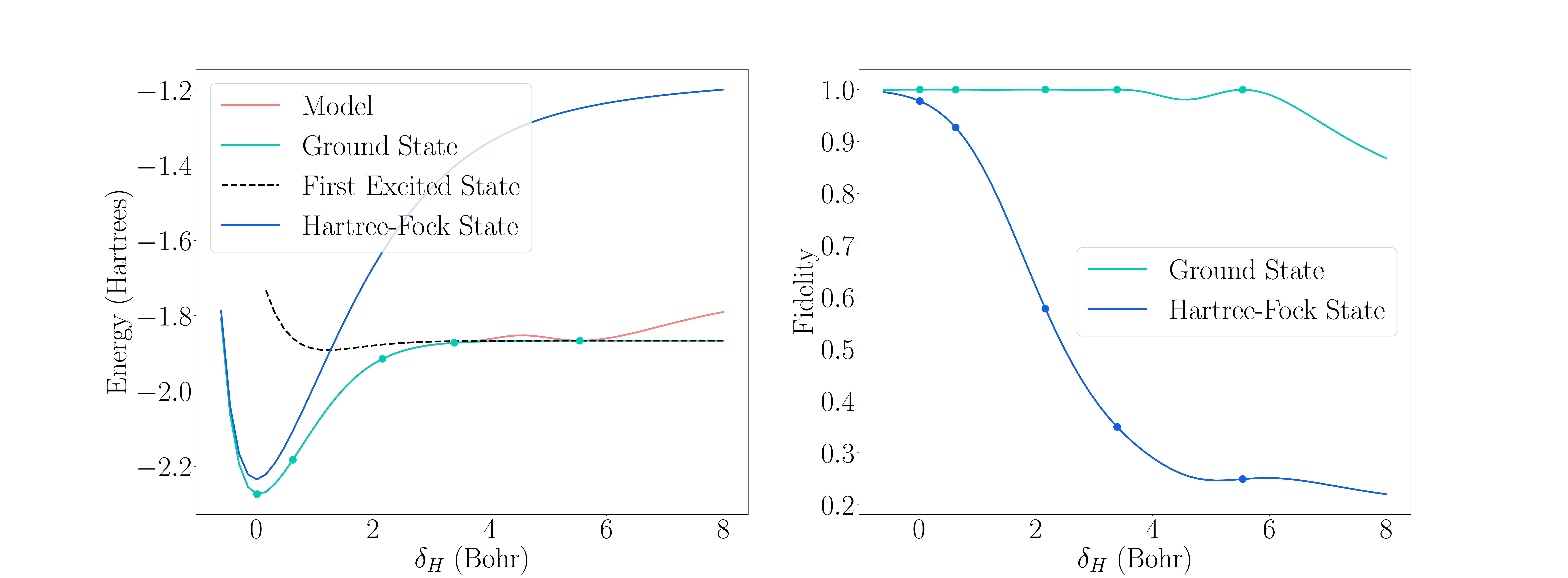}
    \caption{
    The results of the generative model for predicting the potential energy surface of rectangular H$_4$ in the STO-3G basis over the bond displacement from the equilibrium, $\delta_H$. A total of 4 data points were used to train the model, labeled by the circular points using restricted Hartree-Fock orbitals to compute the initial state of the model at each set of coordinates. On the left figure, the ground and first excited state curves, in the green line, and black dashed line respectively, are the exact result from the Hamiltonian, The Hartree-Fock state is the dark blue curve, and the resulting curve from the model is shown in red. The fidelities of the model-generated ground state with respect to the exact ground state, and the Hartree-Fock state are shown in the right column with green and blue lines respectively. The model, even when given data close to dissociation, still fails to replicate the correct behaviour of the target state when RHF orbitals are used.}
    \label{fig:h4_bad}
\end{figure*}

Next, we investigate the symmetrically stretched rectangular H$_4$ model. H$_4$ is often times used as a model for strongly correlated systems in chemistry, since over most of the bonding coordinate, including the equilibrium geometry, the Hartree-Fock mean-field wavefunction has low fidelity with respect to the exact wavefunction. Physically, this means that there are multiple electronic configurations non-trivially contributing to the overall exact ground state wavefunction, making this a harder challenge for our model to generate the ground states from a Hartree-Fock reference. For this system we use a model of the form $\delta \mapsto |\psi_0(R_{\text{HF}} + \delta_H e_H)\rangle$ in which $R_{\text{HF}}$ is the rectangular Hartree-Fock equilibrium configuration of the molecule, $\delta_H$ is again a one-dimensional parameter, and $e_H$ is a displacement vector which stretches the sides of the rectangle formed by the hydrogen atoms equally.

\begin{figure*}[t!]
    \centering
    \includegraphics[width=2.0\columnwidth]{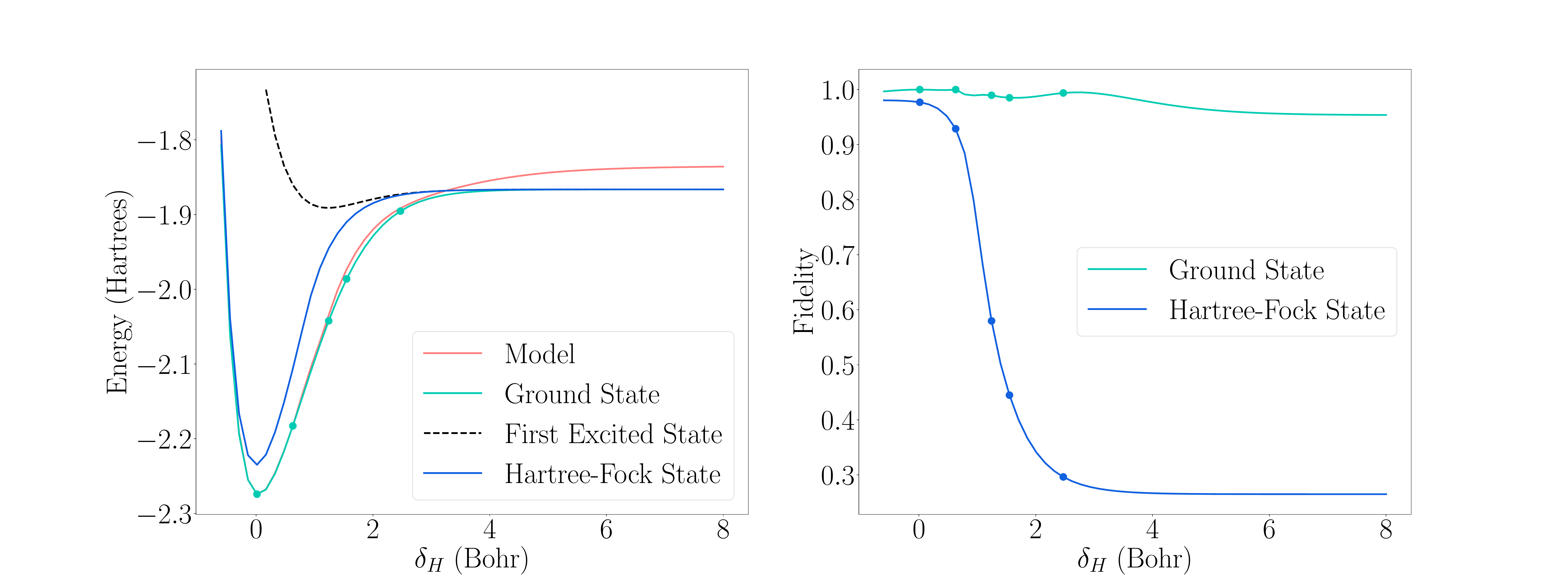}
    \includegraphics[width=2.0\columnwidth]{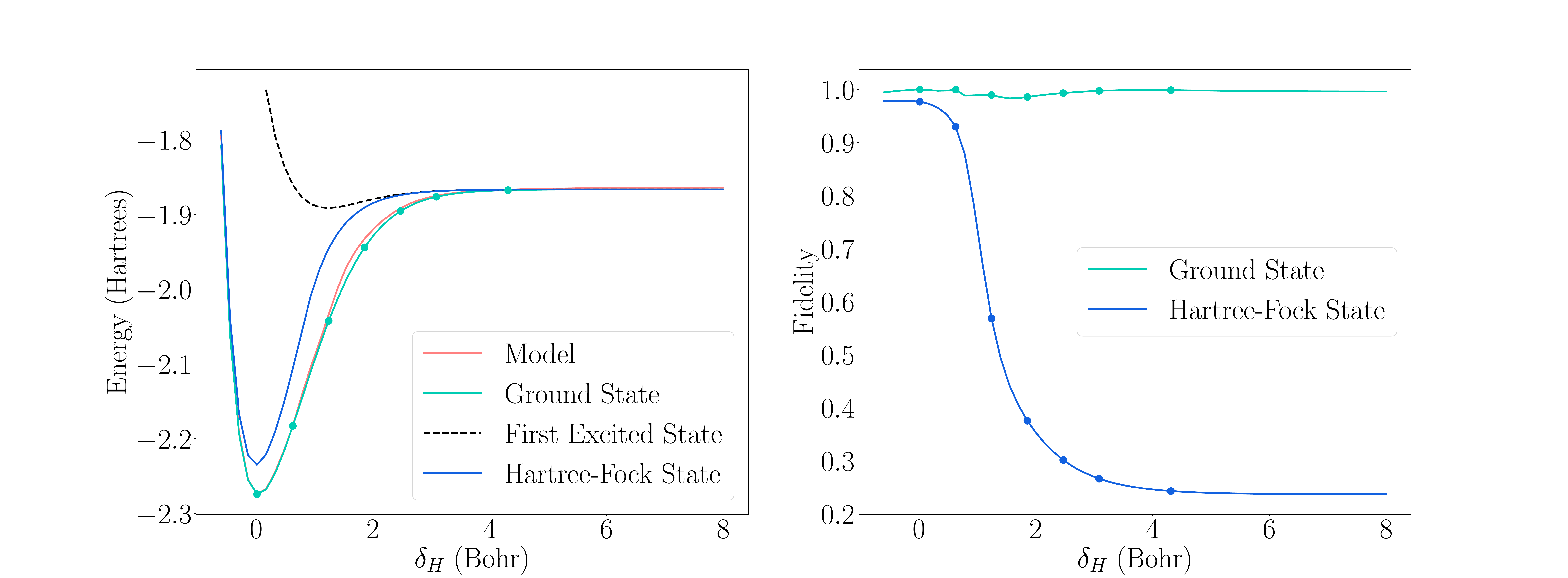}
    \caption{
    The results of the generative model for predicting the potential energy surface of H$_4$ in the STO-3G basis over the bond displacement from the equilibrium, $\delta_H$, for two different sets of training data. A total of 5 data points were used to train the model, labeled by the circular points. The top row shows one distribution of training data closer to the equilibrium geometry, and the bottom row shows a distribution of training data more spread out along the bond dissociation axis. In the left column, the ground and first excited state curves, in the green line, and black dashed line respectively, are the exact result from the Hamiltonian. The dark blue line is the Hartree-Fock ground state, and the resulting curve from the model is shown in red. The fidelities of the model-generated ground state with respect to the exact ground state, and the Hartree-Fock state are shown in the right column with green and blue lines respectively. Note that performance improves drastically over the cases when RHF is used, as in Figure~\ref{fig:h4_bad}}
    \label{fig:h4_better}
\end{figure*}

The first run, presented in Figure~\ref{fig:h4_bad}, shows the model result trained on data using restricted Hartree-Fock orbitals. While certain parts of the model reproduce the exact results well, there are un-physical deviations in the energy, that do not level out in dissociation limit. Even when given data approaching the dissociation limit, the model still fails to replicate the true behaviour of the system, as is indicated by both the increase of energy upon dissociation, and the low fidelity between the model and the true ground state.
While this error could be the result of a bad generative model, the origin of the issue arises from the fact that using restricted Hartree-Fock orbitals fails to be a sufficiently robust initial guess for the ground state of the system. This can partly be attributed to the fact that the RHF ground state does not reproduce the correct dissociation limit in H$_4$ as shown in this dark blue curve in Figure~\ref{fig:h4_bad}. By utilizing \textit{unrestricted} Hartree-Fock (UHF) orbitals, where electrons of different spins are not assumed to be paired in the same spatial orbital, the correct molecular energy in the limit of bond dissociation can be reproduced.~\cite{szabo2012modern}. By using UHF orbitals, we find that the model produces a more reasonable PES, and the generated ground states are of similar quality to H$_2$. The data for the UHF orbital based model is presented in Figure~\ref{fig:h4_better} for two different training sets of 5 points each, in the top row and bottom row respectively where the bottom row has more training data near the dissociation limit, reproducing a high quality PES compared to the exact result.

We do observe, however, in each of these examples, small fluctuations in the fidelity of the model-produced state with the true ground state between $\delta_{H} = 0.5 \ \text{Bohr}$ and $1 \ \text{Bohr}$. We hypothesize that this is an artifact of Hartree-Fock, in which the self-consistent procedure yields discontinuities when performing calculations at different values of $R$. Evidence for this phenomenon is provided in Figure~\ref{fig:hf_bad}, where we observe a bifurcation in entries of the molecular orbital coefficient matrix from unrestricted Hartree-Fock (the matrix that determines the molecular orbitals) in the region between $0.5$ and $1$ Bohr. Addressing this issue seems to be a non-trivial task, and is likely a promising area of future study.

\begin{figure*}[t!]
    \centering
    \includegraphics[width=2.0\columnwidth]{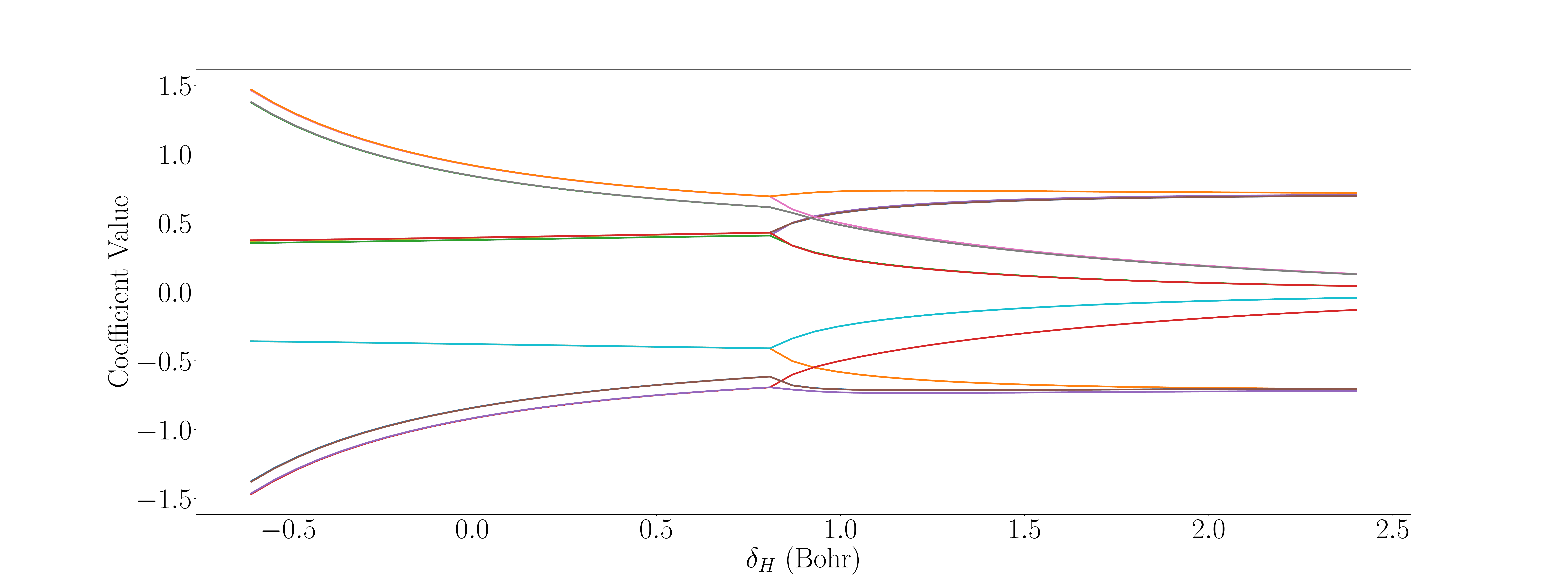}
    \caption{A plot showing the molecular orbital coefficient matrix entries from the unrestricted Hartree-Fock calculation for the H$_4$ molecule using the STO-3G basis set, with respect to the rectangular bond displacement $\delta_H$. As the plot shows, there is a clear bifurcation in the matrix elements between $0.5$ Bohr and $1.0$ Bohr, which we expect contributes to the small dip in the fidelity of the model-produced state near this point.}
    \label{fig:hf_bad}
\end{figure*}

\subsubsection{The H$_3^{+}$ Molecule}

To showcase the ability of the proposed model to generate ground states from multi-dimensional parameters specifying molecular geometries, we consider the H$_3^{+}$ molecule in a planar triangular formation. We fix one of the hydrogen atoms in place, and vary the angle between the bonds of the remaining two hydrogen atoms with the fixed one, as well as the bond lengths. We also constrain the bond lengths from the fixed hydrogen atom to the other two hydrogen atoms to be the same length. Specifically, this causes the geometry of the atoms to match that of an isosceles triangle. It then follows that the model we construct is a two-dimensional model of the parameters $(\theta_H, \delta_H)$, where $\theta_H$ is the bond angle, and $\delta_H$ is the bond length from the fixed atom.

Similar to H$_2$ and H$_4$ single paramter PES, our model is able to learn the true ground state with high fidelity across a wide range of geometries for the 2-dimensional PES of H$_3^{+}$ with limited data. When more data is provided, and scattered farther among the region of coordinates considered, the model performs better, as expected. In Figures~\ref{fig:h3_plus_1} and \ref{fig:h3_plus_2}, we showcase contour plots comparing ground-state energies and fidelities over the two-dimensional parameter space, for both the cases of fewer and more data points (6 and 10 ground states in the training set respectively). 

\begin{figure*}
    \centering
    \includegraphics[width=2.0\columnwidth]{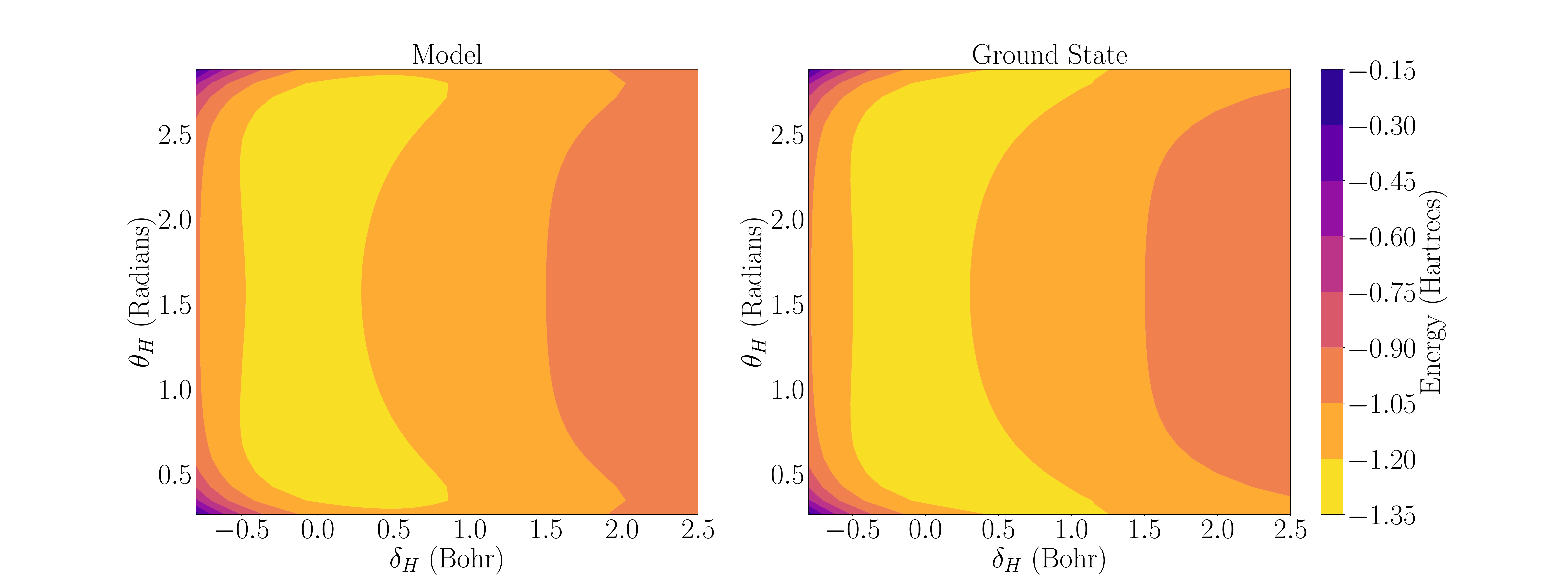}
    \includegraphics[width=1.4\columnwidth]{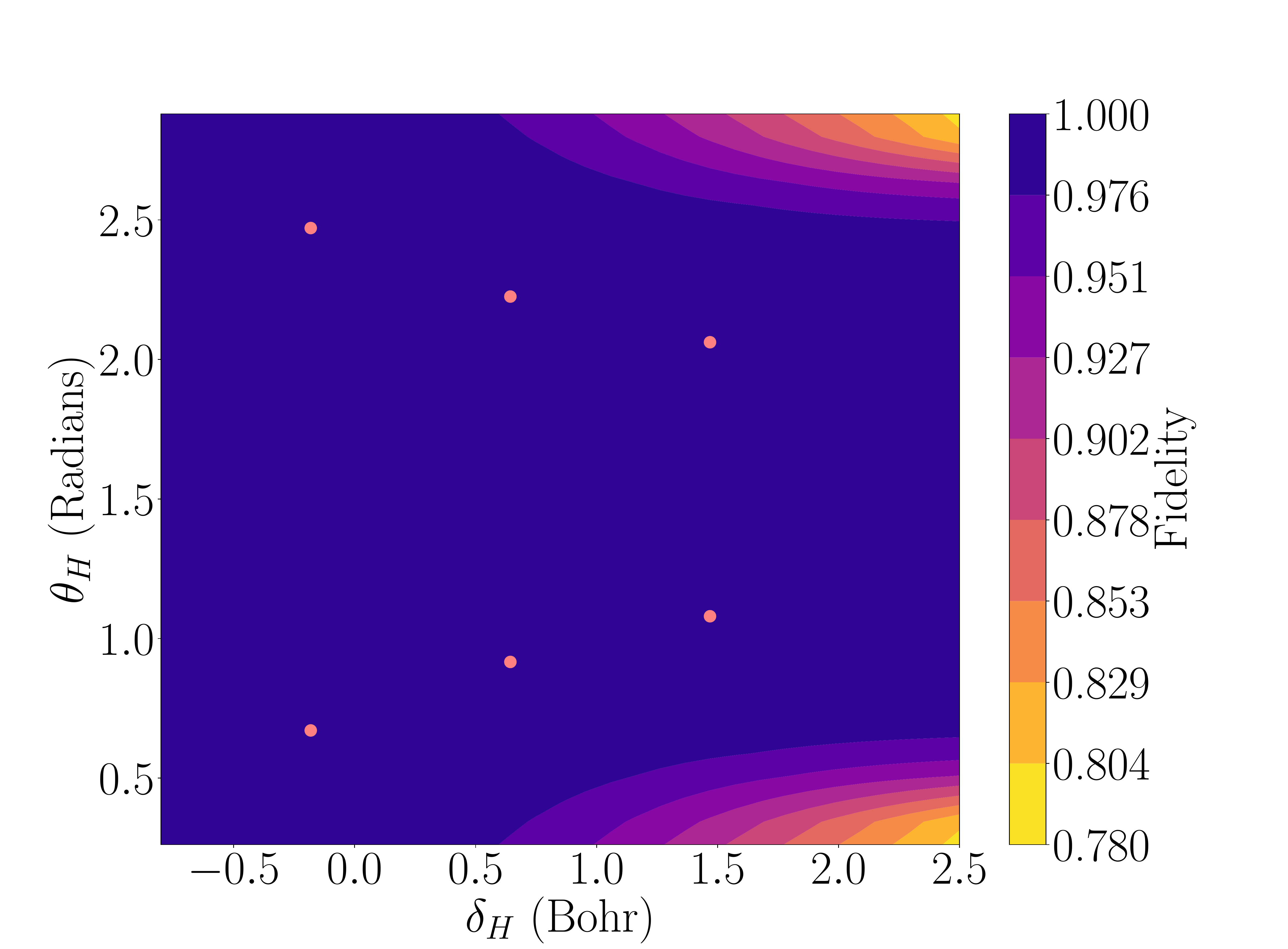}
    \caption{ 
    The results of the generative model for predicting the potential energy surface of triangular H$_3^+$ in the STO-3G basis as a function of the separation between the central hydrogen and the two remaining hydrogens, $\delta_H$, and the bond angle between them, $\theta_H$. The top plots compare the two-dimensional potential energy surfaces of the model produced state (left) and the target state (right), over the parameter space. The bottom plot shows the fidelity of the model state with the target state,  where the 6 training data points are labeled with the orange circles.}
    \label{fig:h3_plus_1}
\end{figure*}

\begin{figure*}
    \centering
    \includegraphics[width=2.0\columnwidth]{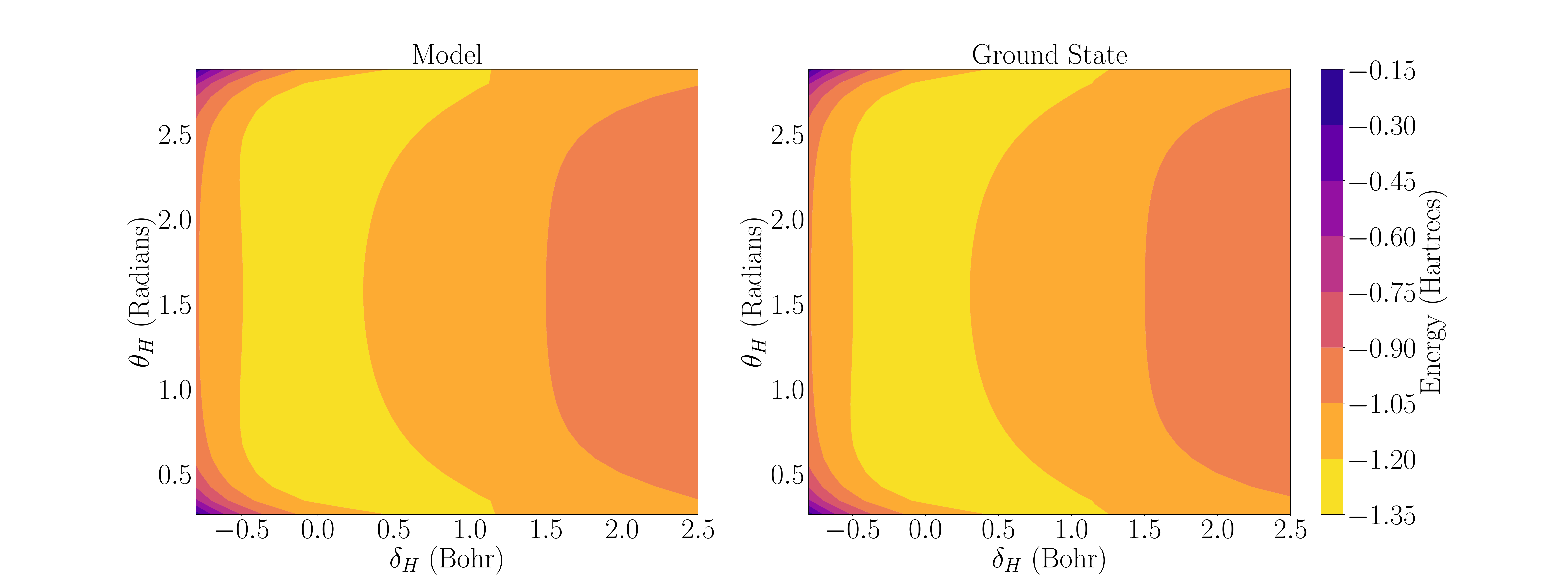}
    \includegraphics[width=1.4\columnwidth]{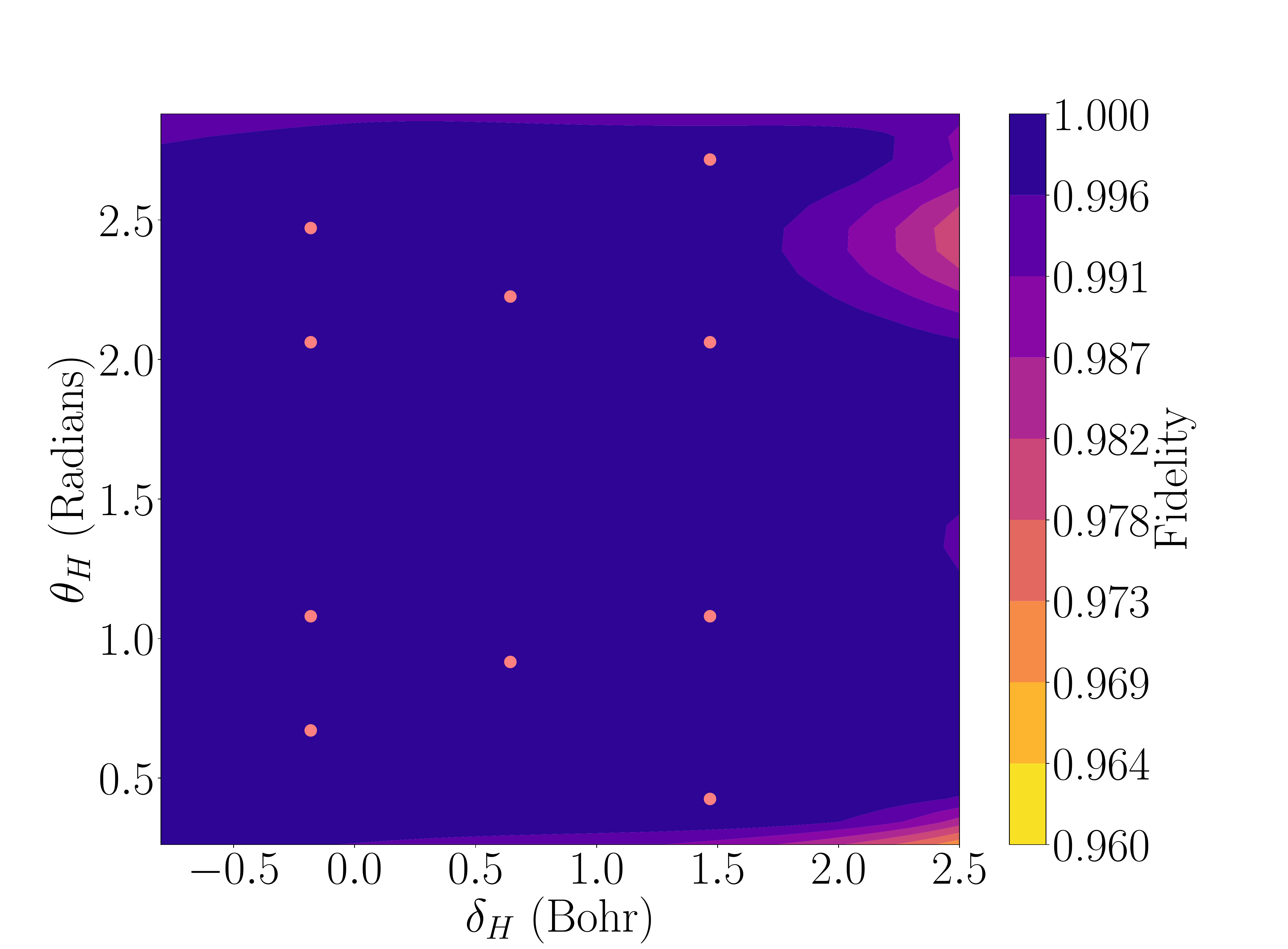}
    \caption{ 
    The results of the generative model for predicting the potential energy surface of triangular H$_3^+$ in the STO-3G basis as a function of the separation between the central hydrogen and the two remaining hydrogens, $\delta_H$, and the bond angle between them, $\theta_H$. The top plots compare the two-dimensional potential energy surfaces of the model produced state (left) and the target state (right), over the parameter space. The bottom plot shows the fidelity of the model state with the target state, where the 10 training data points are labeled with the orange circles. With more data, the quality of the model improves compared to the instance in Figure~\ref{fig:h3_plus_1}}
    \label{fig:h3_plus_2}
\end{figure*}

\subsubsection{The BeH$_2$ and H$_2$O Molecules}

\begin{figure*}
    \centering
    \includegraphics[width=2.0\columnwidth]{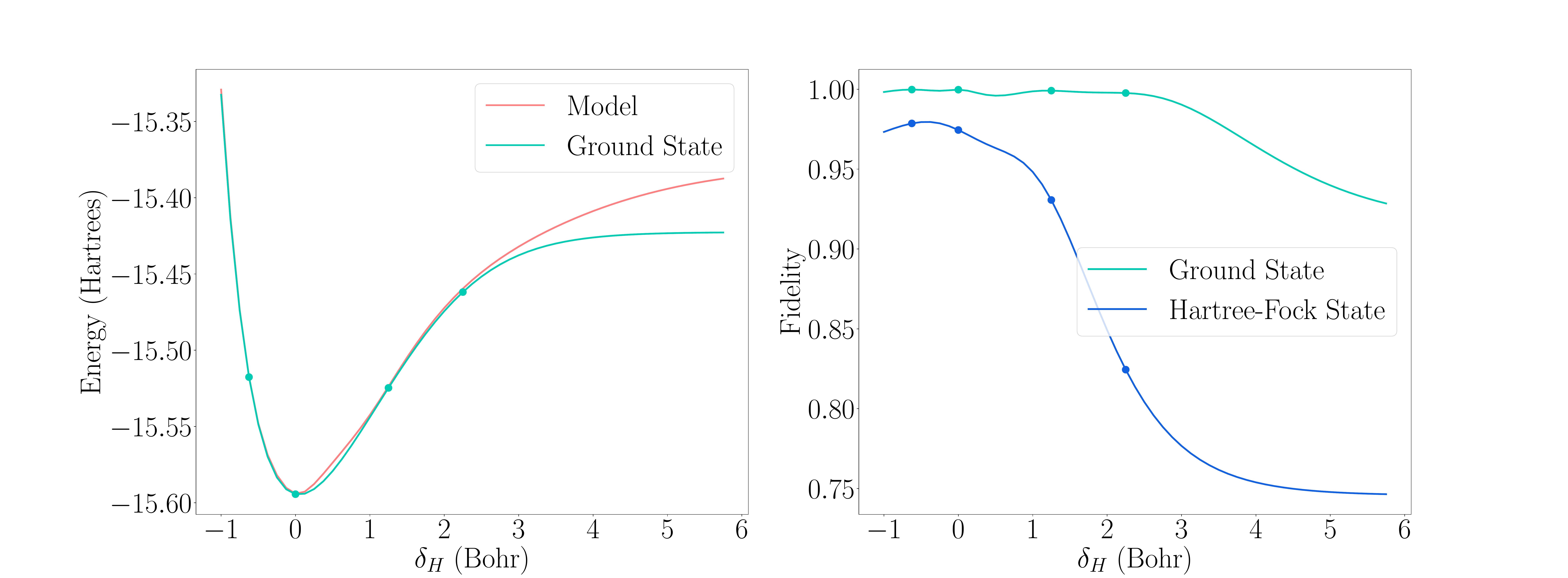}
    \includegraphics[width=2.0\columnwidth]{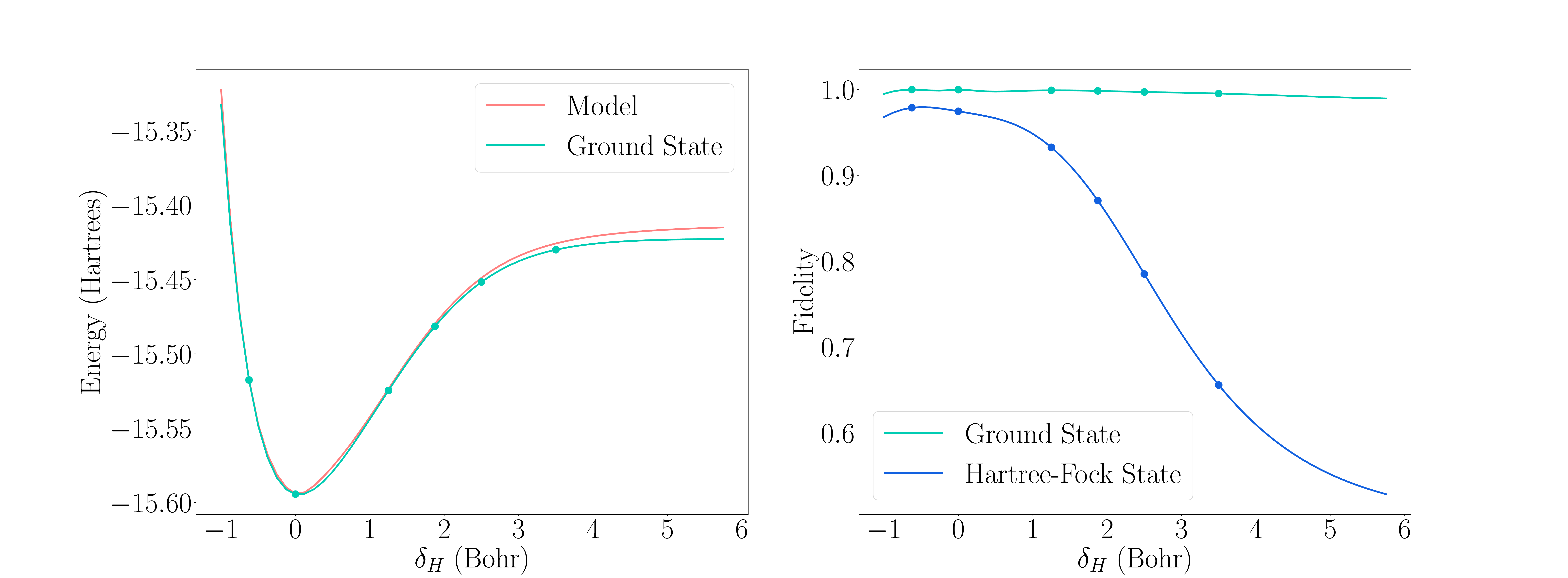}
    \caption{
    The results of the generative model for predicting the potential energy surface of BeH$_2$ in the STO-3G basis, having one of its hydrogen atoms stretched away from the beryllium atom, along its bond over the bond displacement from the equilibrium, $\delta_H$, for two different sets of training data. A total of 4 (top row), and 6 (bottom row) data points were used to train the model, labeled by the circular points. The top row shows one distribution of training data closer to the equilibrium geometry, and the bottom row shows a distribution of training data more spread out along the bond dissociation axis. In the left column, the exact ground state curve is shown in the green line and the resulting curve from the model is shown in red. The fidelities of the model-generated ground state with respect to the exact ground state, and the Hartree-Fock state are shown in the right column with green and blue lines respectively.}
    \label{fig:beh2}
\end{figure*}

\begin{figure*}
    \centering
    \includegraphics[width=2.0\columnwidth]{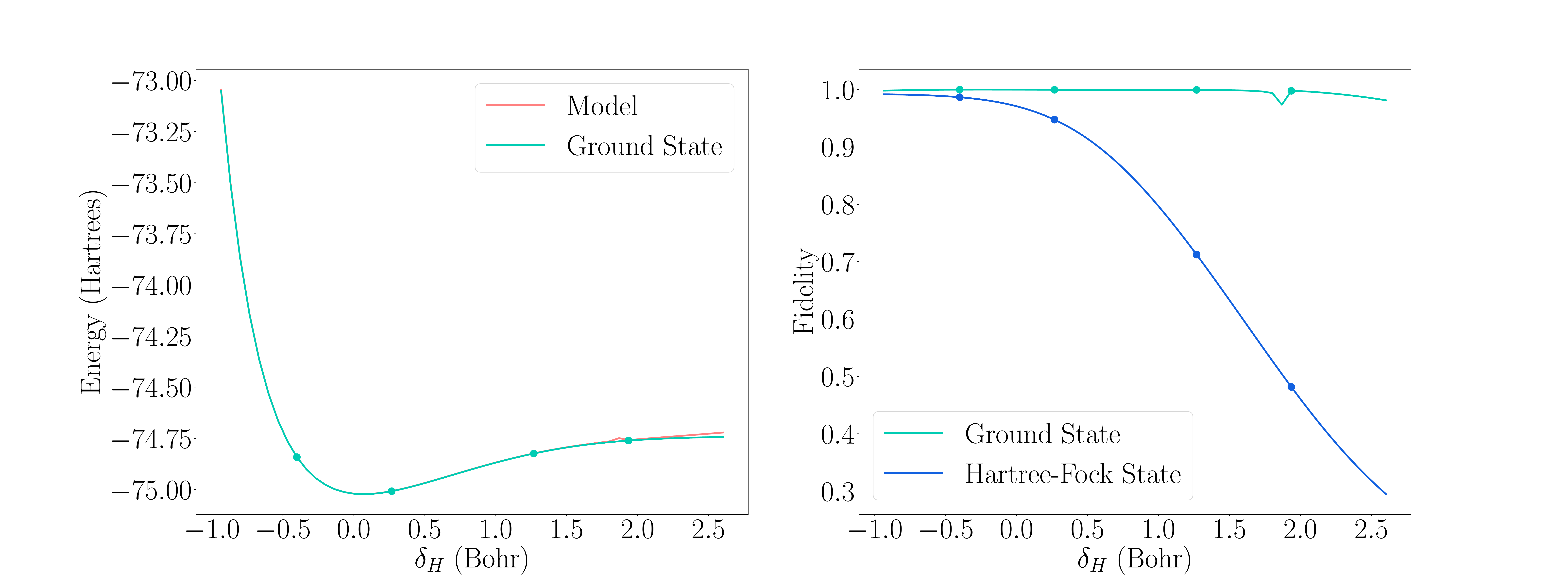}
    \caption{
    The results of the generative model for predicting the potential energy surface of H$_2$O in the STO-3G basis, having both of its hydrogen atoms symmetrically stretched away from the oxygen atom along their respective bonds, with respect to the bond displacement from equilibrium, $\delta_H$, for one set of training data with 4 data points, labeled by the circular points. In the left figure, the exact ground state curve is shown in the green line and the resulting curve from the model is shown in red. The fidelities of the model-generated ground state with respect to the exact ground state, and the Hartree-Fock state are shown in the right figure, with green and blue lines respectively.}
    \label{fig:h2o}
\end{figure*}

To conclude the numerical demonstrations presented in this paper, we highlight examples of larger molecules with more complicated electronic structures and molecular orbital bases, and show that the generative model is still able to learn accurate representations of the corresponding ground states. In the case of BeH$_2$, we consider the one-dimensional problem of the molecule in its linear configuration, and let $\delta_H$ correspond to a perturbation of only one hydrogen atom from its equilibrium configuration, along the length of its bond with the beryllium atom. For the case of water, we let $\delta_{H}$ denote the symmetric stretching of both hydrogen atoms along their bonds with oxygen. Results of these simulations are given in Figures~\ref{fig:beh2} and \ref{fig:h2o}, and once again show good performance, as long as we use a sufficiently large set of data. In the case of water, we once again see a small, unexpected fluctuation in the ground state energy and fidelity near $\delta_{H} = 1.8 \ \text{Bohr}$. As is the case for H$_4$, this small dip may be an artifact of Hartree-Fock but further numerical investigations will be required to verify this fact.

\section{Conclusion}
We have presented a study into how generative methods can be used in the context of quantum chemistry, specifically for preparing ground states of a parameterized molecular Hamiltonian based on its geometry.  Unlike existing classical approaches, the input data to the considered model is fully quantum and the output is a quantum state. The central idea behind our approach is to combine a classical neural network that processes the nuclear positions and translates them into parameters of a quantum circuit. The loss function is chosen to be the infidelity between the output state generated and the training state at each location in the training set.  Given an incoherent oracle that provides copies of the training data, we show that the number of training samples needed to estimate the gradient within error $\epsilon$ with probability $1-\delta$ scales as $O(N\log(1/\delta)/\epsilon^2)$
and in the case where coherent access to the training vectors is provided the scaling with $\epsilon$ can be quadratically improved.

We further provide arguments that all such algorithms must face obstacles when trying to learn across phase transitions by applying a reduction from Grover's search to the fermionic ground-state prediction problem and also place fundamental limitations on the amount of data required to construct a good estimate of the model's optimal neural network parameters, for an idealized version of the described procedure.

We test these ideas on several benchmark molecules including hydrogen-only molecules, BeH$_2$ and H$_2$O.  We observe that even given a small number of training examples, we can construct a simple quantum neural network that is capable of generating the approximate groundstate of the molecules with fidelity greater than $99\%$ in most of the cases considered. Further, once trained, the parameterized circuits generated have far less cost than applying phase estimation to project onto the state in question.  This shows that using classical and quantum neural networks in tandem provides a promising way to prepare the ground state of molecules over an entire potential energy surface and may potentially be a significant application for quantum machine learning.

There are several further avenues of inquiry raised by this work.  First, there is the question of scalability of the method and question about the existence of barren plateaus in the optimization landscape.  While the numerical experiments provided give evidence that the presented ideas work in practice, at least in some test cases, there are potential issues due to vanishing gradients arising during training.  While this matter is unlikely to be fully addressed until scalable quantum computers are available, subsequent numerical and theoretical studies are needed to shed light on the issue. A secondary question involves whether or not these methods can be practically generalized to generating time dynamics.  It is conceivable that the approaches outlined in this paper could be extended to predict quantum states as a function of time, rather than simply predicting them as a function of nuclear position. This would open up our concept to broader families of problems within quantum simulation. A final question that remains open by this work is the choice of distribution of training examples.  At present, while our lower bound results provide some understanding of how the prediction accuracy must scale with the number and location of the training examples it does not provide a prescription for choosing their positions.  The development of an online training algorithm that can identify the most profitable training vectors to include in a corpus to improve the predictive power of the model would be a substantial extension in this setting since ground states of gapped Hamiltonians can often be prepared in polynomial time by a quantum computer and thus training data can often be provided over a wide parameter range at low cost.  The optimization of training points, along with finding improved heuristics for the quantum neural networks may ultimately lead us to methods that will finally provide practical applications of quantum machine learning.

\section{Acknowledgements}
We thank Jarrod McClean, Alba Cervera-Lierta and Jakob S. Kottmann for their constructive feedback for this manuscript.
J.C. acknowledges partial funding from a MITACS Accelerate Grant. C.O.M. acknowledges funding by Laboratory Directed Research and Development Program and Mathematics for Artificial Reasoning for Scientific Discovery investment at the Pacific Northwest National Laboratory, a multiprogram national laboratory operated by Battelle for the U.S. Department of Energy under Contract DE-AC05- 76RLO1830. T.F.S. is a Quantum Postdoctoral Fellow at the Simons Institute for the Theory of Computing, supported by the U.S. Department of Energy, Office of Science, National Quantum Information Science Research Centers, Quantum Systems Accelerator.
MK was supported by the Sydney Quantum Academy, Sydney, NSW, Australia and by the Defense Advanced Research Projects Agency under Contract No. HR001122C0074. Any opinions, findings and conclusions or recommendations expressed in this material are those of the author(s) and do not necessarily reflect the views of the Defense Advanced Research Projects Agency.  MK and NW conducted part of the research at KITP supported in part by the National Science Foundation under Grant No. NSF PHY-1748958. MK is a member of Google Quantum AI but this manuscript consists exclusively of her work at UTS and it is not a product of Google Research.
NW was also funded by grants from the US Department of Energy, Office of Science, National Quantum Information Science Research Centers, Co-Design Center for Quantum Advantage under contract number DE-SC0012704.

\bibliographystyle{unsrt}
\bibliography{ref}

\appendix

\end{document}